\documentclass[a4paper]{article}
\usepackage{amsmath}
\usepackage{amssymb}
\usepackage{enumerate}
\usepackage{xypic}
\usepackage{amsthm}
\usepackage{graphicx}
\usepackage{url}
\allowdisplaybreaks[4]

\newtheorem{theorem}{Theorem}[section]
\newtheorem{lemma}[theorem]{Lemma}
\newtheorem{proposition}[theorem]{Proposition}
\newtheorem{corollary}[theorem]{Corollary}

\newtheorem{definition}[theorem]{Definition}

\makeatletter
\providecommand{\bigsqcap}{%
  \mathop{%
    \mathpalette\@updown\bigsqcup
  }%
}
\newcommand*{\@updown}[2]{%
  \rotatebox[origin=c]{180}{$\m@th#1#2$}%
}
\makeatother

\newcommand{\jump}[1]{\ensuremath{[\![#1]\!]} }
\newcommand{\floor}[1]{\lfloor #1 \rfloor}

\newcommand{\PD}[4]{{\bf PT}_{#1,#2}(#3,#4)}
\def\undefined{{\sf undefined}}
\def\PP{{\sf PP}}
\def\TP{{\sf TP}}
\def\PT{{\sf PT}}
\def\TT{{\sf TT}}
\def\PTX{{\sf X}}
\def\Star{{\ast}}
\def\Prop{{\tt Prop}}
\def\Type{{\tt Type}}
\def\U{\mathcal{U}}
\def\dom{\mathrm{dom}}
\def\fv{\mathrm{fv}}
\def\app{{\sf app}}
\def\lam{{\sf lam}}
\def\prd{{\sf prod}}
\def\CIC{CIC}
\def\CCw{CC$^\omega$}

\let\ECC\CCw
\def\lis{{\sf list}}
\def\CCwp{CC$^\omega_\lis$}
\def\nil{{\sf nil}}
\def\cons{{\sf cons}}
\def\listind{{\sf list\_ind}}
\def\listrec{{\sf list\_rec}}

\title{An Intuitionistic Set-theoretical Model of Fully Dependent \CCw}
\date{}
\author{Masahiro Sato, Jacques Garrigue}

\begin{document}
\maketitle

\begin{abstract}
  Werner's set-theoretical model is one of the simplest models of \CIC.
  It combines a functional view of predicative universes with a collapsed view of the impredicative sort `$\Prop$'.
  However this model of $\Prop$ is so coarse that the principle of excluded middle $P \lor \neg P$ holds.
  Following our previous work~\cite{own}, we interpret $\Prop$ into a topological space (a special case of Heyting algebra) to make the model more intuitionistic without sacrificing simplicity.
  We improve on that work by providing a full interpretation of dependent product types, using Alexandroff spaces.
  We also extend our approach to inductive types by adding support for lists.
\end{abstract}

\section{Introduction}\label{introduction}
There are various models of type theory.
Werner's set-theoretical model~\cite{SetsInTypes} provides an intuitive model of \CIC.
It combines a functional view of predicative universes with a collapsed view of the impredicative sort $\Prop$.
However this model of $\Prop$ is so coarse that the principle of excluded middle $P \lor \neg P$ holds in it.

In this paper, we construct a set-theoretical model of \CCw{} in which
the principle of excluded middle does not hold, making it closer to
completeness.

CC (the Calculus of Constructions~\cite{CC}) is a pure type system~\cite{pure_type_system} with two sorts, impredicative $\Star$ and predicative $\Box$.
\CCw{} replaces $\Box$ by a cumulative hierarchy of predicative sorts $\Type_i$.
CIC (the Calculus of Inductive Constructions) adds inductive types to \CCw.

In~\cite{SetsInTypes}, Werner provides a remarkably simple model of CIC.
In this model, $\lambda x : A.t$ is interpreted by a set-theoretical function for predicative sorts.
Yet such a simple approach is known to fail for impredicative sorts as it runs afoul of Reynolds' paradox~\cite{model_not_set}.
Therefore, the model for $\Prop$ is two-valued. 
Hence the principle of excluded middle is valid in this model, making
it classical.
Later, Miquel and Werner \cite{not_simple} have shown that proving the
soundness of this model was not as easy as it seems, but this does not
change the simplicity of the model itself.
This simple approach is to be contrasted with Luo's model of
ECC (\CCw{} extended with strong sums $\Sigma x : A.B$)
which uses $\omega$-sets~\cite{Luo}, syntactic models based on
combinatory logic~\cite{CombinatoryModel, CombinatoryModel2},
or more recent models such as categorical models~\cite{CategoryType, SemanticsOfTypeTheory} or models based on homotopy theory~\cite{HoTT}.
This is the drawback of simplicity: while Werner's approach avoids many complications of more precise models, it is at times counter-intuitive, as it completely ignores the intuitionistic aspect of \CCw.

Our goal has been to recover the intuitionistic part of \CCw{} without increasing the complexity of the model.
Barras~\cite{SetsInCoq} provided a first way to do it, by interpreting
\CCw{} in IZF (intuitionistic Zermelo-Fraenkel set theory~\cite{CZF}) rather than ZF.
While this is an interesting result, and the fact it is backed by a
fully formalized proof is very impressive, this requires one to work
in the radically different world of IZF, where it is difficult to express
meta-reasoning about the expressiveness of the language.
For this reason we prefer to stay inside classical set theory ZF, but
we change the interpretation of $\Prop$ to be some topological space.
The open sets of a topological space form a Heyting algebra.
Heyting algebras are used when constructing models of intuitionistic
logic, but usually their elements are not understood as sets.
In our model, proofs shall be interpreted as elements of denotations
of propositions, hence these denotations must be sets, and the order
must be set inclusion.
Using topological spaces solves this problem.

This leaves the question of how to interpret proofs, in a way that
makes the whole interpretation coherent.
In our previous work~\cite{own}, proofs were interpreted by a constant
value, that had to be included in all true propositions.
But this choice was too inflexible to accomodate propositions
parameterized over proofs, which we had to reject. 
While this type of parameterization is rare, it is for
instance required to express proof-irrelevance as a proposition.
In this paper, we are able to lift this restriction by shifting the
interpretation to Alexandroff spaces~\cite{alexandroff}, and making
the interpretation of proofs a function of the context valuation.
Alexandroff spaces act as parameters to the model, their choice
making it more or less precise.
For instance if we use the trivial topological space $(X, \mathcal{O}(X))$ where $X = \{\cdot\}$ is a singleton and $\mathcal{O}(X)=\{\varnothing, X\}$, we obtain a model of classical logic, which is the coarsest one.

Our model is still proof irrelevant, as it does not depend on the
the proof term itself.
As a result, this model does validate some propositions that are not
provable, in particular logical proof irrelevance, hence it does not
reach completeness.
However this is sufficient to exclude many classical propositions such
as the principle of excluded middle $P \lor \neg P$ or the linearity
axiom $(P \rightarrow Q) \lor (Q \rightarrow P)$.

Note that, in this paper, we choose a slightly restricted version of \CCw,
which omits subsumption between universes $\Prop$ and $\Type_i$.
Subsumption between the predicative universes $\Type_i$ poses no problem,
but our model of propositions is too different to allow
subsumption between $\Prop$ and $\Type_i$.
Werner omitted this same subsumption in his exploration of proof
irrelevance~\cite{the_strength_of_type_theories}.

This model can also be extended to inductive types. To demonstrate it,
we define a model of lists, with principles for recursion (in $\Type_0$)
and induction (in $\Prop$), and extend our soundness proof to those.
This is one more step in the direction of a model for the full CIC.

In section \ref{typetheory}, we define the language of the type system \ECC.
In section \ref{model}, we give our set-theoretical interpretation of
\ECC, prove its soundness, and verify that it satisfies proof irrelevance.
In section \ref{application}, we show some applications of this model.
For instance, we show that the excluded middle cannot be derived from the linearity axiom in \ECC.
In section \ref{sec:inductive}, we extend our interpretation to
inductive types.
Finally, we conclude and discuss some future directions.

\section{Typing of \ECC}\label{typetheory}
\subsection{Definition of \ECC}
We define the type system \ECC{} as follows.
The only deviation from the standard presentation is
that our version has no subsumption between $\Prop$ and $\Type_i$.
\begin{definition}[Term]\label{definition_of_term}
  Let $V$ be an infinite set of variables.
  \begin{itemize}
  \item For all $x \in V$, $x$ is a term with free variables $\fv(x)=\{x\}$.
  \item If $t_1$ and $t_2$ are terms, then $t_1 \; t_2$ is a term with
    free variables $\fv(t_1) \cup \fv(t_2)$.
  \item If $t$ and $T$ are terms, and $x \in V$ then,
    $\lambda x : T. t$ is a term with free variables
    $\fv(T) \cup (\fv(t) \setminus \{x\})$. 
  \item If $T_1$ and $T_2$ are terms, and $x \in V$ then
    $\forall x : T_1. T_2$ is a term with free variables $\fv(T_1)
    \cup (\fv(T_2) \setminus \{x\})$.
  \item The symbols $\Prop$ and $\Type_i$ (for $i = 0,1,2,...$) are terms with free
    variables $\varnothing$.
  \end{itemize} 
\end{definition}
$\Prop$ and $\Type_i$ are called {\em sorts}.
$\Prop$ is called the impredicative sort and it represents the type of all propositions.

\begin{definition}[Context]
  \ 
  \begin{itemize}
  \item $[]$ is a context with domain $\dom([]) = \varnothing$.
  \item If $\Gamma$ is a context, and $T$ is a term and $x \in
    V\setminus \dom(\Gamma)$, then $\Gamma ; (x : T)$ is a context
    with domain $\dom(\Gamma)\cup\{x\}$.
  \end{itemize}
\end{definition}

\begin{table}[h]
  \begin{center}
  \caption{Typing rules of \ECC}
  \begin{tabular}{cc}
    \hline \hline
    $\displaystyle {[]} \vdash  \Prop : \Type_0$ & (Axiom-$\Prop$) \\ \\
    $\displaystyle {[]} \vdash  \Type_i : \Type_{i+1}$ & (Axiom-$\Type$) \\ \\
    $\displaystyle\cfrac{\Gamma \vdash t : T \quad \Gamma \vdash A : s \quad x \notin \dom(\Gamma)}{\Gamma ; (x : A) \vdash t : T}$ &
    (Weakening) \\ \\
    $\cfrac{\Gamma \vdash  A : s_1
      \quad \Gamma ; (x : A) \vdash  B : s_2
      \quad (s_1, s_2) \in \{\Prop, \Type_i\}\times\{\Prop, \Type_i\}}
           {\Gamma \vdash  \forall x:A.B : s_2}$
    & (PI-Type) \\ \\
    $\cfrac{\Gamma ; (x : A) \vdash t : B \quad \Gamma \vdash \forall x:A.B : s}
           {\Gamma \vdash \lambda x:A.t : \forall x:A.B}$
    & (Abstract) \\ \\
    $\cfrac{\Gamma \vdash  u : \forall x:A.B \quad \Gamma \vdash  v : A}
           {\Gamma \vdash  u \; v : B[x \backslash v]}$
    & (Apply) \\ \\
    $\cfrac{\Gamma \vdash A : s \quad x \notin \dom(\Gamma)}{\Gamma ; (x : A) \vdash x : A}$ & (Variable) \\ \\
    $\cfrac{\Gamma \vdash t : A \quad \Gamma \vdash B : s \quad A=_{\beta} B}{\Gamma \vdash t: B}$
    & (Beta Equality) \\ \\
    $\cfrac{\Gamma \vdash t : \forall x_1:A_1,\dots,\forall
        x_n:A_n,\Type_i \quad i < j}
      {\Gamma \vdash t: \forall x_1:A_1,\dots,\forall x_n:A_n,\Type_j}$
    & (Subsumption) \\
    \hline \hline
  \end{tabular}
  \label{tab:typing_rule_of_ECC}
  \end{center}
\end{table}

Table \ref{tab:typing_rule_of_ECC} contains the typing rules of \ECC.
The metavariables $s, s_1, s_2$ denote sorts.
In rule (PI-Type), either $s_1 = s_2$ or one of them is $\Prop$.
The equality $=_{\beta}$ denotes {\em beta equality} and $B[x
  \backslash v]$ denotes {\em substitution}.
Here are their definitions.

\begin{definition}[Substitution]\label{substitution_ecc}
  Let $t$ and $v$ be terms and $x$ be a variable.
  The substitution $t[x \backslash v]$, which means $v$ replaces $x$ in $t$, is defined inductively as follows:
  \begin{enumerate}[{(}i{)}]
  \item If $y$ is a variable, then $y[x \backslash v] = \begin{cases}v & (y=x) \\ y & (otherwise),\end{cases}$
  \item $(t_1 t_2)[x \backslash v] = (t_1[x \backslash v]) (t_2[x \backslash v])$,
  \item $(\lambda x' : T.t')[x \backslash v] = \lambda x' : (T[x
    \backslash v]). t'[x \backslash v]$ \\
    \hfill when $x'\notin\fv(v)\cup\{x\}$,
  \item $(\forall x' : T_1 . T_2)[x \backslash v] = \forall x' :
    (T_1[x \backslash v]) . (T_2[x \backslash v])$ \\
    \hfill when $x'\notin\fv(v)\cup\{x\}$,
  \item $s[x \backslash v] = s$ where $s$ is a sort.
  \end{enumerate}
\end{definition}

\begin{definition}[Beta Equality]\label{beta_ecc}
  Let $=_{\beta}$ be the smallest equivalence relation such that the following conditions hold.
  \begin{enumerate}[(i)]
  \item $(\lambda x : A. t) \; a =_{\beta} t[x \backslash a]$.
  \item If $t_1 =_{\beta} t_1'$ and $t_2 =_{\beta} t_2'$, then $t_1 t_2 =_{\beta} t_1' t_2'$.
  \item If $t =_{\beta} t'$ and $A =_{\beta} A'$, then $\lambda x : A. t =_{\beta} \lambda x : A' t'$.
  \item If $A =_{\beta} A'$ and $B =_{\beta} B'$, then $\forall x : A. B =_{\beta} \forall x : A' B'$.
  \end{enumerate}
\end{definition}

Now that we have defined \ECC 's terms and typing rules,
we show the following three lemmas that will be used in proofs.
They can be proved by induction over the typing rules above.

\begin{lemma}[Uniqueness of Typing]\label{uniqueness_of_type}
  If $\Gamma \vdash t : A$ and $\Gamma \vdash t : B$ are derivable,
  then either $A =_{\beta} B$, or $A =_\beta \forall x_1:A_1,\dots,\forall
  x_n:A_n,\Type_i$ and $B =_\beta \forall x_1:A_1,\dots,\forall
  x_n:A_n,\Type_j$.
\end{lemma}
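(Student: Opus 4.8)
The plan is to prove this by induction on the structure of the term $t$, using, for each syntactic form of $t$, a generation (inversion) lemma that describes all types derivable for $t$. The only reason the conclusion is not simply $A =_\beta B$ is the (Subsumption) rule, so it is convenient to first isolate the relevant slack. Say that two terms $A$ and $B$ are \emph{tower-compatible} when either $A =_\beta B$, or there are $n \ge 0$, terms $A_1,\dots,A_n$ and levels $i,j$ with $A =_\beta \forall x_1:A_1,\dots,\forall x_n:A_n,\Type_i$ and $B =_\beta \forall x_1:A_1,\dots,\forall x_n:A_n,\Type_j$; the lemma asserts exactly that any two types of $t$ are tower-compatible. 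Throughout I rely on confluence (Church--Rosser) of $\beta$-reduction, which supplies the two facts that make $=_\beta$-classes legible: sorts are pairwise distinct and never $=_\beta$-equal to a product, and the product constructor is injective up to $=_\beta$ (if $\forall x:A.B =_\beta \forall x:A'.B'$ then $A =_\beta A'$ and $B =_\beta B'$). From these one checks that tower-compatibility is reflexive, symmetric, transitive, and stable under replacing either side by a $=_\beta$-equal term.

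The first step is to prove the generation lemmas, each by induction on the typing derivation, so that the three non-syntax-directed rules (Weakening), (Beta Equality), (Subsumption) are handled once and for all inside them. For instance: if $\Gamma \vdash \lambda x:A.u : C$ then $\Gamma;(x:A) \vdash u : B$ for some $B$ with $C$ tower-compatible with $\forall x:A.B$; if $\Gamma \vdash u\,v : C$ then $\Gamma \vdash u : \forall x:A.B$ and $\Gamma \vdash v : A$ for some $A,B$ with $C$ tower-compatible with $B[x\backslash v]$; if $\Gamma \vdash \forall x:A.B : C$ then $C =_\beta s$ for a sort $s$ arising from (PI-Type); and the sort and variable cases pin $C$ down to a $\Type_k$ (respectively to the declared type of $x$) up to tower-compatibility. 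In each case the (Beta Equality) step is absorbed because tower-compatibility is $=_\beta$-stable, the (Subsumption) step because a subsumption move keeps $C$ tower-compatible with the same underlying type, and the (Weakening) step because it changes neither $t$ nor the declared type of a variable already in scope.

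With the generation lemmas in hand, the uniqueness statement follows by induction on $t$ and a case split on its head symbol. For sorts the two types are each $=_\beta \Type_k$, giving the second disjunct with $n=0$; for a variable both types are tower-compatible with its unique declared type, and tower-compatibility is transitive; for a product both types are $=_\beta$ to sorts, and the induction hypothesis on the body forces them to be either both $\Prop$ (first disjunct) or both $\Type$-sorts (second disjunct). The two interesting cases are abstraction and application, where the generation lemmas reduce the goal to an immediate subterm: for $\lambda x:A.u$ the induction hypothesis on $u$ makes its two body types tower-compatible, which I transport through the common binder $\forall x:A.(-)$; for $u\,v$ the induction hypothesis on $u$ makes $\forall x:A_1.B_1$ and $\forall x:A_2.B_2$ tower-compatible, and I then push this along the substitution $[x\backslash v]$, using that substitution commutes with the product tower and fixes $\Type$-levels, so a tower ending in $\Type_i$ maps to a tower ending in $\Type_i$.

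The main obstacle is precisely this bookkeeping around tower-compatibility, and under it the reliance on confluence. Concretely, to conclude $A =_\beta B$ in the non-subsumed cases I need product injectivity (from Church--Rosser) to split $\forall x:A_1.B_1 =_\beta \forall x:A_2.B_2$ into $A_1 =_\beta A_2$ and $B_1 =_\beta B_2$; and to realize the second disjunct with a single shared spine $A_1,\dots,A_n$ as the statement demands, I must align the two decompositions, whose domains agree only up to $=_\beta$, by re-folding one spine and invoking transitivity of $=_\beta$. Establishing (or citing) confluence for this calculus, and checking that the (Subsumption) rule does not disturb these normal-form computations, is the part needing the most care; everything else is routine induction.
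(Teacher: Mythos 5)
Your proposal is correct: the paper itself gives no detailed argument for this lemma (it only remarks that it "can be proved by induction over the typing rules"), and your development via generation lemmas proved by induction on derivations, a $=_\beta$-stable ``tower-compatibility'' relation closed under (Beta Equality) and (Subsumption), and Church--Rosser for product injectivity is exactly the standard fleshing-out of that remark. The points you flag as delicate --- aligning the two spines up to $=_\beta$, pushing the tower shape through substitution in the application case, and closure of tower-compatibility under subsumption on either side --- are indeed the only nontrivial bookkeeping, and your treatment of each is sound.
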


\begin{lemma}[Substitution]\label{substitution_lemma}
  If $\Gamma \vdash u : U$ and $\Gamma ; (x : U) ; \Delta \vdash t : T$ are derivable then $\Gamma ; \Delta[x \backslash u] \vdash t[x \backslash u] : T[x \backslash u]$ is also derivable.
\end{lemma}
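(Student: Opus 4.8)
The plan is to prove this by induction on the derivation of the second judgment $\Gamma ; (x : U) ; \Delta \vdash t : T$, with a case analysis on the last typing rule applied. Keeping $\Delta$ arbitrary is essential to the induction: the rules (PI-Type) and (Abstract) grow the context by a binder, so the induction hypothesis must be applied to a premise whose context is $\Gamma ; (x : U) ; \Delta ; (x' : A)$, i.e.\ with $\Delta$ replaced by $\Delta ; (x' : A)$. Following the usual variable convention, I would assume every bound variable and every variable declared in $\Delta$ is distinct from $x$ and absent from $\fv(u)$. Since $U$ is typed in $\Gamma$ and $x \notin \dom(\Gamma)$ by well-formedness of the context, we also have $x \notin \fv(U)$, hence $U[x \backslash u] = U$; this is what makes the variable-equals-$x$ case go through.

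Before the main induction I would record two auxiliary facts about raw substitution, each proved by a routine induction on term structure. First, substitutions commute: if $x \neq x'$ and $x' \notin \fv(u)$, then $(B[x' \backslash v])[x \backslash u] = (B[x \backslash u])[x' \backslash v[x \backslash u]]$. Second, beta equality is stable under substitution: if $A =_\beta B$ then $A[x \backslash u] =_\beta B[x \backslash u]$. Beyond these I would only use the defining clauses of Definition~\ref{substitution_ecc}, namely that substitution distributes over the constructors (as in $(\forall x' : A. B)[x \backslash u] = \forall x' : A[x \backslash u]. B[x \backslash u]$, and likewise for $\lambda$ and application) and fixes sorts, $s[x \backslash u] = s$, so that a telescope $\forall x_1 : A_1, \dots, \Type_i$ retains its final sort $\Type_i$ after substitution.

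For the main induction I would run through the rules. The two axioms are vacuous, since their conclusion has the empty context whereas $\Gamma ; (x : U) ; \Delta$ contains at least $(x : U)$. For (Variable) I split on whether $\Delta$ is empty: if it is, then $t = x$ and $T = U$, and the goal reduces to $\Gamma \vdash u : U$, which is the first hypothesis; if $\Delta = \Delta' ; (y : T)$, then $t = y \neq x$, and applying the induction hypothesis to the premise $\Gamma ; (x : U) ; \Delta' \vdash T : s$ followed by a fresh instance of (Variable) yields the result. The rule (Weakening) is analogous, with the degenerate subcase where the weakening step introduced $(x : U)$ itself (so $\Delta$ is empty and $t, T$ do not mention $x$) discharged directly by the first premise. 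For (PI-Type) and (Abstract) I apply the induction hypothesis to each premise, extending $\Delta$ by the binder in the second premise, and reassemble the conclusion via the distribution identities. For (Beta Equality) I apply the induction hypothesis to both premises and reapply the rule, using stability of $=_\beta$ under substitution; for (Subsumption) I likewise reapply the rule, relying on the fact that substitution leaves $\Type_i$ untouched.

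The delicate case, and the one I expect to be the main obstacle, is (Apply): from $\Gamma ; (x : U) ; \Delta \vdash u' : \forall x' : A. B$ and $\Gamma ; (x : U) ; \Delta \vdash v : A$ the rule concludes type $B[x' \backslash v]$, so after substitution the desired type is $(B[x' \backslash v])[x \backslash u]$. This does not literally coincide with the type produced by applying the induction hypothesis to the two premises and reapplying (Apply), which is $(B[x \backslash u])[x' \backslash v[x \backslash u]]$. The commutation lemma recorded above is exactly what identifies the two, provided the variable convention has secured $x \neq x'$ and $x' \notin \fv(u)$. Once this identity is invoked the case closes, and all remaining verifications (side conditions on domains, alpha-renaming to maintain the convention, and the telescope bookkeeping in Subsumption) are routine.
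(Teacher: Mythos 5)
Your proposal is correct and takes essentially the same approach as the paper: the paper merely asserts that this lemma ``can be proved by induction over the typing rules,'' which is exactly the induction on the derivation of $\Gamma ; (x : U) ; \Delta \vdash t : T$ (keeping $\Delta$ general) that you carry out. The details you supply --- the variable convention, the commutation identity $(B[x' \backslash v])[x \backslash u] = (B[x \backslash u])[x' \backslash v[x \backslash u]]$ needed in the (Apply) case, and stability of $=_\beta$ under substitution for (Beta Equality) --- are precisely the standard auxiliary facts the paper leaves implicit.
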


\begin{lemma}[Extended Weakening]
  If $\Gamma_1 ; \Gamma_2 \vdash t : T$ is derivable, then $\Gamma_1 ; \Delta ; \Gamma_2 \vdash t : T$ is also derivable when $\Gamma_1 ; \Delta ; \Gamma_2$ is well-formed, i.e. when $\Gamma_1 ; \Delta ; \Gamma_2 \vdash \Type_i : \Type_{i+1}$ is derivable.
\end{lemma}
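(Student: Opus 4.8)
The plan is to prove the statement by induction on the derivation of $\Gamma_1 ; \Gamma_2 \vdash t : T$, with $\Delta$, $\Gamma_1$ and $\Gamma_2$ all universally quantified so that the induction hypothesis may be re-applied to a \emph{different} split of the context. The primitive (Weakening) rule only appends a single binding at the end of a context, so inserting $\Delta$ in the middle cannot be obtained by a single application of that rule; this is precisely what forces an induction on the derivation rather than a direct appeal to (Weakening).

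Before the main induction I would record two auxiliary facts, both provable by the same routine inductions as the earlier lemmas. First, a context-generation observation: if $\Gamma_1 ; \Delta ; \Gamma_2$ is well-formed, then every prefix of it is well-formed, its declared variables are pairwise distinct, and for each binding $(y : A)$ in it there is a sort $s$ with $\Gamma' \vdash A : s$, where $\Gamma'$ is the prefix preceding that binding. Second, a ``right weakening'' special case: if $\Gamma \vdash t : T$ and $\Gamma ; \Delta$ is well-formed, then $\Gamma ; \Delta \vdash t : T$. The latter follows by induction on the length of $\Delta$, each step being one application of the primitive (Weakening) rule whose side premise $\Gamma ; \Delta' \vdash A : s$ is supplied by the generation fact. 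This right-weakening case is what discharges every situation in which $\Gamma_2 = []$, i.e.\ where inserting $\Delta$ degenerates to appending it.

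The main induction then proceeds by cases on the last rule. For (Axiom-$\Prop$) and (Axiom-$\Type$) the context is empty, so $\Gamma_1 = \Gamma_2 = []$ and the goal reduces to $\Delta \vdash \Prop : \Type_0$ (resp.\ the $\Type$ axiom), which is exactly right weakening applied to the empty-context axiom. For the structural rules (PI-Type), (Abstract), (Apply), (Beta Equality) and (Subsumption) I would apply the induction hypothesis to each premise and reassemble the conclusion with the same rule. The only subtlety is the binder rules, whose second premise is typed in the extended context $\Gamma_1 ; \Gamma_2 ; (x : A)$: there I first use the induction hypothesis on the domain premise to obtain $\Gamma_1 ; \Delta ; \Gamma_2 \vdash A : s_1$, which certifies that $\Gamma_1 ; \Delta ; \Gamma_2 ; (x : A)$ is well-formed, and only then apply the induction hypothesis to the body premise with $\Gamma_2$ replaced by $\Gamma_2 ; (x : A)$. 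Since $x$ is a bound variable, I may assume by $\alpha$-renaming that $x \notin \dom(\Delta)$, so that the extended context is legal.

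The delicate cases, and the main obstacle, are the context-manipulating rules (Weakening) and (Variable). Here I split on whether $\Gamma_2$ is empty. When $\Gamma_2 = \Gamma_2' ; (y : A)$ is non-empty, the premises are typed in $\Gamma_1 ; \Gamma_2'$, so the induction hypothesis inserts $\Delta$ into that shorter context and I then re-apply (Weakening) or (Variable) to restore $(y : A)$; the freshness condition $y \notin \dom(\Gamma_1 ; \Delta ; \Gamma_2')$ is read off from the well-formedness of $\Gamma_1 ; \Delta ; \Gamma_2$, which by the generation fact forces all its declared variables to be distinct. When $\Gamma_2 = []$ the judgment is $\Gamma_1 \vdash t : T$ and the goal $\Gamma_1 ; \Delta \vdash t : T$ is the right-weakening special case. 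Keeping these freshness and well-formedness side-conditions coherent across the two subcases, together with the $\alpha$-renaming needed in the binder cases, is where the proof demands the most care.
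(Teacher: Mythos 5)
Your proof is correct and takes essentially the same approach as the paper: the paper gives no details at all, stating only that this lemma (like Uniqueness of Typing and Substitution) ``can be proved by induction over the typing rules,'' which is exactly your induction on the derivation with the context split generalized. Your auxiliary facts (context generation and right-weakening), the case split on whether $\Gamma_2$ is empty, and the freshness/$\alpha$-renaming bookkeeping are precisely the standard details the paper leaves implicit.
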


\subsection{Propositional terms and proof terms}
In \ECC, propositions are types that belong to the impredicative sort $\Prop$,
and proofs are terms of types that represent propositions.
Next, we give a definition of propositions and proofs through syntactic derivability.
Rather than introducing an explicitly sorted type system like in \cite{not_simple}, we will prove that these definitions are stable under substitution, weakening, and reduction, so that we can safely use them when defining our interpretation.
\begin{definition}
  \
  \begin{enumerate}
  \item {\em Propositional Term} \\
    A term $P$ is called a propositional term for $\Gamma$ iff $\Gamma \vdash P : \Prop$ is derivable.
  \item {\em Proof Term} \\
    A term $p$ is called a proof term for $\Gamma$ iff $\Gamma
    \vdash p : P$ is derivable for some $P$ that is a propositional
    term for $\Gamma$.
    $P$ is then called a provable propositional term for $\Gamma$.
  \end{enumerate}
\end{definition}

\begin{lemma}[Proof and propositional terms]\label{proposition_proof_invariant}
  \
  \begin{enumerate}[{(}i{)}]
  \item We assume that $P_1$ and $P_2$ are well typed under the same context $\Gamma$.
    If $P_1$ is a propositional term for $\Gamma$ and $P_1 =_\beta P_2$, then $P_2$ is also a propositional term for $\Gamma$. \label{enum:proposition_invariant_beta}
  \item We assume that $p_1$ and $p_2$ are well typed under the same context $\Gamma$.
    If $p_1$ is a proof term for $\Gamma$ and $p_1 =_\beta p_2$, then $p_2$ is also a proof term for $\Gamma$.
  \item We assume that $\Gamma \vdash u : \forall x : A.B$ and $\Gamma \vdash v : A$ are derivable. If $u$ is a proof term for $\Gamma$, then $u \; v$ is also a proof term for $\Gamma$.
  \item If $t$ is a proof term for $\Gamma ; (x : A)$ and $\lambda x : A.t$ is well typed under $\Gamma$, then $\lambda x : A.t$ is also a proof term for $\Gamma$.
  \item If $t$ is a proof term for $\Gamma$, then there does not exist a term $T$ such that $\Gamma \vdash t : T$ and $\Gamma \vdash T : \Type_i$ are both derivable.
  \end{enumerate}
\end{lemma}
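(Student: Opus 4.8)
The plan is to treat part~(i) as the load-bearing case and to derive the other four from it, using a single auxiliary observation that exploits the absence of subsumption between $\Prop$ and $\Type_i$: whenever $\Gamma \vdash P : \Prop$, the term $P$ is $\beta$-equal to no product $\forall x_1:A_1,\dots,\forall x_n:A_n,\Type_i$. Indeed, by confluence of $=_\beta$ such a $P$ would reduce to a term of the same product shape ending in $\Type_i$, which by subject reduction would still have type $\Prop$; but inversion of (PI-Type) shows a product inherits the sort of its body, so iterating down to $\Type_i : \Type_{i+1}$ forces that sort to be some $\Type_j$, and Lemma~\ref{uniqueness_of_type} then gives $\Prop =_\beta \Type_j$, a contradiction. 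This observation is what discards the exceptional ``$\forall\dots\Type_i$ versus $\forall\dots\Type_j$'' alternative of Lemma~\ref{uniqueness_of_type} every time one of the two types in play is already known to be a proposition.

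For part~(i), since $P_2$ is well typed we have $\Gamma \vdash P_2 : C$ for some $C$, together with $\Gamma \vdash P_1 : \Prop$ and $P_1 =_\beta P_2$. Confluence yields a common reduct $Q$ of $P_1$ and $P_2$, subject reduction gives both $\Gamma \vdash Q : \Prop$ and $\Gamma \vdash Q : C$, and Lemma~\ref{uniqueness_of_type} applied to $Q$ — its exceptional case excluded by the observation — forces $C =_\beta \Prop$. As $\Gamma \vdash \Prop : \Type_0$ holds by (Axiom-$\Prop$) and weakening, the (Beta Equality) rule converts $\Gamma \vdash P_2 : C$ into $\Gamma \vdash P_2 : \Prop$. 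Part~(ii) repeats this one level down: from $\Gamma \vdash p_1 : P$ with $\Gamma \vdash P : \Prop$ and a well-typed $p_2 =_\beta p_1$, a common reduct of $p_1,p_2$ carries two types which Lemma~\ref{uniqueness_of_type} (exceptional case again excluded) forces to be $\beta$-equal, after which (Beta Equality) retypes $p_2$ at $P$, using $\Gamma \vdash P : \Prop$ itself as the well-sortedness premise.

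The remaining parts are structural and rely on the generation (inversion) lemmas for the typing rules. For part~(iii), uniqueness of typing applied to $u$ (exceptional case excluded, since $u$ carries a propositional type) gives $\forall x:A.B =_\beta Q$ with $Q$ the propositional type of $u$; part~(i) makes $\forall x:A.B$ itself a proposition, inversion of (PI-Type) yields $\Gamma;(x:A) \vdash B : \Prop$, and the Substitution Lemma~\ref{substitution_lemma} gives $\Gamma \vdash B[x\backslash v] : \Prop$, which is precisely the type (Apply) assigns to $u\;v$. For part~(iv), inversion of (Abstract) on the well-typed $\lambda x:A.t$ produces a type $\forall x:A.B'$ with $\Gamma;(x:A) \vdash t : B'$; uniqueness against the propositional type of $t$ forces $B' =_\beta B$, part~(i) makes $B'$ a proposition, and the impredicativity of (PI-Type) — any domain sort is admissible once the codomain sort is $\Prop$ — lets us derive $\Gamma \vdash \forall x:A.B' : \Prop$, whereupon (Beta Equality) retypes $\lambda x:A.t$ at this proposition. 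For part~(v), were both $\Gamma \vdash t : T$ and $\Gamma \vdash T : \Type_i$ to hold, uniqueness of typing on $t$ (exceptional case excluded) would give $T =_\beta P$ for the propositional type $P$ of $t$, part~(i) would then yield $\Gamma \vdash T : \Prop$, and applying Lemma~\ref{uniqueness_of_type} to $T$ with its two types $\Prop$ and $\Type_i$ is absurd, since $\Prop$ is $\beta$-equal neither to $\Type_i$ nor to any product ending in a universe.

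The main obstacle is part~(i): because the calculus has no rule that converts a subject to a $\beta$-equal one, transporting ``being a proposition'' across $=_\beta$ genuinely needs the syntactic metatheory of reduction — confluence (Church--Rosser) and subject reduction — and not merely the typing rules and Lemma~\ref{uniqueness_of_type}. I would either invoke these as standard for this pure type system or establish them by the usual inductions; once part~(i) and the $\Prop$/$\Type$ separation are in hand, the other four parts reduce to routine combinations of Lemmas~\ref{uniqueness_of_type} and~\ref{substitution_lemma}, the generation lemmas, and the (Beta Equality) and (PI-Type) rules.
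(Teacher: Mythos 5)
The paper never actually proves this lemma: it is stated bare, with no proof and no sketch (the remark ``they can be proved by induction over the typing rules above'' refers to the three lemmas preceding it, and although the paper announces it will prove stability of proof/propositional terms under reduction, it never does), so there is no paper proof to compare yours against; your proposal must be judged on its own merits. On that basis it is essentially correct, and the architecture is sound: part (i) really is the load-bearing case, your auxiliary observation (a term of type $\Prop$ is $\beta$-equal to no $\forall x_1:A_1,\dots,\forall x_n:A_n,\Type_i$, the degenerate case $n=0$ covering bare universes) is exactly what kills the exceptional branch of Lemma~\ref{uniqueness_of_type} in every application, and parts (ii)--(v) follow by the combinations you describe. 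Two small imprecisions: in part (i) the exceptional case is excluded not by the observation itself (which speaks of terms \emph{typed by} $\Prop$) but by the simpler fact that the sort $\Prop$, being a normal form that is not a product, cannot be $\beta$-equal to any universe-ended product --- this is the trivial kernel of your observation's own proof; and in part (iv) you can conclude directly by applying (Abstract) to $\Gamma;(x:A)\vdash t:B'$ and $\Gamma\vdash\forall x:A.B':\Prop$, which avoids the (Beta Equality) step --- that step needs $C=_\beta\forall x:A.B'$ for the originally given type $C$, and with (Subsumption) in play the generation lemma only yields this after an extra argument that no subsumption step could have intervened (again because $B'=_\beta B$ is propositional, hence not a universe-ended product).

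The one thing to be upfront about is the cost of your route: it rests on confluence, subject reduction, and generation (inversion) lemmas, none of which the paper states or proves. Confluence is untyped Church--Rosser and unproblematic, but subject reduction and the inversion lemmas for this particular calculus need care precisely because of the nonstandard (Subsumption) rule: e.g.\ inverting a product or abstraction typing must rule out that (Subsumption) occurred, which again leans on the fact that sorts and propositional types are not $\beta$-equal to universe-ended products. You correctly identify that some such reduction metatheory is unavoidable --- there is no subject-conversion rule, so ``being a proposition'' cannot be transported across $=_\beta$ using only the typing rules and Lemma~\ref{uniqueness_of_type}. Since the paper silently assumes this lemma, your proposal is a genuine completion of the paper rather than a divergence from it; to be fully self-contained it would have to carry out those standard inductions for this specific system, as you acknowledge.
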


Proof terms and propositional terms are preserved under substitution.
The following lemma expresses this fact.


\begin{lemma}\label{type_of_type}
  If $\Gamma \vdash t : T$ is derivable, then $\Gamma \vdash T : s$ for some sort $s$.
\end{lemma}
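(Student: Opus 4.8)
The plan is to proceed by induction on the derivation of $\Gamma \vdash t : T$, splitting on the last typing rule used. Before the main induction I would establish two auxiliary facts. First, in any well-formed context $\Gamma$ (one occurring in a derivable judgment) both $\Gamma \vdash \Prop : \Type_0$ and $\Gamma \vdash \Type_i : \Type_{i+1}$ are derivable; this follows by a secondary induction on the length of $\Gamma$, using (Axiom-$\Prop$) and (Axiom-$\Type$) at the base and (Weakening) at each extension. Second, I would use a routine generation (inversion) lemma for products, proved by a separate induction on derivations: whenever $\Gamma \vdash \forall x:A.B : s$ with $s$ a sort, there are sorts $s_1,s_2$ with $\Gamma \vdash A : s_1$ and $\Gamma ; (x:A) \vdash B : s_2$ meeting the side condition of (PI-Type), where the (Beta Equality) and (Subsumption) cases only enlarge the final sort.

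With these in hand, most cases are immediate. For (Axiom-$\Prop$) and (Axiom-$\Type$) the type $T$ is a sort, typed by the next sort via the other axiom. For (Abstract) the type $\forall x:A.B$ is exactly the content of the second premise $\Gamma \vdash \forall x:A.B : s$; for (Beta Equality) the premise $\Gamma \vdash B : s$ is the conclusion we need; and for (Variable) the premise $\Gamma \vdash A : s$ yields the result after one application of (Weakening). The (Weakening) case follows from the induction hypothesis followed by (Weakening), and the (PI-Type) case from the auxiliary fact that the body sort $s_2$ is typable in the well-formed context $\Gamma$.

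The two cases carrying real content are (Apply) and (Subsumption). For (Apply) the conclusion type is $B[x\backslash v]$. Applying the induction hypothesis to the premise $\Gamma \vdash u : \forall x:A.B$ yields $\Gamma \vdash \forall x:A.B : s$, and the generation lemma extracts $\Gamma ; (x:A) \vdash B : s_2$. Combining this with $\Gamma \vdash v : A$ through the Substitution Lemma (\ref{substitution_lemma}) gives $\Gamma \vdash B[x\backslash v] : s_2[x\backslash v]$, and since $s_2$ is a sort, $s_2[x\backslash v]=s_2$, which is the desired witness.

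I expect (Subsumption) to be the main obstacle, because there the type changes from $\forall x_1:A_1,\dots,\forall x_n:A_n,\Type_i$ to $\forall x_1:A_1,\dots,\forall x_n:A_n,\Type_j$ with $i<j$, so I must show this new product is itself typable. Applying the induction hypothesis and generation repeatedly to the premise's type recovers the component derivations $\Gamma ; x_1:A_1 ; \dots ; x_{k-1}:A_{k-1} \vdash A_k : r_k$ together with their sort constraints. I would then rebuild the product with body $\Type_j$ (typed by $\Type_{j+1}$ via the axiom and weakening) by reapplying (PI-Type) from the innermost layer outward. The delicate point is the side condition: raising the body sort from $\Type_{i+1}$ to $\Type_{j+1}$ breaks the constraint on $(r_k,\Type_{j+1})$ precisely when $r_k=\Type_{i+1}$; in that subcase I would first apply (Subsumption) to the component judgment $\Gamma ; \dots \vdash A_k : \Type_{i+1}$ to obtain $\Gamma ; \dots \vdash A_k : \Type_{j+1}$ (legal since $i+1<j+1$), restoring the constraint, whereas when $r_k=\Prop$ it already holds. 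Carrying this bookkeeping through all $n$ layers yields $\Gamma \vdash \forall x_1:A_1,\dots,\forall x_n:A_n,\Type_j : \Type_{j+1}$, completing the case.
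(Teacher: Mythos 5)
The paper states Lemma~\ref{type_of_type} without any proof --- it is one of the structural facts the authors treat as routine consequences of ``induction over the typing rules'' --- so there is no official argument to compare against line by line. Your induction on derivations, supported by context validity and a generation (inversion) lemma for products, is the standard and evidently intended route, and your case analysis is sound: (Abstract), (Beta Equality), (Variable) and (Weakening) read the answer off a premise, (PI-Type) and the axioms reduce to typability of sorts in well-formed contexts, and you correctly identify (Apply) and (Subsumption) as the only cases with real content. (Apply) is closed exactly as you say, by generation plus Lemma~\ref{substitution_lemma} and the observation $s_2[x\backslash v]=s_2$; (Subsumption) by rebuilding the product around $\Type_j$ and re-applying (Subsumption) to domain components to restore the (PI-Type) side condition.

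Two small repairs would make the argument airtight. First, your generation lemma cannot be proved by induction in the form you state it: in its (Beta Equality) case the premise types $\forall x:A.B$ by some $T'$ that is merely $\beta$-equal to the sort $s$, not syntactically a sort, so the inductive hypothesis (stated only for literal sorts) does not apply. The standard fix is to state generation for an arbitrary type $T$: if $\Gamma \vdash \forall x:A.B : T$ then there are sorts $s_1,s_2$ with $\Gamma \vdash A : s_1$ and $\Gamma ; (x:A) \vdash B : s_2$ satisfying the side condition. Your two uses only need the existence of $s_1,s_2$, so this strengthening costs nothing. Second, in (Subsumption) the domain sort $r_k$ recovered by generation need not be $\Prop$ or $\Type_{i+1}$: because (Subsumption) may have been used inside the premise's derivation, it can be any $\Type_m$ with $m \ge i+1$. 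When $m \le j$ your fix (raise $A_k$ to $\Type_{j+1}$ by (Subsumption)) works; when $m > j+1$ you instead raise the body sort from $\Type_{j+1}$ to $\Type_m$. (Similarly, your first auxiliary fact tacitly relies on context validity --- that each prefix's declared type is well-sorted --- which needs its own routine induction on derivations, not just on the length of $\Gamma$.) With these adjustments the proof is complete.
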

\begin{lemma}\label{substitution_in_proof}
  We assume that $\Gamma \vdash u : U$ is derivable and $p$ is well typed under $\Gamma ; (x : U) ; \Delta$.
  \begin{enumerate}[{(}i{)}]
  \item If $p$ is a proof (resp. propositional) term for the context $\Gamma ; (x : U) ; \Delta$, then $p[x \backslash u]$ is a proof (resp. propositional) term for the context $\Gamma; \Delta[x \backslash u]$.
  \item If $p$ is not a proof term for the context $\Gamma ; (x : U) ; \Delta$, then $p[x \backslash u]$ is not a proof term for the context $\Gamma ; \Delta[x \backslash u]$.
  \end{enumerate}
\end{lemma}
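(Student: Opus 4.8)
The plan is to treat the two parts separately: part (i) follows almost directly from the Substitution Lemma (Lemma~\ref{substitution_lemma}), while part (ii) requires a contrapositive argument resting on Lemma~\ref{proposition_proof_invariant}(v).

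For part (i), I would first dispatch the propositional case. If $p$ is propositional for $\Gamma;(x:U);\Delta$, then $\Gamma;(x:U);\Delta \vdash p : \Prop$ is derivable, and applying Lemma~\ref{substitution_lemma} with $T=\Prop$ yields $\Gamma;\Delta[x\backslash u] \vdash p[x\backslash u] : \Prop[x\backslash u]$. Since substitution leaves sorts unchanged (Definition~\ref{substitution_ecc}), $\Prop[x\backslash u]=\Prop$, so $p[x\backslash u]$ is propositional for $\Gamma;\Delta[x\backslash u]$. For the proof-term case I unfold the definition: there is a propositional term $P$ with $\Gamma;(x:U);\Delta \vdash p : P$ and $\Gamma;(x:U);\Delta \vdash P : \Prop$. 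Applying Lemma~\ref{substitution_lemma} to each derivation gives $\Gamma;\Delta[x\backslash u] \vdash p[x\backslash u] : P[x\backslash u]$ and $\Gamma;\Delta[x\backslash u] \vdash P[x\backslash u] : \Prop$, so $P[x\backslash u]$ witnesses that $p[x\backslash u]$ is a proof term.

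For part (ii), I would argue by contraposition: assuming $p[x\backslash u]$ is a proof term for $\Gamma;\Delta[x\backslash u]$, I show $p$ must already be a proof term for $\Gamma;(x:U);\Delta$. Since $p$ is well typed under $\Gamma;(x:U);\Delta$, fix a type $T$ with $\Gamma;(x:U);\Delta \vdash p : T$; by Lemma~\ref{type_of_type} there is a sort $s$ with $\Gamma;(x:U);\Delta \vdash T : s$, and $s$ is either $\Prop$ or some $\Type_i$. If $s=\Prop$, then $T$ is propositional and $\Gamma;(x:U);\Delta \vdash p : T$ exhibits $p$ as a proof term, as required. The crucial step is to rule out $s=\Type_i$: applying Lemma~\ref{substitution_lemma} to both $\Gamma;(x:U);\Delta \vdash p : T$ and $\Gamma;(x:U);\Delta \vdash T : \Type_i$, and using $\Type_i[x\backslash u]=\Type_i$, produces $\Gamma;\Delta[x\backslash u] \vdash p[x\backslash u] : T[x\backslash u]$ together with $\Gamma;\Delta[x\backslash u] \vdash T[x\backslash u] : \Type_i$. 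But $p[x\backslash u]$ is assumed to be a proof term, so Lemma~\ref{proposition_proof_invariant}(v) forbids the existence of such a $\Type_i$-sorted type of $p[x\backslash u]$, a contradiction. Hence $s=\Prop$, which completes the contrapositive.

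The only genuine obstacle is part (ii): the naive worry is that substitution might turn a non-proof term into a proof term by collapsing the universe of its type down to $\Prop$. The argument above neutralizes this precisely because the sort of a type is syntactically invariant under substitution ($s[x\backslash u]=s$), so the universe level of $p$'s type cannot drop from $\Type_i$ to $\Prop$; Lemma~\ref{proposition_proof_invariant}(v) then converts the statement ``the type still lives in $\Type_i$'' into the impossibility of $p[x\backslash u]$ being a proof term. The main care needed is to secure a type $T$ of $p$ together with its sort in the original context \emph{before} substituting, so that both premises of the Substitution Lemma are available; beyond that the reasoning is routine.
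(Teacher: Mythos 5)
Your proposal is correct and takes essentially the same route as the paper's own proof: part (i) by direct appeal to Lemma~\ref{substitution_lemma}, and part (ii) by combining Lemma~\ref{type_of_type}, Lemma~\ref{substitution_lemma} (using that sorts are fixed by substitution), and Lemma~\ref{proposition_proof_invariant}~(v). The only difference is cosmetic — you phrase (ii) as a contrapositive, while the paper assumes $p$ is not a proof term, concludes $s = \Type_i$, and derives the same contradiction directly.
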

\begin{proof}
  (i) is clear by Lemma~\ref{substitution_lemma}.
  We will show (ii).
  Since $p$ is well typed, there esists a type $T$ such that
  \[ \Gamma ; (x : U) ; \Delta \vdash p : T \]
  and by Lemma~\ref{type_of_type} there exists a sort $s$ such that
  \[ \Gamma ; (x : U) ; \Delta \vdash T : s \]
  Since $p$ is not a proof term for the context $\Gamma ; (x : U) ; \Delta$,
  we have that $s \neq \Prop$, and as a result there exists an index $i$ such
  that $s = \Type_i$.
  Hence by Lemma~\ref{substitution_lemma},
  \begin{eqnarray*}
    \Gamma ; \Delta[x \backslash u] &\vdash& p[x \backslash u] : T[x \backslash u] \\
    \Gamma ; \Delta[x \backslash u] &\vdash& T[x \backslash u] : \Type_i
  \end{eqnarray*}
  hold.
  If there exists a term $P$ such that
  \begin{eqnarray*}
    \Gamma ; \Delta[x \backslash u] &\vdash& p[x \backslash u] : P \\
    \Gamma ; \Delta[x \backslash u] &\vdash& P : \Prop,
  \end{eqnarray*}
  it implies a contradiction by Lemma~\ref{proposition_proof_invariant} (v).
\end{proof}

Note that the fact that $P$ is not a propositional term for $\Gamma ; (x : U) ; \Delta$ does not imply that $P[x \backslash u]$ is not a propositional term for $\Gamma ; \Delta[x \backslash u]$ in general.
Here is a counterexample.
\begin{eqnarray*}
  \Gamma ; (U : \Type_i) ; ( P : U ) &\vdash& P : U \\
  \Gamma &\vdash& \Prop : \Type_i
\end{eqnarray*}
In this case, $P$ is not a propositional term.
However $P[U \backslash \Prop] = P$ is a propositional term under $\Gamma ; (P : \Prop)$.

\begin{lemma}\label{weakening_in_proof}
  We assume that $p$ is well typed under $\Gamma_1 ; \Gamma_2$ and $\Gamma_1 ; \Delta ; \Gamma_2$.
  $p$ is a proof (resp. propositional) term for the context $\Gamma_1 ; \Gamma_2$ if and only if $p$ is a proof (resp. propositional) term for the context $\Gamma_1; \Delta; \Gamma_2$.
\end{lemma}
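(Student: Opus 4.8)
The plan is to establish both directions of both equivalences, separating the two ``weakening'' directions from the two ``strengthening'' directions. The weakening directions are immediate from Extended Weakening: if $p$ is a proof (resp.\ propositional) term for $\Gamma_1 ; \Gamma_2$, a witnessing derivation consists of $\Gamma_1 ; \Gamma_2 \vdash p : P$ together with $\Gamma_1 ; \Gamma_2 \vdash P : \Prop$ (resp.\ simply $\Gamma_1 ; \Gamma_2 \vdash p : \Prop$), and applying Extended Weakening to each judgement — legitimate because $\Gamma_1 ; \Delta ; \Gamma_2$ is well-formed, which follows from the hypothesis that $p$ is well typed there — transports everything to $\Gamma_1 ; \Delta ; \Gamma_2$. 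The genuine content lies in the strengthening directions, where the extra hypothesis that $p$ is well typed under the \emph{smaller} context $\Gamma_1 ; \Gamma_2$ is precisely what rescues an argument that would otherwise fail, since strengthening does not hold in general.

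For the propositional strengthening direction I would start from $\Gamma_1 ; \Delta ; \Gamma_2 \vdash p : \Prop$ and, using well-typedness of $p$ under $\Gamma_1 ; \Gamma_2$, choose $T$ with $\Gamma_1 ; \Gamma_2 \vdash p : T$. Extended Weakening gives $\Gamma_1 ; \Delta ; \Gamma_2 \vdash p : T$, so $p$ carries both the types $\Prop$ and $T$ under $\Gamma_1 ; \Delta ; \Gamma_2$. Lemma~\ref{uniqueness_of_type} then forces either $T =_\beta \Prop$, or that $\Prop$ and $T$ are both products ending in some $\Type_k$; the latter is impossible because $\Prop$ is a sort, so $T =_\beta \Prop$. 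Weakening the axiom $[] \vdash \Prop : \Type_0$ up to $\Gamma_1 ; \Gamma_2$ and applying (Beta Equality) to $\Gamma_1 ; \Gamma_2 \vdash p : T$ then upgrades this to $\Gamma_1 ; \Gamma_2 \vdash p : \Prop$, as required.

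For the proof strengthening direction I would again pick $T$ with $\Gamma_1 ; \Gamma_2 \vdash p : T$, but this time invoke Lemma~\ref{type_of_type} to obtain a sort $s$ with $\Gamma_1 ; \Gamma_2 \vdash T : s$. Extended Weakening transports both $\Gamma_1 ; \Delta ; \Gamma_2 \vdash p : T$ and $\Gamma_1 ; \Delta ; \Gamma_2 \vdash T : s$. Now, since $p$ is assumed to be a proof term for $\Gamma_1 ; \Delta ; \Gamma_2$, Lemma~\ref{proposition_proof_invariant}(v) forbids $\Gamma_1 ; \Delta ; \Gamma_2 \vdash T : \Type_i$, which rules out $s = \Type_i$ and hence forces $s = \Prop$. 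Because the \emph{same} sort $s$ already appears in $\Gamma_1 ; \Gamma_2 \vdash T : s$, this directly yields $\Gamma_1 ; \Gamma_2 \vdash T : \Prop$, so $T$ is a propositional term for $\Gamma_1 ; \Gamma_2$; combined with $\Gamma_1 ; \Gamma_2 \vdash p : T$ this exhibits $p$ as a proof term for $\Gamma_1 ; \Gamma_2$.

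I expect the main obstacle to be these strengthening directions, and specifically reconciling the disjunctive conclusion of Lemma~\ref{uniqueness_of_type} — an artefact of the (Subsumption) rule — with the goal. The crucial observation is that the type of $p$ is ambiguous only among products ending in $\Type$, which can never coincide with the sort $\Prop$ in the propositional case, and which Lemma~\ref{proposition_proof_invariant}(v) excludes outright in the proof case once one knows $p$ is a proof term in the larger context. A secondary point requiring care is that every appeal to Extended Weakening must be justified by well-formedness of $\Gamma_1 ; \Delta ; \Gamma_2$, which the standing hypotheses on $p$ guarantee.
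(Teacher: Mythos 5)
Your proof is correct. The paper itself states this lemma without proof, and your argument is essentially the intended one: it transplants the exact pattern of the paper's proof of the analogous substitution statement (Lemma~\ref{substitution_in_proof}) to weakening --- the easy directions by Extended Weakening, and the proof-term strengthening direction via Lemma~\ref{type_of_type} together with Lemma~\ref{proposition_proof_invariant}(v). Your handling of the propositional strengthening direction through Lemma~\ref{uniqueness_of_type} is also sound, with the minor caveat that ruling out the disjunct $\Prop =_\beta \forall x_1:A_1,\dots,\forall x_n:A_n,\Type_i$ implicitly appeals to confluence of $\beta$-reduction (a sort and a product, or two distinct sorts, are distinct normal forms), which is standard but worth stating explicitly.
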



The function $\PD{\Gamma}{x}{A}{B}$ maps two types into the string symbols $\{\PP, \TP, \PT, \TT\}$.
Its goal is to discriminate cases of $\forall x : A.B$ to give them different interpretations.
\begin{definition}[Product Type]
  We assume that $\Gamma \vdash A : s_1$ and $\Gamma ; (x : A) \vdash B : s_2$ are derivable where $s_1$, $s_2$ are sorts.
  \begin{equation*}
    \PD{\Gamma}{x}{A}{B} := 
    \begin{cases}
      \PP & (s_1, s_2) = (\Prop, \Prop) \\
      \TP & (s_1, s_2) = (\Type_i, \Prop) \\
      \PT & (s_1, s_2) = (\Prop, \Type_i) \\
      \TT & (s_1, s_2) = (\Type_i, \Type_j)
    \end{cases}
  \end{equation*}
\end{definition}
Again, $\PD{\Gamma}{x}{A}{B}$ is stable under substitution and weakening.
\begin{lemma}\label{product_type_substitution_weakening}
  \
  \begin{enumerate}[{(}i{)}]
  \item If $A$ and $B$ are typable under $\Gamma; (x : U) ; \Delta$ and $\Gamma \vdash u : U$ is derivable, then $\PD{(\Gamma; (x : U) ; \Delta)}{a}{A}{B} = \PD{(\Gamma; \Delta[x \backslash u])}{a}{A[x \backslash u]}{B[x \backslash u]}$ holds. \label{enum:pd_substitution}
  \item If $A$ and $B$ are typable under $\Gamma_1 ; \Gamma_2$ and $\Gamma_1; \Delta ; \Gamma_2$, then $\PD{(\Gamma_1 ; \Delta ; \Gamma_2)}{a}{A}{B} = \PD{(\Gamma_1 ; \Gamma_2)}{a}{A}{B}$ holds. \label{enum:pd_weakening}
  \end{enumerate}
\end{lemma}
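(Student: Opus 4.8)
The plan is to reduce both parts to the single observation that the value of $\PD{\Gamma}{a}{A}{B}$ is fixed purely by the coarse classification of each of the two sorts as either ``$\Prop$'' or ``some $\Type_i$'', together with the fact that the typing derivations (and hence these classifications) transport unchanged across substitution and weakening. Before transporting anything, I would first record that $\PD$ is genuinely well defined, i.e. that its value does not depend on the particular derivations chosen to witness $\Gamma \vdash A : s_1$ and $\Gamma ; (a:A) \vdash B : s_2$. By Lemma~\ref{uniqueness_of_type}, any two sorts assigned to the same term under the same context are either $\beta$-equal or differ only through the $\Type_i$-versus-$\Type_j$ subsumption; in neither case can one be $\Prop$ while the other is a $\Type_i$, since $\Prop$ is not $\beta$-equal to any $\Type_i$ and is not of the form $\forall x_1:A_1,\dots,\Type_k$. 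Hence the $\Prop$/$\Type$ status of each of $A$ and $B$ is uniquely determined, and because the four output symbols $\PP,\TP,\PT,\TT$ collapse all the indices, the value of $\PD$ depends only on these two statuses.

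For part~(\ref{enum:pd_substitution}) I would take derivations $\Gamma ; (x:U) ; \Delta \vdash A : s_1$ and $\Gamma ; (x:U) ; \Delta ; (a:A) \vdash B : s_2$, which exist by the typability hypothesis, and push each one through the Substitution Lemma (Lemma~\ref{substitution_lemma}) using the given $\Gamma \vdash u : U$. For $A$ this directly gives $\Gamma ; \Delta[x \backslash u] \vdash A[x \backslash u] : s_1[x \backslash u]$. For $B$ I would instantiate the tail context of the lemma with $\Delta ; (a:A)$ and use the identity $(\Delta ; (a:A))[x \backslash u] = \Delta[x \backslash u] ; (a : A[x \backslash u])$ to obtain $\Gamma ; \Delta[x \backslash u] ; (a : A[x \backslash u]) \vdash B[x \backslash u] : s_2[x \backslash u]$. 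Since sorts are invariant under substitution (clause~(v) of Definition~\ref{substitution_ecc}), we have $s_1[x \backslash u] = s_1$ and $s_2[x \backslash u] = s_2$, so $A[x \backslash u]$ and $B[x \backslash u]$ carry exactly the sorts $s_1$ and $s_2$ in the substituted context. By the well-definedness observation, the two sides of~(\ref{enum:pd_substitution}) therefore denote the same symbol.

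For part~(\ref{enum:pd_weakening}) I would argue identically, but transport the derivations with Extended Weakening rather than with substitution: starting from $\Gamma_1 ; \Gamma_2 \vdash A : s_1$ and $\Gamma_1 ; \Gamma_2 ; (a:A) \vdash B : s_2$, inserting $\Delta$ leaves both the terms and their assigned sorts untouched, yielding the same $s_1, s_2$ under $\Gamma_1 ; \Delta ; \Gamma_2$, where well-formedness of the extended contexts follows from weakening the $A$-derivation first. I expect the only genuinely delicate point in either part to be the bookkeeping for $B$: its typing context is the extended one ending in $(a:A)$, so one must apply substitution or weakening to the tail context $\Delta ; (a:A)$ (respectively $\Gamma_2 ; (a:A)$) rather than to $\Delta$ (respectively $\Gamma_2$) alone, and verify that substitution commutes with the context extension as displayed above. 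Everything else is a direct sort-invariance argument.
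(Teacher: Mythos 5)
Your proposal is correct, and it differs from the paper's proof in instructive ways. For part (i) both arguments share the same backbone --- push the two sort derivations through Lemma~\ref{substitution_lemma} and use the fact that sorts are syntactically unchanged by substitution --- but they diverge in how they conclude that the classification is preserved. The paper does a case analysis on the four values $\PP, \TP, \PT, \TT$ and invokes Lemma~\ref{substitution_in_proof}, i.e.\ preservation of propositional/proof-term status under substitution; you instead make explicit a well-definedness property of $\PD{\Gamma}{a}{A}{B}$ (via Lemma~\ref{uniqueness_of_type}) and then observe that the transported derivations exhibit literally the same sorts. Making well-definedness explicit is a genuine improvement: the definition of the classifier tacitly assumes that no term is typable both by $\Prop$ and by some $\Type_i$ under the same context, and your appeal to uniqueness of typing --- together with the fact that $\Prop$ is not $\beta$-equal to $\Type_i$ nor to any product, which ultimately rests on confluence, a fact the paper also uses silently elsewhere --- is exactly what justifies this; it also explains why the paper's cautionary counterexample (a non-propositional term becoming propositional after substitution) cannot occur here, since that phenomenon only arises for terms whose type is not itself a sort. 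For part (ii) the two routes genuinely differ: you weaken both derivations directly with Extended Weakening, handling the tail context $\Gamma_2 ; (a : A)$ and its well-formedness correctly, whereas the paper reduces (ii) to (i) with the terse remark that the variables of $\Delta$ do not occur in $\Gamma_2$, $A$ or $B$; that reduction needs extra glue (one must read the removal of the $\Delta$-entries as a vacuous substitution, or else cite Lemma~\ref{weakening_in_proof}), so your direct argument is easier to verify. What the paper's route buys is economy, reusing the proof/propositional-term machinery it develops anyway for the interpretation, while yours stays at the level of the raw typing lemmas and is self-contained.
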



\begin{proof}
  \begin{enumerate}[{(}i{)}]
  \item When $\PD{\Gamma; (x : U) ; \Delta}{a}{A}{B} = \PP$, $A$ is a proposition for $(\Gamma ; (x : U) ; \Delta)$ and $B$ is a proposition for $(\Gamma ; (x : U) ; \Delta ; (a : A))$.
    By Lemma~\ref{substitution_in_proof}, $A[x \backslash u]$ is a proposition for $(\Gamma ; \Delta[x \backslash u])$ and $B[x \backslash u]$ is also a proposition for $(\Gamma ; \Delta[x \backslash u] ; (a : A[x \backslash u]))$.
    Hence the statement holds in this case.
    When $\PD{\Gamma ; (x : U) ; \Delta}{a}{A}{B} = \TP$, $\Gamma ; \Delta[x \backslash u] \vdash A[x \backslash u] : \Type_i$ is derivable.
    The remaining case is similar.
  \item It is clearly proved by applying the result of $($\ref{enum:pd_substitution}$)$ in this lemma, since variables in $\Delta$ do not appear in $\Gamma_2$ and terms $A$ and $B$.
  \end{enumerate}
\end{proof}

\subsection{Logical symbols}

Lastly, here are some notations allowing to use other logical symbols~\cite{lambda_type}.
We shall use them to prove the adequacy of our model with respect to intuitionistic logic.

\begin{definition}\label{logical_symbol}
  \begin{eqnarray*}
    A \rightarrow B &:=& \forall x : A. B \hspace{8ex} (\text{when } x \notin fv(B)), \\
    \bot &:=& \forall P : \Prop. P, \\
    \neg A &:=& A \rightarrow \bot, \\
    A \land B &:=& \forall P : \Prop. (A \rightarrow B \rightarrow P) \rightarrow P, \\
    A \lor B &:=& \forall P : \Prop. (A \rightarrow P) \rightarrow (B \rightarrow P) \rightarrow P, \\
    \exists x:A.Q &:=& \forall P : \Prop. (\forall x : A. (Q \rightarrow P)) \rightarrow P, \\
    A \leftrightarrow B &:=& (A \rightarrow B) \land (B \rightarrow A), \\
    x =_A y &:=& \forall Q : (A \rightarrow \Prop). Q \; x \leftrightarrow Q \; y.
  \end{eqnarray*}
\end{definition}

\section{Interpretation}\label{model}
\subsection{Preparation of the interpretation}

\subsubsection{Heyting algebras}
Several interpretations of type theory have been proposed such as using $\omega$-sets~\cite{Luo} or coherent spaces~\cite{PAT}.
In this paper, we use {\em Heyting algebras}~\cite{MacLane,IntuitionisticLogic} for propositions.
Heyting algebras provide models of intuitionistic logic.
The open sets of a topological space can be given the structure of a Heyting algebra (see Lemma~\ref{top_forms_hey}), and as such provide models of intuitionistic logic too~\cite{IntuitionisticLogic}.
We give a definition of lattice and Heyting algebra as follows.
\begin{definition}[Lattices and Heyting algebras]
  Let $(A,\le)$ be a partially ordered set (i.e. reflexive, antisymmetric, and transitive).
  $(A,\le)$ is called a {\em Lattice} when any two elements $a$ and
  $b$ of $A$ have a supremum `$a \sqcup b$' and an infimum `$a \sqcap
  b$', which are called join and meet\footnote{
    We use the lattice operation symbols join `$\sqcup$' and meet
    `$\sqcap$' instead of `$\lor$' and `$\land$', since we use the
    latter as logical symbols.
  }.
  A lattice is also called a {\em complete lattice} if every subset
  $S$ of $A$ has a supremum `$\bigsqcup S$' and an infimun `$\bigsqcap S$'.
  A complete lattice has a minimum element $\mathbb{O} := \bigsqcup \varnothing$ and a maximum element $\mathbb{I} := \bigsqcap \varnothing$.
  If a (complete) lattice has an $exponential$ $operator$ $a^b$ such that
  \begin{equation*}
    x \le z^y \Leftrightarrow x \sqcap y \le z
  \end{equation*}
  holds, then we call it a (complete) {\em Heyting Algebra}.
\end{definition}
The following lemma shows that topological spaces are both
Heyting algebras and complete lattices.

\begin{lemma}\label{top_forms_hey}
  Any topological space $(X, \mathcal{O}(X))$ is a complete Heyting algebra.
\end{lemma}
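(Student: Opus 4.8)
The plan is to verify directly the three requirements in the definition: that $(\mathcal{O}(X), \subseteq)$ is a partial order, that it is a complete lattice, and that it carries an exponential operator satisfying the Heyting adjunction. The order is set inclusion, for which reflexivity, antisymmetry, and transitivity are immediate, so all the real work lies in exhibiting the lattice and the exponential.

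For the complete lattice structure, I would take the join of any family $S \subseteq \mathcal{O}(X)$ to be the union $\bigcup S$. This is open because arbitrary unions of open sets are open, and it is plainly the least upper bound with respect to $\subseteq$. The subtlety is that the meet cannot simply be the intersection, since an arbitrary intersection of open sets need not be open; instead I would set $\bigsqcap S := \mathrm{int}\big(\bigcap S\big)$, the interior of the intersection. The interior is the union of all open sets contained in $\bigcap S$, hence open, and one checks it is the greatest open set below every member of $S$: it is contained in each $U \in S$, and any open lower bound $O$ satisfies $O \subseteq \bigcap S$ and therefore $O \subseteq \mathrm{int}\big(\bigcap S\big)$. (Alternatively, once all joins are known to exist, all meets exist automatically as the join of the set of lower bounds; I would note this but prefer the explicit description via interiors, since the same device reappears in the exponential.) In particular $\mathbb{O} = \bigsqcup \varnothing = \varnothing$ and $\mathbb{I} = \bigsqcap \varnothing = \mathrm{int}(X) = X$, both open.

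For the Heyting structure, the key definition is the exponential
\[ U^V := \mathrm{int}\big((X \setminus V) \cup U\big), \]
intended as the largest open $W$ with $W \cap V \subseteq U$. I would then verify the adjunction $W \subseteq U^V \Leftrightarrow W \cap V \subseteq U$ by elementary complementation reasoning. For the forward direction, $W \subseteq U^V \subseteq (X\setminus V)\cup U$, so intersecting with $V$ gives $W \cap V \subseteq \big((X\setminus V)\cup U\big)\cap V = U \cap V \subseteq U$. Conversely, if $W \cap V \subseteq U$, then every point of $W$ lies either outside $V$ or in $U$, whence $W \subseteq (X\setminus V)\cup U$; since $W$ is open it lands inside the interior, i.e. $W \subseteq U^V$.

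The computations are all routine, so I expect the only genuine conceptual point — and the step most likely to trip up a careless argument — to be the systematic use of the interior operator. Both the meet and the exponential must be corrected by applying $\mathrm{int}$, precisely because openness is preserved by unions but not by arbitrary intersections. Keeping this correction explicit, rather than writing $\bigcap S$ or $(X\setminus V)\cup U$ directly, is exactly what keeps the whole structure well-defined inside $\mathcal{O}(X)$.
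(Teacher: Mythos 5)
Your proposal is correct and follows essentially the same route as the paper: open sets ordered by inclusion, union as join, and the interior of the intersection as meet. Your exponential $U^V := \mathrm{int}\bigl((X\setminus V)\cup U\bigr)$ is the same object as the paper's $b^a := \bigsqcup\{t \mid t \sqcap a \leq b\}$ (both are the largest open $W$ with $W \cap V \subseteq U$); you merely spell out the adjunction verification that the paper leaves implicit.
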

\begin{proof}
  Let $a \le b$ be $a \subset b$, and define each operation as follows:
  \begin{eqnarray*}
    \mathbb{I} &:=& X, \\
    \mathbb{O} &:=& \varnothing, \\
    \bigsqcup S &:=& \bigcup S, \\
    \bigsqcap S &:=& \bigsqcup\{t \mid \forall s \in S, t \leq s\} =
                     \biggl(\bigcap S \biggr)^\circ \\
    && \hspace{8ex} (where \; A^\circ \; is \; the \; interior \; of \; A), \\
    b^a &:=& \bigsqcup\{t \mid t \sqcap a \leq b\}.
  \end{eqnarray*}
\end{proof}

The following lemma states well known properties of complete Heyting algebras.
\begin{lemma}\label{heyting_conditions}
  Let $(A,\leq)$ be a complete Heyting algebra. Then the following conditions hold.
  \begin{eqnarray}
    (x^b)^a &=& x^{a \sqcap b}, \label{eq:powerprod}\\
    \bigsqcap \{t^{t^a} \; | \; t \in A\} &=& a, \label{eq:meetpower}\\
    x^a \sqcap x^b &=& x^{a \sqcup b}, \label{eq:prodpoweror}\\
    \bigsqcap \{a^t \; | \; t \in S\} &=& a^{\bigsqcup S}, \label{eq:meetpoweror}\\
    \bigsqcap \varnothing &=& 1, \label{eq:whole}\\
    x^1 &=& x, \label{eq:exp_unit_1}\\
    y &\leq& y^x, \label{eq:powerle}\\
    x \leq y &\Rightarrow& y^x = 1 \label{eq:le_whole}\\
    y \leq x \mbox{ and } x \not\leq y &\Rightarrow& y^x = y \\
    x \sqcap y^x &\leq& y, \label{eq:impl}\\
    x^y \sqcap y ^ x = 1 &\Rightarrow& x = y, \label{eq:power_eqcond}\\
    \bigsqcap S = 1 &\Rightarrow& \forall a \in S, a = 1. \label{eq:topcond} \\
    \bigl( \bigsqcap \{f(t) \mid t \in A \} \bigr)^x &=& \bigsqcap \{f(t)^x \mid t \in A \} \label{eq:to_inner_exp}
  \end{eqnarray}
\end{lemma}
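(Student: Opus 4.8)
The plan is to organize every line of the lemma around the single defining adjunction $x \sqcap y \le z \Leftrightarrow x \le z^y$, which says that $(-)^y$ is right adjoint to $(-) \sqcap y$. Consequently I would prove each identity $P = Q$ by testing against all $z$: show $z \le P \Leftrightarrow z \le Q$ for every $z$ and conclude by antisymmetry. Two structural facts fall out of the adjunction at once and are used throughout: the right adjoint $(-)^x$ preserves all meets, and the left adjoint $(-) \sqcap x$ preserves all joins, i.e. the infinite distributive law $x \sqcap \bigsqcup S = \bigsqcup\{x \sqcap s \mid s \in S\}$. The identity \eqref{eq:whole} is nothing but the definition $1 = \bigsqcap \varnothing$, and \eqref{eq:topcond} is immediate since $\bigsqcap S$ is a lower bound of $S$, forcing $1 \le a \le 1$ for each $a \in S$.

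Next I would clear the purely adjunction-driven equations. Equation \eqref{eq:to_inner_exp} is exactly preservation of meets by $(-)^x$. Equation \eqref{eq:meetpoweror} follows by transposing: $z \le a^{\bigsqcup S} \Leftrightarrow z \sqcap \bigsqcup S \le a \Leftrightarrow \bigsqcup_{t \in S}(z \sqcap t) \le a \Leftrightarrow (\forall t)\, z \le a^t \Leftrightarrow z \le \bigsqcap_{t \in S} a^t$, the middle step being distributivity; \eqref{eq:prodpoweror} is then the instance $S = \{a,b\}$. Currying \eqref{eq:powerprod} comes from applying the adjunction twice: $z \le (x^b)^a \Leftrightarrow z \sqcap a \le x^b \Leftrightarrow z \sqcap a \sqcap b \le x \Leftrightarrow z \le x^{a \sqcap b}$. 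Setting the exponent to $1$ gives \eqref{eq:exp_unit_1} via $z \le x^1 \Leftrightarrow z \sqcap 1 \le x \Leftrightarrow z \le x$. Finally \eqref{eq:powerle} and \eqref{eq:le_whole} are single transpositions (namely $y \sqcap x \le y$, resp. $1 \sqcap x \le y \Leftrightarrow x \le y$), and \eqref{eq:impl} is the counit, obtained by transposing $y^x \le y^x$ into $y^x \sqcap x \le y$.

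With these in hand the two remaining labelled identities are short. For \eqref{eq:meetpower} I would pinch from both sides: for every $t$, transposing $a \le t^{t^a}$ yields $a \sqcap t^a \le t$, which is the instance $x := a$, $y := t$ of \eqref{eq:impl}, so $a$ is a lower bound of the family; and the term $t := a$ equals $a$, since $a^a = 1$ by \eqref{eq:le_whole} and $a^1 = a$ by \eqref{eq:exp_unit_1}, whence $\bigsqcap\{t^{t^a} \mid t \in A\} \le a$ as well. For \eqref{eq:power_eqcond}, \eqref{eq:topcond} forces $x^y = y^x = 1$, and transposing each unit ($1 \sqcap y \le x$ and $1 \sqcap x \le y$) gives $y \le x$ and $x \le y$, i.e. $x = y$.

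The step I expect to be the genuine obstacle is the single unlabelled line $y \le x \text{ and } x \not\le y \Rightarrow y^x = y$. Here the forward inequality $y \le y^x$ is just \eqref{eq:powerle}, but the reverse inequality $y^x \le y$ does \emph{not} transpose out of the adjunction: unfolding $y^x = \bigsqcup\{t \mid t \sqcap x \le y\}$, it asks that every $t$ with $t \sqcap x \le y$ already satisfy $t \le y$, a reflection property that the Heyting structure alone does not supply (it fails, for instance, in the four-element Boolean algebra with $y = \mathbb{O}$ and $x$ an atom, where $y^x = \lnot x \ne \mathbb{O}$). I would therefore expect this clause to rely on the particular complete Heyting algebras the model actually uses — it holds automatically when the order is total, and more generally whenever no element strictly above $y$ meets $x$ below $y$ — so the crux is isolating exactly which property of the admissible spaces forces this reflection, rather than any further manipulation of the exponential.
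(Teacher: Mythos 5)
The paper states this lemma without any proof (it is offered as a list of ``well known properties''), so there is no argument of the paper to compare yours against; judged on its own merits, your adjunction-driven treatment of the twelve labelled identities is correct. Identities \eqref{eq:whole} and \eqref{eq:topcond} are definitional/order-theoretic as you say; \eqref{eq:to_inner_exp} and \eqref{eq:meetpoweror} are exactly preservation of meets by the right adjoint $(-)^x$ and the transposition-plus-distributivity argument (the infinite distributive law itself being available because $(-)\sqcap x$ is a left adjoint); \eqref{eq:prodpoweror} is the two-element instance; \eqref{eq:powerprod}, \eqref{eq:exp_unit_1}, \eqref{eq:powerle}, \eqref{eq:le_whole} and \eqref{eq:impl} are single or double transpositions; and your pinching argument for \eqref{eq:meetpower} (lower bound from \eqref{eq:impl}, the member $t=a$ collapsing to $a$ by \eqref{eq:le_whole} and \eqref{eq:exp_unit_1}) together with the derivation of \eqref{eq:power_eqcond} are sound and non-circular.

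Your diagnosis of the unlabelled ninth line is also correct, and it is the most valuable part of your write-up: the implication ``$y \leq x$ and $x \not\leq y \Rightarrow y^x = y$'' is false in a general complete Heyting algebra, and your counterexample is valid (in the four-element Boolean algebra $\{\mathbb{O}, a, b, \mathbb{I}\}$ with $a \sqcap b = \mathbb{O}$, take $y = \mathbb{O}$, $x = a$: then $y^x = b \neq \mathbb{O}$). So this is an error in the paper's statement, not a gap in your proof. One correction to your closing speculation, however: the clause is \emph{not} rescued by the class of spaces the paper actually uses. Well behaved Alexandroff spaces need not satisfy it; the paper's own four-point space of Section 4, $X = \{b,l,r,t\}$ with $\mathcal{O}(X) = \{\varnothing, \alpha, \beta, \gamma, \delta, X\}$, violates it, since $\alpha = \{b\} \subsetneq \beta = \{b,l\}$ yet $\alpha^\beta = \bigl((X\setminus\beta)\cup\alpha\bigr)^\circ = \{b,r\} = \gamma \neq \alpha$. (Several entries of Table 3, such as $\alpha^\beta$ and $\beta^\gamma$, appear to have been computed using this false rule; the section's qualitative conclusion survives, since the correct interpretation of the linearity axiom is $\delta$, which still excludes $t$.) The clause does hold when the order is total, which is why Table 2 is consistent with it. Since the line carries no label and is never invoked in any of the paper's proofs, the damage is contained, but it should be deleted or restricted to linearly ordered algebras.
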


\subsubsection{Alexandroff spaces}
In our interpretation, a proof term is interpreted into an element of an open set.
In our previous work~\cite{own}, all proof terms were interpreted into a single point, the {\em reference point}.
Soundness then required this reference point to be included in the interpretation of all propositions in the context, which forced us to restrict the type system.
In this paper, we make the  interpretation of proofs a function of the context, which allows us to overcome this restriction.

As a first step, we discuss Alexandroff spaces~\cite{alexandroff}.
\begin{definition}[Alexandroff Space]
  A topological space $(X, \mathcal{O}(X))$ is an Alexandroff space iff the intersection of any tribe of open set is also an open set, i.e.
  \begin{equation*}
    \bigcap S \in \mathcal{O}(X) \quad \mathrm{for \; any \;} S \subset \mathcal{O}(X)
  \end{equation*}
\end{definition}

The definition of Alexandroff space can also be given by the following equivalent statement.

\begin{lemma}[Minimal Neighborhood]
  A topological space $(X, \mathcal{O}(X))$ is an Alexandroff space iff any point has
  a minimal neighborhood.  The minimal neighborhood of the point $x$ is denoted by $\downarrow x$.
\end{lemma}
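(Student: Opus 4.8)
The plan is to prove the two directions of the equivalence separately, using the Alexandroff-space definition (closure of $\mathcal{O}(X)$ under arbitrary intersections) as the hypothesis in one direction and the existence of minimal neighborhoods in the other.

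First I would prove that if $(X,\mathcal{O}(X))$ is an Alexandroff space, then every point has a minimal neighborhood. Given a point $x \in X$, consider the collection $S_x := \{U \in \mathcal{O}(X) \mid x \in U\}$ of all open neighborhoods of $x$. By the Alexandroff property, the intersection $\bigcap S_x$ is again open. Since $x$ belongs to every member of $S_x$, we have $x \in \bigcap S_x$, so $\bigcap S_x$ is an open neighborhood of $x$. By construction it is contained in every open neighborhood of $x$, hence it is the minimal one; I would set $\downarrow x := \bigcap S_x$.

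Conversely, I would assume that every point has a minimal neighborhood $\downarrow x$ and show that $\mathcal{O}(X)$ is closed under arbitrary intersections. Let $S \subset \mathcal{O}(X)$ be arbitrary and put $V := \bigcap S$. To show $V$ is open, it suffices to show it is a neighborhood of each of its points, i.e.\ that for every $x \in V$ there is an open $W$ with $x \in W \subseteq V$. For such an $x$, the minimal neighborhood $\downarrow x$ is open and contained in every open set containing $x$; since $x \in U$ for each $U \in S$ and each such $U$ is open, we get $\downarrow x \subseteq U$ for all $U \in S$, hence $\downarrow x \subseteq \bigcap S = V$. Taking $W := \downarrow x$ exhibits $V$ as open at $x$. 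As $x \in V$ was arbitrary, $V$ is a union of open sets and therefore open.

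I do not expect a serious obstacle here; the argument is a routine unfolding of definitions. The only point requiring a little care is the ``neighborhood at each point implies open'' step in the converse direction, which relies on the standard fact that an open set equals the union of the open neighborhoods of its points (equivalently, that being open is a local property). I would make sure to state explicitly that $\downarrow x$ is open and that minimality gives containment in \emph{every} open neighborhood, since both facts are used to conclude $\downarrow x \subseteq \bigcap S$.
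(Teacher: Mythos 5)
Your proof is correct and is the standard argument: the paper itself states this lemma without proof, treating it as a known characterization of Alexandroff spaces, and your two directions (taking $\downarrow x := \bigcap\{U \in \mathcal{O}(X) \mid x \in U\}$ one way, and writing $\bigcap S$ as the union $\bigcup_{x \in \bigcap S} \downarrow x$ of open minimal neighborhoods the other way) are exactly the argument the paper implicitly relies on. The only cosmetic point: since you read ``minimal'' as ``least,'' you could note that these coincide here --- if $N$ is merely minimal and $U$ is any open neighborhood of $x$, then $N \cap U$ is an open neighborhood contained in $N$, so minimality forces $N \subseteq U$ --- which is consistent with the paper's later identification $\downarrow x = \{t \in X \mid t \leq x\}$.
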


These are the basic definitions for Alexandroff spaces.
However, to prove our soundness theorem later, we need more conditions.
We state those as {\em well behaved} Alexandroff spaces.

\begin{definition}[Well Behaved Alexandroff Space]\label{wbats}
  An Alexandroff space $(X, \mathcal{O}(X))$ is well behaved if the following conditions hold.
  \begin{itemize}
  \item For any finite subset $\{t_1, t_2, \cdots, t_n\}$ of $X$, we can choose a point $t \in X$ such that
    \begin{equation*}
      \downarrow t_1 \; \cap \; \downarrow t_2 \; \cap \cdots \cap \; \downarrow t_n = \; \downarrow t
    \end{equation*}
    holds.
    We write such a point $t$ as $\inf\{t_1, t_2, \cdots , t_n\}$.
  \item There exists an element $\bot_X \in X$ such that any inhabited open set contains it, i.e.
    \begin{equation*}
      \forall O \in \mathcal{O}(X), O \mathrm{\; is \; inhabited} \Rightarrow \bot_X \in O.
    \end{equation*}
  \end{itemize}
\end{definition}

To clarify the use of the notation of the minimal neighborhood $\downarrow x$ and $\bot_X$, let us discuss a preordered (i.e. reflexivity and transitivity hold) set generated from an Alexandroff space.
Let $\leq$ be the relation on $X$ defined as follows.
\begin{equation*}
  a \leq b \quad :\Leftrightarrow \quad \forall O \in \mathcal{O}(X), b \in O \Rightarrow a \in O
\end{equation*}
The relation $\leq$ is a preorder.
Moreover, if this Alexandroff space is a $T_0$ space, then the
generated preorder $(X, \leq)$ becomes an order (the
antisymmetry condition holds).
If the relation $(X, \leq)$ generated from an Alexandroff space forms an ordered set then the followings holds.
\begin{eqnarray*}
  \downarrow x &=& \{ t \in X \mid t \leq x\} \\
  \bot_X &=& \min X
\end{eqnarray*}

Using an ordered Alexandroff space for $X$ allows us to give multiple interpretations of proofs in the typing context, whereas in our previous work~\cite{own} we used a fixed point $p \in X$. This fixed point was required to satisfy a {\em point condition}, which was no other than the existence of a minimal neighborhood, satisfied by every point in an Alexandroff space.

\subsubsection{Dependent function and Universes}

\begin{definition}[Dependent Function]
  Let $A$ be a set, and $B(a)$ be a set with parameter $a \in A$.
  We define the set of dependent functions as follows
   \begin{equation*}
     \prod_{a \in A}B(a) := \{f \subset \coprod_{a \in A}B(a) \; | \; \forall a \in A, \exists ! b \in B(a), (a, b) \in f\}
   \end{equation*}
   that is the set of functions whose graphs are included in
   \begin{equation*}
     \coprod_{a \in A} B(a) := \{(x,y) \in A \times \bigcup_{a \in A}B(a) \; | \; y \in B(x) \}.
  \end{equation*}
\end{definition}

Next, we introduce Grothendieck universes, which are closed under dependent-function construction, and which we will use to interprete the sort $\Type_i$.
\begin{definition}[Grothendieck Universe]
  We define a $i$-th Grothendieck Universe $\U_i$ as
  \begin{equation*}
    \U_i := V_{\lambda_i},
  \end{equation*}
  where a set $V_\alpha$, with an ordinal number $\alpha$, is recursively defined as follows
  \begin{eqnarray*}
    V_0 &=& \varnothing, \\
    V_{\alpha + 1} &=& \mathcal{P}(V_\alpha), \\
    V_\alpha &=& \bigcup_{\beta < \alpha} V_\beta \quad \mbox{(when $\alpha$ is a limit ordinal)},
  \end{eqnarray*}
  and $\lambda_i$ is the $i$-th inaccessible cardinal.
\end{definition}

The class of all universes is well founded for the relation $\in$.
We write $\U_i$ as the $i$-th universe.
Note that $\U_i$ is so large that it cannot be constructed in ZFC without assuming an inaccessible cardinal.
The following lemma is necessary when proving soundness.

\begin{lemma}\label{univ_close}
  The followings hold for any $i$.
  \begin{enumerate}[{(}i{)}]
  \item $A \in \U_i$ implies $A \subset \U_i$.
  \item $A \in \U_i$ and $B_\alpha \in \U_i$ for all $\alpha \in A$ imply $\displaystyle\prod_{\alpha \in A}B_\alpha \in \U_i$. \label{enum:univ_prod}
  \item $x \in \U_i$ and $y \subset x$ imply $y \in \U_i$ \label{enum:univ_transitive}
  \item $\U_i \subset \U_{i+1}$. \label{enum:univ_subset}
  \end{enumerate}
\end{lemma}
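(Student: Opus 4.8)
The plan is to work throughout with the rank function and the standard characterisation $V_\lambda = \{x : \mathrm{rank}(x) < \lambda\}$, valid for any limit ordinal $\lambda$, together with the two defining properties of the $i$-th inaccessible cardinal $\lambda_i$: it is regular, and it is a strong limit (so $\kappa < \lambda_i$ implies $2^\kappa < \lambda_i$). Since $\lambda_i$ is in particular a limit ordinal, an element $x$ lies in $\U_i = V_{\lambda_i}$ exactly when $\mathrm{rank}(x) < \lambda_i$. I would first dispatch the three routine items (i), (iii) and (iv), and then treat (ii), which is where inaccessibility is genuinely used.

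For (i), I would invoke the transitivity of every $V_\alpha$, proved by a routine transfinite induction: the successor step uses $V_\alpha \subseteq V_{\alpha+1} = \mathcal{P}(V_\alpha)$, itself a consequence of transitivity of $V_\alpha$, and the limit step is handled by taking unions. For (iii), the key observation is that $y \subseteq x$ forces $\mathrm{rank}(y) \le \mathrm{rank}(x)$, so $x \in \U_i$ gives $\mathrm{rank}(y) \le \mathrm{rank}(x) < \lambda_i$ and hence $y \in \U_i$. For (iv), I would use monotonicity $\alpha \le \beta \Rightarrow V_\alpha \subseteq V_\beta$ together with $\lambda_i < \lambda_{i+1}$, which holds because the inaccessible cardinals form a strictly increasing sequence. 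Along the way I will record the auxiliary fact that $V_\lambda$ is closed under power sets for any limit $\lambda$, since $\mathrm{rank}(\mathcal{P}(x)) \le \mathrm{rank}(x) + 1$; this needs no inaccessibility and will be reused in (ii).

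The heart of the lemma is (ii). Unfolding the definitions, every $f \in \prod_{\alpha \in A} B_\alpha$ is a subset of $\coprod_{\alpha \in A} B_\alpha \subseteq A \times \bigcup_{\alpha \in A} B_\alpha$, so $\prod_{\alpha \in A} B_\alpha \subseteq \mathcal{P}\bigl(A \times \bigcup_{\alpha \in A} B_\alpha\bigr)$. By (iii) and the power-set closure just mentioned, it therefore suffices to show $A \times \bigcup_{\alpha \in A} B_\alpha \in \U_i$, and for that it suffices to bound the rank of $\bigcup_{\alpha \in A} B_\alpha$ below $\lambda_i$ (the Cartesian product costs only finitely many extra ranks through the Kuratowski encoding of pairs, and $\mathrm{rank}(A) < \lambda_i$ is automatic). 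To bound this rank I would argue as follows: since $A \in V_{\lambda_i}$ and $\lambda_i$ is a strong limit, $|A| \le |V_{\mathrm{rank}(A)}| < \lambda_i$; thus $\{\mathrm{rank}(B_\alpha) : \alpha \in A\}$ is a set of fewer than $\lambda_i$ ordinals, each below $\lambda_i$.

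This is the one place where regularity is indispensable: the supremum of fewer than $\lambda_i$ ordinals all below the regular cardinal $\lambda_i$ is again below $\lambda_i$. Writing $\gamma$ for that supremum, each $B_\alpha \subseteq V_\gamma$ and hence $\bigcup_{\alpha \in A} B_\alpha \subseteq V_\gamma$ has rank at most $\gamma < \lambda_i$, so it lies in $\U_i$; the chain of product and power-set closures then delivers $\prod_{\alpha \in A} B_\alpha \in \U_i$. I expect the main obstacle to be precisely this rank bound, namely making rigorous that $|A| < \lambda_i$ (strong limit) and that the resulting supremum stays below $\lambda_i$ (regularity); the remaining manipulations are bookkeeping about the ranks of pairs, unions and power sets.
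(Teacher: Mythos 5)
Your proof is correct. Note that the paper states Lemma~\ref{univ_close} without any proof at all, treating these as standard facts about Grothendieck universes, so there is no paper argument to compare against; your rank-based development is a legitimate and standard way of filling this gap. Items (i), (iii) and (iv) are handled by the expected transfinite induction, rank monotonicity, and monotonicity of the $V_\alpha$ hierarchy. For (ii) you correctly unfold the paper's graph-based definition of dependent functions, so that $\prod_{\alpha \in A} B_\alpha \subseteq \mathcal{P}\bigl(A \times \bigcup_{\alpha \in A} B_\alpha\bigr)$, and you correctly isolate the two uses of inaccessibility: the strong limit property to obtain $|A| < \lambda_i$, and regularity to keep $\sup_{\alpha \in A} \mathrm{rank}(B_\alpha)$ strictly below $\lambda_i$. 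One small refinement: the auxiliary bound $|V_\gamma| < \lambda_i$ for $\gamma < \lambda_i$, which you attribute to the strong limit property alone, in fact also requires regularity at the limit stages of its induction (a union of $|\delta| < \lambda_i$ many sets each of cardinality $< \lambda_i$ stays below $\lambda_i$ only because $\lambda_i$ is regular); since you invoke regularity explicitly elsewhere, this is a presentational imprecision rather than a gap.
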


\subsection{Interpretation of the judgments}
In this model, a type $T$ is interpreted into a set $\jump{T}$, and a
context $x_1 : T_1 ; x_2 : T_2 ; \cdots ; x_n : T_n$  is interpreted
into a dependent tuple; in particular, when there are no dependent
types in the context, it is a tuple in $\jump{T_1} \times \jump{T_2}
\times \cdots \times \jump{T_n}$.

First, we define the interpretation of application and PI-Type.
The interpretation of application depends on whether the argument is a proof term or not, and may be undefined.
We shall later prove that every time we use it, we actually have $\app_{\Gamma,v}(f,a) = f(a)$.
\def\calA{{\cal A}}
\def\calB{{\cal B}}
\begin{definition}\label{app_prod_int}
  \begin{eqnarray*}
    \app_{\Gamma,v}(f, a) &:=&  \begin{cases}
      f(\bot_X) \\
      \quad ( v \mbox{ is a proof term for } \Gamma \\
      \quad\quad \mbox{and } f \mbox{ is a function whose domain contains } a \mbox{ and } \bot_X) \\
      \\
      f(a) \\
      \quad ( v \mbox{ is not a proof term for } \Gamma \\
      \quad\quad \mbox{and } f \mbox{ is a function whose domain contains } a) \\
      \\
      \undefined \\
      \quad (\mbox{otherwise})
    \end{cases} \\
    \prd_\PTX(\calA, \{\calB(\alpha)\}_{\alpha \in \calA}) &:=&
    \begin{cases}
      \biggl(\bigsqcap\{\calB(\alpha) \mid \alpha\in \calA\}\biggr)^\calA \\
      \quad (\mbox{when } \PTX = \PP) \\
      \\
      \bigsqcap\{\calB(\alpha) \mid \alpha \in \calA \} \\
      \quad (\mbox{when } \PTX = \TP) \\
      \\
      \{f \in \prod_{\alpha \in \calA} \calB(\alpha) \; | \; f \mbox{ is a constant function}\} \\
      \quad (\mbox{when } \PTX = \PT) \\
      \\
      \prod_{\alpha \in \calA} \calB(\alpha) \\
      \quad (\mbox{when } \PTX = \TT) \\
    \end{cases}
  \end{eqnarray*}
\end{definition}

Now, we define the (partial) interpretations of contexts $\jump{\mathrm{-}}$ and judgments $\jump{\mathrm{-} \vdash \mathrm{-}}$. The former is by induction on the length of the context, and the latter by induction on the structure of terms. 
Note that the interpretation of judgments does not rely on the interpretation of contexts.

\begin{definition}[interpretation]\label{interpretation}
  Let $(X, \mathcal{O}(X)) \in \mathcal{U}_0$ be a well behaved Alexandroff space.
  \begin{enumerate}[{(}i{)}]
  \item Definition of the interpretation of a context $\jump{\Gamma}$ \\
    \begin{eqnarray*}
      \jump{[]} &:=& \{()\} \\
      \jump{\Gamma ; (x:A)} &:=&
      \{(\gamma, \alpha) \mid \gamma \in \jump{\Gamma}  \; \mathrm{and} \; \alpha \in \jump{\Gamma \vdash A}(\gamma) \} \\
      &=& \coprod_{\gamma \in \jump{\Gamma}} \jump{\Gamma \vdash A}(\gamma)
    \end{eqnarray*}
    where $()$ represents the empty sequence.
  \item Definition of the interpretation of a judgment $\jump{\Gamma \vdash t}$ \\
    If $t$ is a proof term for $\Gamma = (x_1 : T_1) ; \cdots ; (x_n : T_n)$, then
    \begin{equation*}
      \jump{\Gamma \vdash t}(\gamma) := \floor{\gamma}
    \end{equation*}
    where
    \begin{equation*}
      \floor{\gamma_1, \gamma_2, \cdots, \gamma_n} := \;  \inf \{ \gamma_i \mid x_i \mbox{ is a proof under } \Gamma  \}.
    \end{equation*}
    Otherwise, if $\Gamma \vdash t : T$ is derivable and $T$ is not a
    proposition for $\Gamma$, then
    \begin{align*}
      \begin{array}{rcl}
          \jump{\Gamma \vdash \Type_i}(\gamma) &:=& \U_i \\
          \jump{\Gamma \vdash \Prop}(\gamma) &:=& \mathcal{O}(X) \\
          \multicolumn{3}{l}{
            \jump{(x_1 : T_1) ; \cdots ; (x_n : T_n) \vdash x_i}(\gamma_1, \cdots, \gamma_n) := \gamma_i
          } \\
          \jump{\Gamma \vdash \forall x:A.B}(\gamma) &:=& \prd_\PTX \left(\mathcal{A}, \{\mathcal{B}(\alpha)\}_{\alpha \in \mathcal{A}} \right) \\
          && \quad \mbox{where} \\
          && \qquad \begin{cases}
            \PTX := \PD{\Gamma}{x}{A}{B} \\
            \mathcal{A} := \jump{\Gamma \vdash A}(\gamma) \\
            \mathcal{B}(\alpha) := \jump{\Gamma ; (x : A) \vdash B}(\gamma, \alpha)
          \end{cases} \\
          \jump{\Gamma \vdash \lambda x:A.t}(\gamma) &:=& \bigl\{\bigl(\alpha, \jump{\Gamma ; (x : A) \vdash t}(\gamma, \alpha)\bigr) \; | \; \alpha \in \jump{\Gamma \vdash A}(\gamma) \Bigr\} \\
          \jump{\Gamma \vdash u \; v}(\gamma) &:=& \app_{\Gamma, v}(\jump{\Gamma \vdash u}(\gamma), \jump{\Gamma \vdash v}(\gamma))
      \end{array}
    \end{align*}
  \end{enumerate}
  For simplicity, we write $\jump{T}$ for $\jump{[]\vdash T}()$, when the context is empty.
\end{definition}


When defined, the interpretation of a context $\jump{\Gamma}$ is a set
of sequences $\gamma$ whose length is the length of $\Gamma$, and
$\jump{\Gamma \vdash t}$ is a function whose domain is
$\jump{\Gamma}$, and which returns some set $\jump{\Gamma \vdash t}(\gamma)$
--- soundness will tell us that if $\Gamma \vdash t : T$, then
$\jump{\Gamma \vdash t}(\gamma) \in \jump{\Gamma \vdash T}(\gamma)$.

Concerning Definitions~\ref{app_prod_int} and \ref{interpretation},
most cases are similar to Werner's interpretation, and we explained
$\app_{\Gamma,v}$ above, so we only explain the interpretations of
proof terms and PI-Types $\forall x : A.B$.

The interpretation of a proof term $\jump{\Gamma \vdash p}(\gamma)$ is the minimum element of the set of proof values in $\gamma$.
Since each of these values belong to the interpretations of propositions in the context, which are open sets in our Alexandroff space, this minimum element belongs to all of them. This will allow us to prove that any proof variable belongs to the interpretation of its type, which is key to the soundness theorem.

$\prd_\PTX$ has four cases, according to $\PTX = \PD{\Gamma}{x}{A}{B}$.
When $\PTX = \PP$, we use the Heyting algebra representation of this implication.
If $x$ does not appear in B, the interpretation of $\jump{\Gamma \vdash \forall x : A.B}$ is $\calB^\calA$, which represents the logical implication $A \Rightarrow B$, as will be proved in Lemma~\ref{last_cor}. If $x$ appears in $B$, we still have the same meaning, since $\calB(\alpha)$ does not depend on $\alpha$, as will be proved in Lemma~\ref{interpretation_constant}. This definition also works if $\calA$ is empty, as the empty meet is $X$, and $X^\emptyset$ is $X$ again (the top element of the lattice). In our previous work, $\alpha$ was required to be the (fixed) interpretation of a proof term, meaning that we could not interprete the case where $\calA$ was not empty, but did not contain the reference point used for proof terms. Here we do not have such a problem, as the interpretation of proof terms is a function of the context; thanks to the interpretation with well behaved Alexandroff spaces, there is always a value small enough to serve as proof term.

When $\PTX = \TP$, the interpretation of $\jump{\Gamma \vdash \forall x : A.B}$ represents universal quantification, and again we use the infinite meet operator of the complete Heyting algebra to express it.

When $\PTX = \PT$, the interpretation of $\jump{\Gamma \vdash \forall x : A.B}$ becomes a set theoretical constant function.
Functions whose argument are proofs should be constant functions since our model is proof-irrelevant.

In the last case, when $\PTX = \TT$, the representation becomes a set theoretical dependent function.

As soon as one component is $\undefined$ the whole interpretation is $\undefined$.
Thanks to Corollary~\ref{last_cor} which is a consequence of the Soundness Theorem~\ref{soundness}, $\undefined$ never appears, and implication and application can be defined in a straightforward way.
\begin{corollary}\label{last_cor}
  \
  \begin{itemize}
  \item If $\Gamma \vdash t$ is well typed, then $\jump{\Gamma \vdash t}$ is a total function whose domain is $\jump{\Gamma}$.
  \item If $\PD{\Gamma}{x}{A}{B}=\PP$ and $\jump{\Gamma \vdash A}(\gamma) \neq \varnothing$, then
    \begin{equation*}
      \jump{\Gamma \vdash \forall x : A.B}(\gamma) = {\jump{\Gamma ; (x : A) \vdash B}(\gamma, \alpha)}^{\jump{\Gamma \vdash A}(\gamma)}
    \end{equation*}
    holds for any $\alpha \in \jump{\Gamma \vdash A}(\gamma)$.
  \item If $\PD{\Gamma}{x}{A}{B}=\PP$ and $\jump{\Gamma \vdash A}(\gamma) = \varnothing$, then
    \begin{equation*}
      \jump{\Gamma \vdash \forall x : A.B}(\gamma) = X
    \end{equation*}
    holds.
  \item If $\Gamma \vdash t_1 \; t_2$ is well typed and $t_1$ is not a proof term for $\Gamma$, then $\jump{\Gamma \vdash t_1}(\gamma)$ is a function whose domain contains $\jump{\Gamma \vdash t_2}(\gamma)$ and
    \begin{equation*}
      \jump{\Gamma \vdash t_1 \; t_2}(\gamma) = \jump{\Gamma \vdash t_1}(\gamma)\biggl(\jump{\Gamma \vdash t_2}(\gamma)\biggr)
    \end{equation*}
    holds.
  \end{itemize}
\end{corollary}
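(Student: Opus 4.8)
The plan is to read off every item from the Soundness Theorem~\ref{soundness}, after unfolding Definitions~\ref{app_prod_int} and~\ref{interpretation}, and then to discharge the residual obligations with the Heyting-algebra identities of Lemma~\ref{heyting_conditions}, the stability lemmas for proof terms, and the two defining properties of well behaved Alexandroff spaces. The first item is essentially a restatement of soundness: ``$\Gamma \vdash t$ well typed'' means $\Gamma \vdash t : T$ for some $T$, so for every $\gamma \in \jump{\Gamma}$ soundness guarantees that $\jump{\Gamma \vdash t}(\gamma)$ is defined (no $\undefined$ occurs) and lies in $\jump{\Gamma \vdash T}(\gamma)$; hence $\jump{\Gamma \vdash t}$ is a total function on $\jump{\Gamma}$.

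For the two $\PP$ cases I would unfold $\prd_\PP$, so that $\jump{\Gamma \vdash \forall x : A.B}(\gamma) = \bigl(\bigsqcap\{\calB(\alpha) \mid \alpha \in \calA\}\bigr)^\calA$ with $\calA = \jump{\Gamma \vdash A}(\gamma)$ and $\calB(\alpha) = \jump{\Gamma ; (x : A) \vdash B}(\gamma, \alpha)$; soundness makes both $\calA$ and each $\calB(\alpha)$ open sets, i.e. genuine elements of the Heyting algebra. When $\calA = \varnothing$ (third item), the base meet is the empty meet $\bigsqcap \varnothing = X$ by \eqref{eq:whole}, and raising to the empty exponent gives $X^\varnothing = X$ by \eqref{eq:le_whole} (as $\varnothing \le X$). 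When $\calA \neq \varnothing$ (second item) I would invoke the constancy of $\calB$ in the $\PP$ case (Lemma~\ref{interpretation_constant}): since $\calB(\alpha)$ is independent of $\alpha$, the set $\{\calB(\alpha) \mid \alpha \in \calA\}$ is the singleton $\{\calB(\alpha)\}$, its meet is $\calB(\alpha)$, and the interpretation collapses to $\calB(\alpha)^\calA$ for any chosen $\alpha \in \calA$.

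The fourth item is the crux, and splits into two tasks. First I must show that $t_1\,t_2$ is not a proof term, so that its interpretation is given by the $\app$ clause rather than by $\floor{\gamma}$. From $t_1$ not a proof term and Lemma~\ref{uniqueness_of_type}, the type $\forall x : A.B$ of $t_1$ is not a proposition, so by inversion of (PI-Type) its body $B$ is not a proposition for $\Gamma ; (x : A)$; hence, choosing a fresh $y$, the term $t_1\,y$ is not a proof term for $\Gamma ; (y : A)$, and since $t_1\,t_2 = (t_1\,y)[y \backslash t_2]$, Lemma~\ref{substitution_in_proof}(ii) yields that $t_1\,t_2$ is not a proof term for $\Gamma$. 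Second, I evaluate $\app_{\Gamma, t_2}$: soundness places $\jump{\Gamma \vdash t_1}(\gamma)$ in $\jump{\Gamma \vdash \forall x:A.B}(\gamma)$, which (as $s_2 \neq \Prop$ forces $\PD{\Gamma}{x}{A}{B} \in \{\PT, \TT\}$) is a set of functions with domain $\calA$, and it places $\jump{\Gamma \vdash t_2}(\gamma)$ in $\calA$; so $\jump{\Gamma \vdash t_1}(\gamma)$ is a function whose domain contains $\jump{\Gamma \vdash t_2}(\gamma)$. A final case split matches the definition of $\app$: if $t_2$ is not a proof term, the second clause gives $\jump{\Gamma \vdash t_1}(\gamma)\bigl(\jump{\Gamma \vdash t_2}(\gamma)\bigr)$ directly; if $t_2$ is a proof term, then $A$ is a proposition (Lemma~\ref{proposition_proof_invariant}(i) via uniqueness of typing), so $\PD{\Gamma}{x}{A}{B} = \PT$ and $\jump{\Gamma \vdash t_1}(\gamma)$ is a constant function, while $\calA$ is an inhabited open set and therefore contains $\bot_X$ by well behavedness, so the first clause gives $\jump{\Gamma \vdash t_1}(\gamma)(\bot_X) = \jump{\Gamma \vdash t_1}(\gamma)\bigl(\jump{\Gamma \vdash t_2}(\gamma)\bigr)$ by constancy.

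I expect the main obstacle to be this fourth item, for two reasons. The delicate point in showing $t_1\,t_2$ is not a proof term is that proposition-hood is \emph{not} stable under substitution (as the counterexample after Lemma~\ref{substitution_in_proof} shows), so the argument must be routed through proof terms, for which Lemma~\ref{substitution_in_proof}(ii) does give stability in both directions. The delicate point in evaluating $\app$ is reconciling the $f(\bot_X)$ branch with the desired value $f(\jump{\Gamma \vdash t_2}(\gamma))$: this works only because proof irrelevance makes $\jump{\Gamma \vdash t_1}(\gamma)$ constant in the $\PT$ case, and because the $\bot_X$ axiom of well behaved Alexandroff spaces guarantees that $\bot_X$ lies in the (inhabited) domain.
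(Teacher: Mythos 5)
Your proposal is correct and takes essentially the same route as the paper, which offers no separate proof of this corollary beyond declaring it a consequence of the Soundness Theorem, and which justifies the two $\Prop$--$\Prop$ items in the surrounding text exactly as you do (constancy of $\jump{\Gamma ; (x:A) \vdash B}$ via Lemma~\ref{interpretation_constant}, and the empty meet being $X$ with $X^{\varnothing}=X$). Your handling of the fourth item --- establishing that $t_1\,t_2$ is not a proof term by routing through Lemma~\ref{substitution_in_proof}(ii) rather than through proposition-hood (which is not substitution-stable), and reconciling the $\bot_X$ clause of $\app$ with constancy and well-behavedness --- fills in details the paper leaves implicit and mirrors how the Apply case is treated inside the proof of Theorem~\ref{soundness}.
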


\subsection{Soundness}

We can now start our soundness proof with the weakening and
substitution lemmas.
They show that our interpretation is well behaved.

\begin{lemma}[interpretation of weakening]\label{interpretation_composition}
  If $t$ is not a proof term, then the following equation holds
  \begin{equation*}\jump{\Gamma_1 ; \Gamma_2 \vdash t}(\gamma_1, \gamma_2) = \jump{\Gamma_1 ; (x' : A') ; \Gamma_2 \vdash t}(\gamma_1, \alpha', \gamma_2)\end{equation*}
  when both sides are well defined.
\end{lemma}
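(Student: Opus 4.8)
The plan is to argue by structural induction on the term $t$, following whichever clause of Definition~\ref{interpretation} computes $\jump{\Gamma \vdash t}$. The induction hypothesis is taken with the context split, the inserted declaration $(x':A')$, and the valuations all universally quantified, so that when I recurse into a subterm I may reuse it at whatever split is convenient. Two facts are used throughout. First, since both sides are assumed well defined, every subcomponent entering the computation of either side is itself well defined, which is what licenses each appeal to the induction hypothesis. Second, ``being a proof term'' is insensitive to the inserted variable: by Lemma~\ref{weakening_in_proof}, a term is a proof term for $\Gamma_1 ; \Gamma_2$ iff it is a proof term for $\Gamma_1 ; (x':A') ; \Gamma_2$. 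As $t$ is not a proof term, both sides are computed by the structural clauses, not the $\floor{\cdot}$ clause.

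For the base cases: when $t$ is a sort, both sides equal the same constant ($\U_i$ for $\Type_i$, $\mathcal{O}(X)$ for $\Prop$) independently of the valuation; when $t = x_j$ is a variable, both sides select a component of the valuation, and a routine index computation shows that inserting $\alpha'$ at the boundary between $\Gamma_1$ and $\Gamma_2$ shifts the position of each variable of $\Gamma_2$ by one but leaves the selected value unchanged, so the two components coincide. For $t = \forall x : A.B$, I first note $\PD{(\Gamma_1 ; (x':A') ; \Gamma_2)}{x}{A}{B} = \PD{(\Gamma_1 ; \Gamma_2)}{x}{A}{B}$ by Lemma~\ref{product_type_substitution_weakening}(\ref{enum:pd_weakening}), so both sides invoke the same branch of $\prd_\PTX$. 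The annotation $A$ and the body $B$ are each typed by a sort, hence by Lemma~\ref{proposition_proof_invariant} (v) neither is a proof term; thus the induction hypothesis applies to $A$ (with split $\Gamma_1 \mid \Gamma_2$) and to $B$ (with split $\Gamma_1 \mid \Gamma_2 ; (x:A)$, reading $(\gamma_2,\alpha)$ as the valuation of the right block), giving $\mathcal{A}^{\mathrm{L}} = \mathcal{A}^{\mathrm{R}}$ and $\mathcal{B}^{\mathrm{L}}(\alpha) = \mathcal{B}^{\mathrm{R}}(\alpha)$ for all $\alpha$. Feeding equal arguments into the same $\prd_\PTX$ branch yields the result. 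The case $t = \lambda x : A.t'$ is analogous: $A$ is again not a proof term, and by the contrapositive of Lemma~\ref{proposition_proof_invariant} (iv) the body $t'$ is not a proof term for $\Gamma ; (x:A)$, so applying the induction hypothesis to $A$ and to $t'$ shows the two function graphs coincide.

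For $t = u\,v$, the contrapositive of Lemma~\ref{proposition_proof_invariant} (iii) shows $u$ is not a proof term, so the induction hypothesis gives a common function $f = \jump{\Gamma_1 ; \Gamma_2 \vdash u}(\gamma_1,\gamma_2) = \jump{\Gamma_1 ; (x':A') ; \Gamma_2 \vdash u}(\gamma_1,\alpha',\gamma_2)$. I then split on whether $v$ is a proof term, a property that by Lemma~\ref{weakening_in_proof} is the same for both contexts, so $\app_{\Gamma_1;\Gamma_2,v}$ and $\app_{\Gamma_1;(x':A');\Gamma_2,v}$ select the same clause. If $v$ is a proof term, both sides reduce to $f(\bot_X)$, and well-definedness guarantees $\bot_X \in \dom(f)$; if $v$ is not a proof term, the induction hypothesis also applies to $v$, giving a common argument $a$, and both sides reduce to $f(a)$.

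The delicate point, and the one I expect to be the main obstacle, is precisely this application case. Because $\app_{\Gamma,v}$ branches on the proof-term status of $v$ and, in the proof-term branch, discards the interpreted argument in favour of $f(\bot_X)$, the argument closes only once we know (i) that this status is stable under weakening (Lemma~\ref{weakening_in_proof}), so both applications take the same branch, and (ii) that we never invoke the induction hypothesis on a proof-term subterm — which is exactly why $u$ must be shown non-proof via Lemma~\ref{proposition_proof_invariant} (iii) and why we recurse into $v$ only in its non-proof branch. Everything else is bookkeeping, the one genuinely computational point being the index shift in the variable case.
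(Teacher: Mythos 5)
Your proposal is correct and follows essentially the same route as the paper's proof: structural induction on $t$, using Lemma~\ref{weakening_in_proof} for stability of proof-term status, the contrapositives of Lemma~\ref{proposition_proof_invariant} (iii)/(iv) to see that subterms of a non-proof term are non-proof where needed, Lemma~\ref{product_type_substitution_weakening}(\ref{enum:pd_weakening}) to keep both sides in the same $\prd_\PTX$ branch, and a case split on whether the argument of an application is a proof term. The only difference is that you spell out details the paper leaves implicit (the index shift in the variable case and the explicit non-proof status of $A$, $B$, and $t'$), which is fine.
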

\begin{proof}
  See Appendix~\ref{interpretation_weakening_appendix}.
\end{proof}

Our substitution lemma is similar to those in~\cite{SetsInTypes} and~\cite{not_simple}.
\begin{lemma}[interpretation of substitution]\label{substitution_interpretation}
  We assume $\Gamma \vdash u : U$ is derivable.
  If $\Gamma ; (x : U) ; \Delta$ is well formed and
  \begin{equation*}
    (\gamma, \jump{\Gamma \vdash u}(\gamma), \delta) \in \jump{\Gamma; (x : U) ; \Delta} 
  \end{equation*}
  holds (with all interpretations defined), then
  \begin{equation*}
    (\gamma, \delta) \in \jump{\Gamma; \Delta[x \backslash u]}
  \end{equation*}
  holds.
  Moreover, in
  \begin{equation*}
    \jump{\Gamma; (x:U) ; \Delta \vdash t}(\gamma, \jump{\Gamma \vdash
       u}(\gamma),\delta)
    = \jump{\Gamma ; \Delta[x \backslash u] \vdash t[x
       \backslash u]}(\gamma,\delta) 
  \end{equation*}
  the right hand side is defined whenever the left hand side is,
  and the equation holds for all $t$ and $T$ such that
  $\Gamma ; (x : U) ; \Delta \vdash t : T$ is derivable.
\end{lemma}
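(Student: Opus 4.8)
The plan is to prove both the context-membership claim and the interpretation equation simultaneously by induction on the structure of the term $t$ (and correspondingly on the derivation of $\Gamma ; (x : U) ; \Delta \vdash t : T$), since the interpretation of judgments in Definition~\ref{interpretation} is defined by induction on terms. First I would dispatch the context-membership claim: given $(\gamma, \jump{\Gamma \vdash u}(\gamma), \delta) \in \jump{\Gamma ; (x : U) ; \Delta}$, one must show $(\gamma, \delta) \in \jump{\Gamma ; \Delta[x \backslash u]}$. This follows by a secondary induction on the length of $\Delta$, where at each extension step $\Delta = \Delta' ; (y : C)$ the condition $\delta_{\text{last}} \in \jump{\Gamma ; (x:U) ; \Delta' \vdash C}(\gamma, \jump{\Gamma\vdash u}(\gamma), \delta')$ gets transported to $\delta_{\text{last}} \in \jump{\Gamma ; \Delta'[x\backslash u] \vdash C[x\backslash u]}(\gamma, \delta')$ by the interpretation equation applied at the smaller term $C$.

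For the interpretation equation itself, the main case split is whether $t$ is a proof term. If $t$ is a proof term for $\Gamma ; (x : U) ; \Delta$, then by Lemma~\ref{substitution_in_proof}(i) the substituted term $t[x\backslash u]$ is a proof term for $\Gamma ; \Delta[x\backslash u]$, so both sides are computed by the $\floor{\cdot}$ operator. Here I must verify that the $\inf$ is taken over matching index sets: the proof variables of $\Gamma ; (x : U) ; \Delta$ correspond to those of $\Gamma ; \Delta[x\backslash u]$, with the subtlety that the variable $x : U$ contributes to the meet on the left exactly when $x$ is a proof variable, in which case its value is $\jump{\Gamma \vdash u}(\gamma) = \floor{\gamma}$ (because $u$ is then itself a proof term), and this value is absorbed into the existing meet over $\gamma$'s proof components by idempotency and associativity of $\inf$ in the well behaved Alexandroff space. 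If $t$ is not a proof term I proceed by the structural cases of Definition~\ref{interpretation}: the sorts $\Prop$ and $\Type_i$ are invariant under substitution so both sides are literally equal; the variable case splits on whether the variable is $x$ (invoking the hypothesis $\jump{\Gamma \vdash u}(\gamma)$ fills the slot) or lies in $\Gamma$ or $\Delta$; the application case uses Corollary~\ref{last_cor} together with Lemma~\ref{substitution_in_proof}(ii) to ensure $\app_{\Gamma,v}$ reduces to ordinary application consistently on both sides; and the $\lambda$ and $\forall$ cases reduce to the inductive hypotheses on the subterms, with the $\forall$ case additionally requiring Lemma~\ref{product_type_substitution_weakening}(i) so that the discriminant $\PTX = \PD{\Gamma}{x}{A}{B}$ is preserved under substitution, guaranteeing the same branch of $\prd_\PTX$ is selected on both sides.

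The hard part will be the proof-term case, specifically the bookkeeping of which variables count as proof variables and how the substitution $[x\backslash u]$ interacts with the $\inf$ operator of the well behaved Alexandroff space. The delicate point is that whether a variable is a proof variable is not a purely syntactic property but depends on the typing context (as the counterexample following Lemma~\ref{substitution_in_proof} shows for propositional terms), so I must lean on Lemma~\ref{substitution_in_proof} to guarantee that proof-term status is correctly preserved under substitution, and on the well-definedness hypothesis to ensure all the $\inf$'s in question actually exist. A secondary obstacle is maintaining the ``defined whenever the left is defined'' direction throughout: at each inductive step I must check that definedness of the left-hand interpretation forces definedness of the relevant subinterpretations on the right, which for the application case again routes through Corollary~\ref{last_cor} and the matching of $\app_{\Gamma,v}$ behaviour via the proof/non-proof status of the argument.
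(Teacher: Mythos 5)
Your overall architecture coincides with the paper's own proof: the paper formalizes your ``simultaneous induction'' via two predicates $P(\Delta)$ (context membership) and $Q(\Delta,t)$ (the interpretation equation), proved by the interleaved steps $P([])$, then $P(\Delta)\Rightarrow\forall t\,Q(\Delta,t)$, then $(\forall t\,Q(\Delta,t))\Rightarrow\forall T\,P(\Delta;(y:T))$ --- exactly your secondary induction on the length of $\Delta$ threaded through the term induction. Your treatment of the proof-term case (both sides collapse to $\floor{\cdot}$, and the extra component $\jump{\Gamma\vdash u}(\gamma)=\floor{\gamma}$ is absorbed into the $\inf$ when $x$ is a proof variable, and is simply absent from the index set otherwise) is the paper's argument, as is the use of Lemma~\ref{substitution_in_proof} for preservation of proof-term status and of Lemma~\ref{product_type_substitution_weakening}(i) for preservation of the discriminant $\PD{\Gamma}{x}{A}{B}$ in the $\forall$ case.

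However, two concrete repairs are needed. First, and most importantly, you cannot invoke Corollary~\ref{last_cor} inside this proof: that corollary is a consequence of the Soundness Theorem~\ref{soundness}, whose proof (cases Apply and Beta Equality) relies on the present substitution lemma, so the citation is circular. The paper instead works directly from the definedness hypothesis: if $\jump{\Gamma;(x:U);\Delta\vdash a\;b}(\gamma,\jump{\Gamma\vdash u}(\gamma),\delta)$ is defined, then by Definition~\ref{app_prod_int} the interpretation of $a$ is already a function whose domain contains the interpretation of $b$, and Lemma~\ref{substitution_in_proof} (parts (i) and (ii) together) guarantees that the same branch of $\app$ is selected on both sides; no appeal to soundness is required, and the same manual bookkeeping handles your ``defined whenever the left is defined'' obligation. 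Second, your variable case for $y=x$ is incomplete: the left-hand side is the tuple component $\jump{\Gamma\vdash u}(\gamma)$, but the right-hand side is $\jump{\Gamma;\Delta[x\backslash u]\vdash u}(\gamma,\delta)$, the interpretation of $u$ in the \emph{extended} context, and identifying the two requires the weakening Lemma~\ref{interpretation_composition} (this dependency is precisely why the paper proves weakening before substitution). Your plan never mentions that lemma, and without it this case cannot be closed.
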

\begin{proof}
  See Appendix~\ref{interpretation_substitution_appendix}.
\end{proof}

While propositions can be interpreted by sets with multiple values, our
interpretation is still proof-irrelevant, as the interpretation of
terms of sort \Type{} does not depend on parameters of sort \Prop.
This simplifies the proof of the next lemma.
\begin{lemma}[semantic proof irrelevance]\label{interpretation_constant}
  We assume that $A'$ is a propositional term for $\Gamma$ and $t$ is not a proof term under $\Gamma ; (x' : A') ; \Delta$.
  If
  \begin{eqnarray*}
    (\gamma, p_1, \delta) &\in& \jump{\Gamma ; (x' : A') ; \Delta} \\
    (\gamma, p_2, \delta) &\in& \jump{\Gamma ; (x' : A') ; \Delta}
  \end{eqnarray*}
  hold, then
  \begin{equation*}
    \jump{\Gamma ; (x' : A') ; \Delta \vdash t : T}(\gamma, p_1, \delta) = \jump{\Gamma ; (x' : A') ; \Delta \vdash t : T}(\gamma, p_2, \delta)
    \end{equation*}
  holds.
\end{lemma}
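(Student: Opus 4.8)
The plan is to prove the claim by structural induction on $t$, carried out uniformly in the tail context $\Delta$ (and hence in $\delta$ and the extended valuations), since $\Delta$ lengthens each time we descend under a binder. Write $\Gamma' := \Gamma ; (x' : A') ; \Delta$. Because $A'$ is a propositional term for $\Gamma$, Lemma~\ref{weakening_in_proof} makes $A'$ a propositional term for $\Gamma'$, so $x'$ is a proof term for $\Gamma'$. The guiding observation is that the ``otherwise'' branch of Definition~\ref{interpretation} can only read the component $p$ of the valuation in two ways: through the variable clause applied to $x'$ itself, or by passing it as the argument of an application; and in the latter case $\app_{\Gamma',v}$ is built precisely so as to discard a proof argument, evaluating the function at the valuation-independent point $\bot_X$. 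These two facts are what make proof irrelevance go through.

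For the base cases, if $t$ is a sort then $\jump{\Gamma' \vdash \Type_i}(\cdot) = \U_i$ and $\jump{\Gamma' \vdash \Prop}(\cdot) = \mathcal{O}(X)$ are constant, so independent of $p$. If $t = x_i$ is a variable, the interpretation returns the $i$-th component of the valuation; since $t$ is not a proof term but $x'$ is, we have $t \neq x'$, so the component returned lies in $(\gamma,\delta)$ and never equals $p$. This is the single place where the hypothesis that $x'$ is a proof term is used: it rules out exactly the one variable whose value is $p$.

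The binder cases run through the induction hypothesis. For $t = \forall x : A.B$, the discriminant $\PD{\Gamma'}{x}{A}{B}$ is a syntactic datum fixed by the sorts of $A$ and $B$, hence the same under $p_1$ and $p_2$. The domain $A$ is typed at a sort $s_1$, and since $s_1$ is itself typed at some $\Type_j$, Lemma~\ref{proposition_proof_invariant}(v) shows $A$ is not a proof term; the induction hypothesis then gives $\mathcal{A} := \jump{\Gamma' \vdash A}(\gamma,p_1,\delta) = \jump{\Gamma' \vdash A}(\gamma,p_2,\delta)$. Consequently every $\alpha \in \mathcal{A}$ yields admissible extended valuations $(\gamma,p_1,\delta,\alpha)$ and $(\gamma,p_2,\delta,\alpha)$, and as $B$ is likewise typed at a sort and so not a proof term, the induction hypothesis in the context $\Gamma ; (x':A') ; (\Delta ; (x:A))$ makes each $\mathcal{B}(\alpha)$ independent of $p$. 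Feeding equal $\mathcal{A}$ and equal $\mathcal{B}(\cdot)$ into $\prd_\PTX$ gives equal results. The abstraction $t = \lambda x : A.t_0$ is analogous: $A$ is not a proof term as before, and by Lemma~\ref{proposition_proof_invariant}(iv) the body $t_0$ is not a proof term either (else $\lambda x : A.t_0$ would be), so the induction hypothesis renders both $\jump{\Gamma' \vdash A}$ and the family $\alpha \mapsto \jump{\Gamma';(x:A) \vdash t_0}(\ldots,\alpha)$ independent of $p$, leaving the defining graph unchanged.

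The application $t = u\,v$ is the crux, and the only case where proof irrelevance is exploited rather than merely propagated. By Lemma~\ref{proposition_proof_invariant}(iii), since $u\,v$ is not a proof term neither is $u$, so the induction hypothesis gives $f := \jump{\Gamma' \vdash u}(\gamma,p_1,\delta) = \jump{\Gamma' \vdash u}(\gamma,p_2,\delta)$. Whether $v$ is a proof term for $\Gamma'$ is syntactic, so $\app_{\Gamma',v}$ selects the same branch for both valuations. If $v$ is a proof term, both sides equal $f(\bot_X)$, which agree because $f$ and $\bot_X$ do not depend on $p$; note that the interpretation of $v$, which genuinely may depend on $p$, is simply thrown away. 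If $v$ is not a proof term, the induction hypothesis also yields $\jump{\Gamma' \vdash v}(\gamma,p_1,\delta) = \jump{\Gamma' \vdash v}(\gamma,p_2,\delta)$, so $f$ is applied to equal arguments. I expect the difficulty to be bookkeeping rather than conceptual: one must check in every compound case that the subterms fed to the induction hypothesis are truly not proof terms (via Lemma~\ref{proposition_proof_invariant}(iii),(iv) and the sort argument for domains), while the remaining proof subterms are exactly those whose values $\app$ and the constant-function clause of $\prd_\PTX$ discard. Definedness is kept in step by carrying the membership hypotheses $(\gamma,p_i,\delta) \in \jump{\Gamma'}$ through the binders; combined with the domain equalities established above, this keeps both extended valuations simultaneously admissible, as in the companion Lemmas~\ref{interpretation_composition} and~\ref{substitution_interpretation}.
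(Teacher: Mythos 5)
Your proposal is correct and takes essentially the same route as the paper's own proof: a structural induction on $t$ in which the variable case is settled by $t \neq x'$ (because $x'$ is a proof term while $t$ is not), the application case exploits that $\app$ evaluates the function at the valuation-independent point $\bot_X$ whenever the argument is a proof term, and the abstraction and product cases merely propagate the induction hypothesis. Your write-up is somewhat more explicit than the paper about why the relevant subterms are not proof terms (invoking the specific clauses of Lemma~\ref{proposition_proof_invariant}), but the underlying argument is the same.
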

\begin{proof}
  See Appendix~\ref{interpretation_constant_appendix}.
\end{proof}

\begin{theorem}[soundness of beta equality]\label{soundness_of_beta_equality}
  If $t_1 =_\beta t_2$, and $\Gamma \vdash t_1 : T, \Gamma \vdash t_2 : T$ are derivable, then $\jump{\Gamma \vdash t_1}(\gamma) = \jump{\Gamma \vdash t_2}(\gamma)$ when both sides are well defined.
\end{theorem}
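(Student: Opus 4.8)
The plan is to induct on the derivation of $t_1 =_\beta t_2$, that is, on the generation of the smallest equivalence relation closed under clauses (i)--(iv) of Definition~\ref{beta_ecc}. Before splitting into cases I would record one simplification that applies throughout: by Lemma~\ref{proposition_proof_invariant}(ii), $t_1$ is a proof term for $\Gamma$ iff $t_2$ is, and whenever both are proof terms both interpretations collapse to $\floor{\gamma}$ by Definition~\ref{interpretation}, since $\floor{\gamma}$ depends only on $\gamma$ and the proof-positions of $\Gamma$, not on the term. Hence in every case it suffices to treat the situation where neither side is a proof term and use the structural clauses of the interpretation. Reflexivity and symmetry are trivial, so the real work is the $\beta$-rule (i) and the three congruence rules (ii)--(iv).

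The heart of the argument is the base case $(\lambda x:A.t)\,a =_\beta t[x\backslash a]$. Writing $f = \jump{\Gamma \vdash \lambda x:A.t}(\gamma)$, which by Definition~\ref{interpretation} is the function $\alpha \mapsto \jump{\Gamma;(x:A)\vdash t}(\gamma,\alpha)$ on $\jump{\Gamma\vdash A}(\gamma)$, I unfold $\jump{\Gamma\vdash(\lambda x:A.t)\,a}(\gamma) = \app_{\Gamma,a}(f,\jump{\Gamma\vdash a}(\gamma))$ and split on whether $a$ is a proof term. If $a$ is not a proof term, $\app$ returns $f(\jump{\Gamma\vdash a}(\gamma)) = \jump{\Gamma;(x:A)\vdash t}(\gamma,\jump{\Gamma\vdash a}(\gamma))$, and the substitution lemma (Lemma~\ref{substitution_interpretation}, with $\Delta=[]$, $u=a$, $U=A$) rewrites this to $\jump{\Gamma\vdash t[x\backslash a]}(\gamma)$; well-definedness of the left side guarantees $\jump{\Gamma\vdash a}(\gamma)$ lies in the domain of $f$, which is exactly the membership hypothesis the substitution lemma needs. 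If $a$ is a proof term, $\app$ instead returns $f(\bot_X) = \jump{\Gamma;(x:A)\vdash t}(\gamma,\bot_X)$, and here I must reconcile the substituted point $\bot_X$ with the actual value $\jump{\Gamma\vdash a}(\gamma)$. Since $a$ is a proof term of type $A$, $A$ is a propositional term for $\Gamma$, and $t$ cannot be a proof term under $\Gamma;(x:A)$ (otherwise Lemma~\ref{proposition_proof_invariant}(iv),(iii) would make the application a proof term, contrary to the structural case); the semantic proof-irrelevance lemma (Lemma~\ref{interpretation_constant}) then gives $\jump{\Gamma;(x:A)\vdash t}(\gamma,\bot_X) = \jump{\Gamma;(x:A)\vdash t}(\gamma,\jump{\Gamma\vdash a}(\gamma))$, after which the substitution lemma finishes exactly as before.

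For the congruence cases (ii)--(iv) I would apply the induction hypothesis to corresponding subterms, which are $\beta$-equal by the rule and typable by inversion of the typing of the compound term. Two points of care arise. First, the operators wrapping the subterms must coincide: in (ii) the operator $\app_{\Gamma,v}$ depends only on whether $v$ is a proof term, which is $=_\beta$-invariant by Lemma~\ref{proposition_proof_invariant}(ii); in (iv) the tag $\PTX=\PD{\Gamma}{x}{A}{B}$ must agree on both sides, which follows because $\beta$-equal well-typed types share a common reduct whose type is preserved, so uniqueness of typing (Lemma~\ref{uniqueness_of_type}) forces equal sorts, these being distinct $\beta$-normal forms. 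Second, in (iii) and (iv) the domain type in the extended context changes ($A$ versus $A'$), but since the interpretation of a judgment does not depend on the context types beyond which positions are proofs --- and proof status is preserved because $A=_\beta A'$ keeps $A'$ propositional by Lemma~\ref{proposition_proof_invariant}(i) --- the interpretations under $\Gamma;(x:A)$ and $\Gamma;(x:A')$ coincide, letting the induction hypothesis apply.

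I expect the main obstacle to be the transitivity step of the equivalence closure: from $t_1=_\beta s$ and $s=_\beta t_2$ I want to chain two instances of the induction hypothesis, but this requires the intermediate term $s$ to have a defined interpretation, hence to be well typed, which $\beta$-equality alone does not guarantee (a $\beta$-expansion can reintroduce an ill-typed redex). I would address this by working, via type preservation and confluence, with a witnessing chain of reductions and expansions that stays within well-typed terms of the appropriate type, so that each intermediate interpretation is defined and the ``when both sides are well defined'' hypothesis propagates along the chain. The genuinely model-specific difficulty, however, remains the proof-argument subcase of the $\beta$-rule, where the mismatch between $\bot_X$ and the argument's interpretation is precisely what semantic proof irrelevance is designed to absorb.
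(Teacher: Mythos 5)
Your proposal is correct and follows essentially the same route as the paper: reduce to the redex case $(\lambda x:A.t)\,a =_\beta t[x\backslash a]$ (after dispatching the case where both terms are proof terms via Lemma~\ref{proposition_proof_invariant}), split on whether the argument is a proof term, and close the two subcases with the substitution lemma (Lemma~\ref{substitution_interpretation}) alone, respectively together with semantic proof irrelevance (Lemma~\ref{interpretation_constant}). The only difference is that you make explicit the congruence and transitivity bookkeeping that the paper compresses into ``it is sufficient to only prove'' the redex case; your handling of those points (invariance of proof-term status under $=_\beta$, invariance of the $\PD{\Gamma}{x}{A}{B}$ tag, and well-typed intermediate chains via confluence and subject reduction) is consistent with, and if anything more careful than, the paper's own argument.
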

\begin{proof}
  If $t_1$ is a proof term, then $t_2$ is also a proof term by Lemma~\ref{proposition_proof_invariant}, hence the statement holds.
  If not, it is sufficient to only prove that $\jump{\Gamma \vdash (\lambda x :U. t) \; u}(\gamma) = \jump{\Gamma \vdash t[x \backslash u]}(\gamma)$ holds.
  If $(\lambda x : U. t) u$ is well typed under $\Gamma$, then $\Gamma \vdash u : U$ is derivable.
  If $u$ is not a proof term, then
  \begin{eqnarray*}
    && \jump{\Gamma \vdash (\lambda x : U.t) \; u}(\gamma) \\
    &=& \jump{\Gamma \vdash \lambda x : U.t}(\gamma)\bigl(\jump{\Gamma \vdash u}(\gamma)\bigr) \\
    &=& \jump{\Gamma ; (x : U) \vdash t}(\gamma, \jump{\Gamma \vdash u}(\gamma)) \\
    &=& \jump{\Gamma \vdash t[x \backslash u]}(\gamma)
  \end{eqnarray*}
  holds by Lemma~\ref{substitution_interpretation}.
  If $u$ is a proof term, then $\jump{\Gamma \vdash \lambda x : U. t}(\gamma)$ is a function whose domain contains $\jump{\Gamma \vdash u}(\gamma)$ by definition of the interpretation.
  Therefore $\jump{\Gamma ; (x : U) \vdash t}(\gamma, \jump{\Gamma \vdash u}(\gamma))$ is also well defined.
  Hence
  \begin{eqnarray*}
    && \jump{\Gamma \vdash (\lambda x : U.t) \; u}(\gamma) \\
    &=& \jump{\Gamma \vdash \lambda x : U.t}(\gamma)(\bot_X) \\
    &=& \jump{\Gamma ; (x : U) \vdash t}(\gamma, \bot_X) \\
    &=& \jump{\Gamma ; (x : U) \vdash t}(\gamma, \jump{\Gamma \vdash u}(\gamma)) \\
    &=& \jump{\Gamma \vdash t[x \backslash u]}(\gamma)
  \end{eqnarray*}
  holds by Lemma~\ref{substitution_interpretation} and \ref{interpretation_constant}.
  Hence, the statement holds.
\end{proof}

We are now ready to prove the soundness of this type system.

\begin{theorem}[soundness]\label{soundness}
  We assume $\gamma \in \jump{\Gamma}$.
  If $\Gamma \vdash t : T$ is derivable, then $\jump{\Gamma \vdash t}(\gamma) \in \jump{\Gamma \vdash T}(\gamma)$.
\end{theorem}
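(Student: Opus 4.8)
The plan is to induct on the derivation of $\Gamma \vdash t : T$, taking one case per typing rule, and to carry the definedness of every interpretation along as part of the statement being proved, since totality (Corollary~\ref{last_cor}) is only obtained \emph{from} soundness and so may not be presupposed. The two axioms are immediate from the universe facts $\mathcal{O}(X) \in \U_0$ (which holds because $(X,\mathcal{O}(X)) \in \U_0$ and Lemma~\ref{univ_close}(i)) and $\U_i \in \U_{i+1}$. Weakening reduces to the induction hypothesis through Lemma~\ref{interpretation_composition}; Beta Equality combines the induction hypothesis with Theorem~\ref{soundness_of_beta_equality} to replace $\jump{\Gamma \vdash A}(\gamma)$ by $\jump{\Gamma \vdash B}(\gamma)$; and Subsumption follows from $\U_i \subseteq \U_j$ (Lemma~\ref{univ_close}(\ref{enum:univ_subset})) together with the monotonicity of the nested product construction in its innermost universe.

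For (PI-Type) I would case-split on $\PTX = \PD{\Gamma}{x}{A}{B}$ and verify that $\prd_\PTX(\calA, \{\calB(\alpha)\}_{\alpha \in \calA})$ lands in $\jump{s_2}(\gamma)$. When $s_2 = \Prop$ (the $\PP$ and $\TP$ cases) the value is assembled from open sets using the meet and the Heyting exponential, both of which stay inside $\mathcal{O}(X) = \jump{\Prop}$. When $s_2 = \Type$ (the $\PT$ and $\TT$ cases) the value is a set of (possibly constant) dependent functions, which belongs to the relevant universe by closure under dependent products, Lemma~\ref{univ_close}(\ref{enum:univ_prod}), and, for the constant-function subset in the $\PT$ case, by transitivity, Lemma~\ref{univ_close}(\ref{enum:univ_transitive}).

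The real work lies in (Variable), (Abstract) and (Apply), where the decisive split is whether the subject is a proof term. When it is \emph{not}, everything is structural. For (Variable) the component $\gamma_i$ lies in $\jump{\Gamma \vdash T_i}(\gamma)$ by the definition of $\jump{\Gamma}$ and Lemma~\ref{interpretation_composition}. For (Abstract) the graph is, by the induction hypothesis, a dependent function into $\{\calB(\alpha)\}$, and in the $\PT$ case semantic proof irrelevance (Lemma~\ref{interpretation_constant}) forces it to be constant, so that it belongs to $\prd_\PT$. For (Apply) with $u$ not a proof term, $\app_{\Gamma,v}$ collapses to ordinary application and the value sits in $\calB(\jump{\Gamma \vdash v}(\gamma))$, which equals $\jump{\Gamma \vdash B[x\backslash v]}(\gamma)$ by Lemma~\ref{substitution_interpretation}; definedness of $\app_{\Gamma,v}$ is checked here using that $\calA$ is inhabited and hence contains $\bot_X$ by well-behavedness.

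The proof-term subcases are what I expect to be the main obstacle: there the value is always $\floor{\gamma}$ and one must show $\floor{\gamma} \in \jump{\Gamma \vdash P}(\gamma)$ for the proposition $P$. The guiding facts are that open sets of an Alexandroff space are downward closed for the specialization order, that $\floor{\gamma} = \inf\{\gamma_i \mid x_i \text{ a proof}\}$ is a lower bound of the proof components, and that a point lies in a Heyting exponential exactly when its minimal neighborhood meets the antecedent inside the consequent, i.e. $\floor{\gamma} \in \calB^{\calA}$ iff $\downarrow\!\floor{\gamma} \cap \calA \subseteq \calB$. For (Variable) downward closure gives $\floor{\gamma} \le \gamma_i \in \jump{\Gamma \vdash T_i}(\gamma)$ at once. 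For (Abstract) in the $\PP$ case I would feed the induction hypothesis $\floor{(\gamma,\alpha)} = \inf(\floor{\gamma}, \alpha) \in \calB(\alpha)$ into this criterion, using that $\alpha \le \floor{\gamma}$ forces $\inf(\floor{\gamma},\alpha) = \alpha$ to obtain $\downarrow\!\floor{\gamma} \cap \calA \subseteq \calB$ and hence $\floor{\gamma} \in \calB^{\calA}$; the $\TP$ case is the same computation without an antecedent. For (Apply), observe that if $u\,v$ is a proof term then so is $u$ (as $\forall x:A.B$ must then be a proposition), so both $u$ and, in the $\PP$ case, $v$ evaluate to $\floor{\gamma}$; the desired $\floor{\gamma} \in \calB$ then follows from the induction hypotheses $\floor{\gamma} \in \calB^{\calA}$ and $\floor{\gamma} \in \calA$ via the Heyting identity $\calA \sqcap \calB^{\calA} \le \calB$ of Lemma~\ref{heyting_conditions}, equation~\eqref{eq:impl} (in the $\TP$ case one uses instead $\bigsqcap_{\alpha}\calB(\alpha) \le \calB(\jump{\Gamma \vdash v}(\gamma))$ and Lemma~\ref{substitution_interpretation}). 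Throughout, semantic proof irrelevance (Lemma~\ref{interpretation_constant}) is what lets me treat $\calB(\alpha)$ as independent of the proof argument, which is what makes these exponential manipulations coherent.
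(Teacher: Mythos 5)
Your proposal is correct and takes essentially the same route as the paper's own proof: induction on the typing derivation with the four-way split on $\PD{\Gamma}{x}{A}{B}$, the proof-term versus non-proof-term dichotomy, and the same key ingredients (Lemmas~\ref{interpretation_composition}, \ref{substitution_interpretation}, \ref{interpretation_constant}, the Heyting identity $x \sqcap y^x \le y$, and the minimal-neighborhood criterion $\downarrow\!\floor{\gamma} \cap \calA \subseteq \calB$ with the computation $\inf(\floor{\gamma},\alpha)=\alpha$ for $\alpha \le \floor{\gamma}$). The only differences are presentational, such as making explicit the definedness bookkeeping and the $\bot_X \in \calA$ check that the paper leaves implicit.
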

\begin{proof}
  See Appendix~\ref{soundness_ind}.
\end{proof}

\subsection{Interpretation of logical synbols}

We also prove the following theorem about the interpretation of logical symbols in definition \ref{logical_symbol}.
It demonstrates the logical adequacy of the interpretation.
\begin{theorem}[interpretation of logical symbols]\label{logical_symbol_int}
  \
  \begin{enumerate}[{(}i{)}]
  \item $\jump{\Gamma \vdash \bot}(\gamma) = \varnothing$
  \item $\jump{\Gamma \vdash A \land B}(\gamma) = (\jump{\Gamma \vdash A}(\gamma)) \sqcap (\jump{\Gamma \vdash B}(\gamma))$
  \item $\jump{\Gamma \vdash A \lor B}(\gamma) = (\jump{\Gamma \vdash A}(\gamma)) \sqcup (\jump{\Gamma \vdash B}(\gamma))$
  \item When $A$ is a propositional term: \\
    $\jump{\Gamma \vdash \exists x : A. B}(\gamma) = \begin{cases}\jump{\Gamma \vdash A}(\gamma) \sqcap \jump{\Gamma ; x : A \vdash B}(\gamma, \alpha) & (\alpha \in \jump{\Gamma \vdash A}(\gamma)) \\ \varnothing & (\jump{\Gamma \vdash A}(\gamma) = \varnothing)\end{cases}$
  \item When $A$ is not a propositional term: \\
    $\jump{\Gamma \vdash \exists x : A. B}(\gamma) = \bigsqcup_{\alpha \in \jump{\Gamma \vdash A}(\gamma)} \jump{\Gamma ; (x:A) \vdash B}(\gamma, \alpha)$
  \item $\jump{\Gamma \vdash A \leftrightarrow B}(\gamma) = X \Rightarrow \jump{\Gamma \vdash A}(\gamma) = \jump{\Gamma \vdash B}(\gamma)$
  \item $\jump{\Gamma \vdash x =_A y}(\gamma) = X \Leftrightarrow \jump{\Gamma \vdash x}(\gamma) = \jump{\Gamma \vdash y}(\gamma)$ \label{logical_symbol_int:eq}
  \end{enumerate}
\end{theorem}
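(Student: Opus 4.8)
The plan is to handle all seven items with a single computational engine and then finish each with one identity from Lemma~\ref{heyting_conditions}. In every case I first unfold the logical symbol via Definition~\ref{logical_symbol} into a nest of $\forall$'s and $\to$'s whose outermost binder is $\forall P:\Prop$ (or $\forall Q:(A\to\Prop)$), and then evaluate each binder by the matching clause of $\prd_\PTX$ (Definition~\ref{app_prod_int}): a $\PP$-product becomes a Heyting exponential and a $\TP$-product becomes an infinite meet. Writing $\jump{A}$ for $\jump{\Gamma\vdash A}(\gamma)$, the one fact I will use repeatedly is that in the $\PP$ case $\jump{\Gamma\vdash\forall x:A.B}(\gamma)=\mathcal{B}^{\mathcal{A}}$ holds uniformly: Corollary~\ref{last_cor} gives $\mathcal{B}^{\mathcal{A}}$ when $\mathcal{A}\neq\varnothing$ and the value $X$ otherwise, the two agreeing since $\mathcal{B}^{\varnothing}=X$. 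Since the outer binder is $\forall P:\Prop$ with $\jump{\Prop}=\mathcal{O}(X)$, each body will have the shape $\pi^{\pi^{a}}$ with $\pi$ ranging over the whole algebra $\mathcal{O}(X)$, and the outer meet collapses by \eqref{eq:meetpower}, $\bigsqcap\{t^{t^a}\mid t\in A\}=a$.

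For (i), $\bot=\forall P:\Prop.P$ is a $\TP$-product, so $\jump{\bot}(\gamma)=\bigsqcap\mathcal{O}(X)=\varnothing$ as $\varnothing\in\mathcal{O}(X)$ is least. For (ii), unfolding $A\land B$ and pushing the exponentials inward with \eqref{eq:powerprod} gives body $\pi^{\pi^{\jump{A}\sqcap\jump{B}}}$, whence \eqref{eq:meetpower} yields $\jump{A}\sqcap\jump{B}$; here $\jump{A},\jump{B}$ are independent of $\pi$ by Lemma~\ref{interpretation_composition}, since $P\notin\fv(A)\cup\fv(B)$. For (iii) the same computation, now using \eqref{eq:powerprod} and then \eqref{eq:prodpoweror} to rewrite $\pi^{\jump{A}}\sqcap\pi^{\jump{B}}$ as $\pi^{\jump{A}\sqcup\jump{B}}$, produces body $\pi^{\pi^{\jump{A}\sqcup\jump{B}}}$, and \eqref{eq:meetpower} gives $\jump{A}\sqcup\jump{B}$.

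For (iv)--(v), $\exists x:A.B=\forall P:\Prop.(\forall x:A.(B\to P))\to P$, and the decisive difference is the sort of $A$, which fixes whether the inner quantifier $\forall x:A.(B\to P)$ is a $\PP$- or $\TP$-product. When $A$ is propositional (iv) it is $\PP$; by semantic proof irrelevance (Lemma~\ref{interpretation_constant}) $\jump{B}$ does not depend on the proof $\beta\in\jump{A}$, so the inner meet is constant and, via \eqref{eq:powerprod}, the body becomes $\pi^{\pi^{\jump{A}\sqcap\jump{B}}}$ when $\jump{A}\neq\varnothing$, giving $\jump{A}\sqcap\jump{B}$ by \eqref{eq:meetpower}; when $\jump{A}=\varnothing$ the empty meet is $X$ by \eqref{eq:whole}, so the body reduces to $\pi^{X}=\pi$ by \eqref{eq:exp_unit_1} and the outer meet is $\bigsqcap\mathcal{O}(X)=\varnothing$. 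When $A$ is not propositional (v) the inner quantifier is $\TP$, a genuine meet over $\beta\in\jump{A}$ where $\jump{B}$ may now vary with $\beta$; I pull this meet through the exponent with \eqref{eq:meetpoweror}, $\bigsqcap_{\beta}\pi^{\jump{B}(\beta)}=\pi^{\bigsqcup_{\beta}\jump{B}(\beta)}$, obtaining body $\pi^{\pi^{\bigsqcup_{\beta}\jump{B}(\beta)}}$ and hence $\bigsqcup_{\beta\in\jump{A}}\jump{B}(\beta)$ by \eqref{eq:meetpower}.

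For (vi), with $A,B$ propositions, item (ii) and the $\PP$-exponential formula give $\jump{A\leftrightarrow B}(\gamma)=\jump{B}^{\jump{A}}\sqcap\jump{A}^{\jump{B}}$, so if this equals $X=1$ then $\jump{A}=\jump{B}$ by \eqref{eq:power_eqcond}. For (vii), $x=_A y=\forall Q:(A\to\Prop).Qx\leftrightarrow Qy$ is a $\TP$-product over $q\in\jump{A\to\Prop}$; since $Q$ is not a proof term, Corollary~\ref{last_cor} gives $\jump{Qx}(\gamma,q)=q(\jump{x})$ and $\jump{Qy}(\gamma,q)=q(\jump{y})$, so by the computation of (vi) each body is $q(\jump{y})^{q(\jump{x})}\sqcap q(\jump{x})^{q(\jump{y})}$ and $\jump{x=_A y}(\gamma)$ is their meet over all $q$. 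The direction $\jump{x}=\jump{y}\Rightarrow\jump{x=_A y}=X$ is immediate from \eqref{eq:le_whole} (so $a^a=1$), making every body $1$. For the converse I apply \eqref{eq:topcond} to force each body to be $1$ and then \eqref{eq:power_eqcond} to conclude $q(\jump{x})=q(\jump{y})$ for all $q$. The hard part, and the only place the argument leaves pure Heyting calculation, is deducing $\jump{x}=\jump{y}$ from this: when $A$ has sort $\Type$, $\jump{A\to\Prop}(\gamma)$ is the full function space $\prod_{\beta\in\jump{A}}\mathcal{O}(X)$ (a $\TT$-product), so I instantiate $q$ by the characteristic function $q_0(\beta):=X$ if $\beta=\jump{x}(\gamma)$ and $\varnothing$ otherwise (well defined since $\jump{x}(\gamma)\in\jump{A}(\gamma)$ by Theorem~\ref{soundness}); then $q_0(\jump{x})=X=q_0(\jump{y})$ forces $\jump{y}=\jump{x}$. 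When $A$ has sort $\Prop$, $x$ and $y$ are proof terms and $\jump{x}(\gamma)=\lfloor\gamma\rfloor=\jump{y}(\gamma)$ outright, so both sides of the equivalence are true. Checking that this separating function genuinely lies in the $\TT$-interpretation and cleanly splitting the $\Prop$/$\Type$ cases (both here and in the proof-irrelevance step of (iv)) is where the care is needed; the remaining items are mechanical once the uniform $\PP$-exponential formula and \eqref{eq:meetpower} are in hand.
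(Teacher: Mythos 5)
Your proposal is correct and follows essentially the approach the paper itself prescribes: the paper gives no detailed proof of Theorem~\ref{logical_symbol_int} (it defers to~\cite{own}, listing Lemmas~\ref{heyting_conditions}, \ref{interpretation_composition} and Corollary~\ref{last_cor} as the ingredients), and your computation uses exactly this toolkit --- unfolding each connective into nested $\TP$/$\PP$ products and collapsing them with \eqref{eq:powerprod}, \eqref{eq:meetpower}, \eqref{eq:prodpoweror}, \eqref{eq:meetpoweror} --- supplemented, where genuinely needed, by semantic proof irrelevance (Lemma~\ref{interpretation_constant}) in item (iv) and by Theorem~\ref{soundness} together with the characteristic-function instantiation of $Q$ for the hard direction of item (vii). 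Your treatment of the edge cases (empty $\jump{\Gamma \vdash A}(\gamma)$ in (iv)--(v), and $\Prop$-sorted $A$ in (vii), where both sides hold trivially because proof terms share the interpretation $\floor{\gamma}$) is also sound.
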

To prove Theorem~\ref{logical_symbol_int}, one uses Lemmas~\ref{heyting_conditions}, \ref{interpretation_composition} and Corollary~\ref{last_cor}.
For a detailed proof, see~\cite{own}.

\subsection{Interpretation of logical proof irrelevance}
A general form of proof irrelevance, that does not depend on the type
of the result, can be expressed as a logical formula, using
the propositional encoding for equality:
\begin{equation*}
  \vdash \forall P : \Prop. \forall p_1, p_2 : P. p_1 =_P p_2.
\end{equation*}

\begin{proposition}[interpretation of proof irrelevance]
The logical formula for proof irrelevance is valid for any Alexandroff
space.
\end{proposition}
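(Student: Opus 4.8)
The plan is to show that the interpretation $\jump{\forall P : \Prop. \forall p_1 : P. \forall p_2 : P. p_1 =_P p_2}()$ equals the top element $X = \mathbb{I}$, which is exactly what validity means. The whole argument hinges on a single observation about the interpretation of proof terms. Work in the context $\Gamma' := (P:\Prop);(p_1:P);(p_2:P)$. Here $p_1$ and $p_2$ are both proof terms (each has type $P$, and $\Gamma' \vdash P : \Prop$ is derivable), whereas $P$ itself is not a proof term, since its only type is the sort $\Prop$, which is not a propositional term. Hence by Definition~\ref{interpretation} their interpretations are $\jump{\Gamma' \vdash p_1}(\gamma') = \jump{\Gamma' \vdash p_2}(\gamma') = \floor{\gamma'} = \inf\{\gamma_{p_1}, \gamma_{p_2}\}$ for every $\gamma' \in \jump{\Gamma'}$. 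Crucially, the interpretation of a proof term does not depend on which proof term it is, only on $\gamma'$, so the two coincide even though $p_1$ and $p_2$ are distinct variables. This is precisely where the model's proof irrelevance does the work.

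First I would apply Theorem~\ref{logical_symbol_int}~(\ref{logical_symbol_int:eq}) with $A = P$, $x = p_1$, $y = p_2$: since $\jump{\Gamma' \vdash p_1}(\gamma') = \jump{\Gamma' \vdash p_2}(\gamma')$, it yields $\jump{\Gamma' \vdash p_1 =_P p_2}(\gamma') = X$ for every $\gamma' \in \jump{\Gamma'}$. This reduces the whole statement to checking that the three universal quantifiers preserve the top value $X$.

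Next I would close the two innermost quantifiers $\forall p_2 : P$ and $\forall p_1 : P$, both of which are $\PP$ products. For the $\forall p_2 : P$ layer, Corollary~\ref{last_cor} gives two cases. When $\jump{\cdots \vdash P}(\cdots) \neq \varnothing$, the interpretation is $X^{\jump{\cdots \vdash P}(\cdots)}$; since every open set is $\le X = \mathbb{I} = 1$, equation~\eqref{eq:le_whole} of Lemma~\ref{heyting_conditions} gives $X^{\mathcal{A}} = 1 = X$. When $\jump{\cdots \vdash P}(\cdots) = \varnothing$, Corollary~\ref{last_cor} gives $X$ directly. Either way the layer evaluates to $X$, and the identical argument applied to the $\forall p_1 : P$ layer again returns $X$, so $\jump{(P:\Prop) \vdash \forall p_1 : P. \forall p_2 : P. p_1 =_P p_2}(\gamma) = X$ for all $\gamma$.

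Finally the outermost quantifier $\forall P : \Prop$ is a $\TP$ product, interpreted by Definition~\ref{app_prod_int} as $\bigsqcap\{\mathcal{B}(\alpha) \mid \alpha \in \mathcal{O}(X)\}$, where $\mathcal{A} = \jump{\vdash \Prop}() = \mathcal{O}(X)$ and each $\mathcal{B}(\alpha) = X$ by the previous step; the meet of a family of top elements is again $X$, so the full formula interprets to $X$. The main subtlety to get right is the very first observation, that all proof variables collapse to $\floor{\gamma}$, together with ensuring that every interpretation in play is total and well defined, which is guaranteed by the Soundness Theorem~\ref{soundness} and Corollary~\ref{last_cor}; once that is secured, the remainder is a routine computation with the Heyting-algebra identities of Lemma~\ref{heyting_conditions}.
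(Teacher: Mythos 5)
Your proposal is correct and follows essentially the same route as the paper's proof: the key observation that $\jump{\Gamma' \vdash p_1}(\gamma') = \floor{\gamma'} = \jump{\Gamma' \vdash p_2}(\gamma')$, then Theorem~\ref{logical_symbol_int}~(\ref{logical_symbol_int:eq}) to get $X$ for the body, then collapsing the quantifiers to $X$. The only difference is presentational: where the paper writes the three layers as one chain of nested meets, you unfold the two inner $\PP$ layers explicitly via Corollary~\ref{last_cor} and the identity $x \leq y \Rightarrow y^x = 1$ from Lemma~\ref{heyting_conditions}, which is a slightly more careful rendering of the same computation.
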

\begin{proof}
We shall prove that for any topological space $(X, \mathcal{O}(X))$,
the interpretation of this formula is $X$.
First note that, for any valuation $\gamma$,
\[\jump{P : \Prop ; p_1 : P ; p_2 : P \vdash
  p_1}(\gamma) = \floor\gamma = \jump{P : \Prop ; p_1 : P ; p_2 : P
  \vdash p_2}(\gamma). \]
By using (\ref{logical_symbol_int:eq}) from Theorem
\ref{logical_symbol_int}, we have
\[ \jump{P : \Prop ; p_1 : P ; p_2 : P \vdash p_1 =_P p_2}(\gamma) = X.\]
As a result,
\begin{eqnarray*}
  && \jump{\forall P : \Prop. \forall p_1, p_2 : P. p_1 =_P p_2} \\
  && \quad = \bigsqcap_{o \in \mathcal{O}(X)} \bigsqcap_{x_1, x_2 \in o} \jump{P : \Prop ; p_1 : P ; p_2 : P \vdash p_1 =_P p_2}(o, x_1, x_2) \\
  && \quad = \bigsqcap_{o \in \mathcal{O}(X)} \bigsqcap_{x_1, x_2 \in
    o} X \\
  && \quad = X.
\end{eqnarray*}
\end{proof}

Note that semantic proof irrelevance and logical proof irrelevance are
quite different. The former is about equality of interpretations of
non-proof terms under different valuations, while the latter uses the
equality of interpretations of proof terms under the same valuation.
As a result, their proofs are independent.

\section{Application}\label{application}
Let us compare Werner's classical model with our intuitionistic model on some simple cases.
\subsection{Classical model}
We start with the simplest case.
Let us consider the trivial topological space, whose base set is the singleton $\{ \varnothing \}$.
\begin{eqnarray*}
  X &:=& \{ \varnothing \} \\
  \mathcal{O}(X) &:=& \{\varnothing, \{\varnothing\}\} = \{0,1\}
\end{eqnarray*}
This topological space is a well behaved Alexandroff space, and coincides with Werner's Model~\cite{SetsInTypes}.
However this model is so coarse that it represents classical logic, since the principle of excluded middle holds.
\[
  \varnothing \in \jump{\forall P : \Prop. P \lor \neg P} = \bigsqcap_{o \in \mathcal{O}(X)} o \lor \neg o = 1.
\]

If we want to be more discriminating, we need more open sets in
$\mathcal{O}(X)$.

\subsection{Models disproving excluded middle}

Now, let us consider the next simplest topological space, which contains another element.
\begin{eqnarray*}
  X &:=& \{\varnothing, \{\varnothing\}\} \\
  \mathcal{O}(X) &:=& \{\varnothing, \{\varnothing\}, \{\varnothing, \{\varnothing\}\} = \{0, 1, 2\}
\end{eqnarray*}

Although this model stays simple, its topological space is fine enough to avoid the principle of excluded middle, since the following statement holds.
\begin{equation*}
  2 \notin \jump{\forall P : \Prop. P \lor \neg P} = 1.
\end{equation*}
This statement is derived by using the following equations.
\[ \neg 0 = 2 \hspace{4ex} \neg 1 = 0 \hspace{4ex} \neg 2 = 0 \]
By our soundness theorem, this proves that the principle of excluded middle cannot be deduced in \ECC.

\begin{table}[t]
  \caption{Value of $y^x$ for $X=\{\varnothing,\{\varnothing\}\}$}
  \begin{center}
    \begin{tabular}{|c||c|c|c|}
      \hline
      $y^x$ & $0$ & $1$ & $2$ \\ \hline \hline
      $0$ & $2$ & $0$ & $0$ \\ \hline
      $1$ & $2$ & $2$ & $1$ \\ \hline
      $2$ & $2$ & $2$ & $2$ \\ \hline
    \end{tabular}
  \end{center}
  \label{tab:valuea}
\end{table}




Yet this model is not fully intutionistic as the linearity axiom $(P \rightarrow Q) \lor (Q \rightarrow P)$ holds,
since we have the following fact by Table \ref{tab:valuea}.

\begin{eqnarray*}
  && \jump{\forall P : \Prop. \forall Q : \Prop. (P \rightarrow Q) \lor (Q \rightarrow P)} \\
  &=& \bigsqcap_{o_1, o_2 \in \mathcal{O}(X)} o_1^{o_2} \lor o_2^{o_1} \\
  &=& 2.
\end{eqnarray*}
This is actually interesting because it shows that we can use this model to prove non trivial facts,
for instance that the excluded middle cannot be deduced from the linearity axiom in \ECC.
Indeed,
\begin{equation*}
  \jump{(\forall P : \Prop. \forall Q : \Prop. (P \rightarrow Q) \lor
    (Q \rightarrow P)) \rightarrow (\forall P : \Prop. P \lor \neg P) } = 1.
\end{equation*}

By our soundness theorem, this equation means that there is no term proving the above implication in \ECC.\par

\begin{table}[t]
  \caption{Value of $y^x$ for $X=\{b, l , r, t\}$}
  \begin{center}
    \begin{tabular}{|c||c|c|c|c|c|c|}
      \hline
      $y^x$ & $\varnothing$ & $\alpha$ & $\beta$ & $\gamma$ & $\delta$ & $X$ \\ \hline \hline
      $\varnothing$ & $X$ & $\varnothing$ & $\varnothing$ & $\varnothing$ & $\varnothing$ & $\varnothing$ \\ \hline
      $\alpha$ & $X$ & $\alpha$ & $\alpha$ & $\alpha$ & $\alpha$ & $\alpha$ \\ \hline
      $\beta$ & $X$ & $X$ & $X$ & $\alpha$ & $\beta$ & $\beta$ \\ \hline
      $\gamma$ & $X$ & $X$ & $\alpha$ & $X$ & $\gamma$ & $\gamma$ \\ \hline
      $\delta$ & $X$ & $X$ & $X$ & $X$ & $X$ & $\delta$ \\ \hline
      $X$ & $X$ & $X$ & $X$ & $X$ & $X$ & $X$ \\ \hline
    \end{tabular}
  \end{center}
  \label{tab_valueb}
\end{table}


By adding more elements we can refine the model further.
Let $(X, \mathcal{O}(X))$ be the Alexandroff space
\begin{eqnarray*}
  X &:=& \{b, l, r, t\} \\
  \mathcal{O}(X) &=& \{\varnothing, \{b\}, \{b, l\}, \{b, r\}, \{b, l, r\}, X\} \\
  &\equiv& \{\varnothing, \alpha, \beta, \gamma, \delta, X\}
\end{eqnarray*}

In this model, $(P \rightarrow Q) \lor (Q \rightarrow P)$ does not hold,
since we have the following fact by Table \ref{tab_valueb}.
\begin{equation*}
  t \notin \jump{\forall P : \Prop. \forall Q : \Prop. (P \rightarrow Q) \lor (Q \rightarrow P)} = \alpha
\end{equation*}

\section{Interpretation of inductive types}\label{sec:inductive}
Until now, we have discussed the interpretation of \CCw.
However, Coq's type system is not \CCw \; but \CIC, which is \CCw \;
extended with (co)inductive types.
In this paper, we do not give a general definition of inductive types, but we present some examples of inductive definitions.
Here, we introduce a new type system \CCwp, which is \CCw \; with the $\lis$ type.

\subsection{Typing Rule of \CCwp}
To construct the new type system \CCwp, we add new terms and typing rules to \CCw.
Here, we give five new terms, \lis, \nil, \cons, \listrec, and \listind, and also give new typing rules for the $\lis$ type in Table \ref{tab:typing_rule_of_ECC_with_list}.

\begin{table}[h]
  \begin{center}
  \caption{New typing rules of \CCwp}
  \begin{tabular}{cc}
    \hline \hline
    $\displaystyle {[]} \vdash \lis : \Type_0 \rightarrow \Type_0$ & (\lis -intro) \\ \\
    $\displaystyle {[]} \vdash \nil : \forall A : \Type_0. \lis \; A$ & (\nil -intro) \\ \\
    $\displaystyle {[]} \vdash \cons : \forall A : \Type_0. A \rightarrow \lis \; A \rightarrow \lis \; A$ & (\cons -intro) \\ \\
    $\displaystyle {[]} \vdash \listrec : \forall A : \Type_0. \forall F : \lis \; A \rightarrow \Type_0.$ \hspace{29ex} & (\listrec -intro) \\
\hfill$\displaystyle F \; (\nil \; A) \rightarrow (\forall a : A. \forall l : \lis \; A. F \; l \rightarrow F \; (\cons \; A \; a \; l)) \rightarrow \forall l : \lis \; A. F \; l$ \\ \\
    $\displaystyle {[]} \vdash \listind : \forall A : \Type_0. \forall P : \lis \; A \rightarrow \Prop.$ \hspace{30ex} & (\listind -intro) \\
\hfill $\displaystyle P \; (\nil \; A) \rightarrow (\forall a : A. \forall l : \lis \; A. P \; l \rightarrow P \; (\cons \; A \; a \; l)) \rightarrow \forall l : \lis \; A. P \; l$ \\

    \hline \hline
  \end{tabular}
  \label{tab:typing_rule_of_ECC_with_list}
  \end{center}
\end{table}

Now, we define the beta equality for \CCwp.

\begin{definition}[Beta Equality for \CCwp]\label{beta_eccp}
  Let $=_{\beta'}$ be the smallest equivalence relation such that the following conditions hold.
  \begin{enumerate}[(i)]
  \item $(\lambda x : A. t) \; a =_{\beta'} t[x \backslash a]$.
  \item If $t_1 =_{\beta'} t_1'$ and $t_2 =_{\beta'} t_2'$, then $t_1 t_2 =_{\beta'} t_1' t_2'$.
  \item If $t =_{\beta'} t'$ and $A =_{\beta'} A'$, then $\lambda x : A. t =_{\beta'} \lambda x : A' t'$.
  \item If $A =_{\beta'} A'$ and $B =_{\beta'} B'$, then $\forall x : A. B =_{\beta'} \forall x : A' B'$.
  \item $\listrec \; A \; F \; t_1 \; t_2 \; (\nil \; A) =_{\beta'} t_1$
  \item $\listrec \; A \; F \; t_1 \; t_2 \; (\cons \; A \; a \; l) =_{\beta'} t_2 \; a \; l \; (\listrec \; A \; F \; t_1 \; t_2 \; l)$
  \item $\listind \; A \; F \; t_1 \; t_2 \; (\nil \; A) =_{\beta'} t_1$
  \item $\listind \; A \; F \; t_1 \; t_2 \; (\cons \; A \; a \; l) =_{\beta'} t_2 \; a \; l \; (\listind \; A \; F \; t_1 \; t_2 \; l)$
  \end{enumerate}
\end{definition}

Now that we defined \CCwp 's terms and typing rules,
we can define some familiar operators over $\lis$ type, such as membership operator `${\sf in}$'  for instance.
\begin{eqnarray*}
  && {\sf in} : \forall A : \Type_0. A \rightarrow \lis \; A \rightarrow \Prop := \\
  && \quad \lambda A : \Type_0 . \; \lambda a : A. \; \lambda l : \lis A . \\
  && \qquad (\listrec \; A \; (\lambda \_ : \lis \; A. \; \Prop) \\
  && \qquad\quad \mathrm{False} \\
  && \qquad\quad ( \; \lambda x : A. \; \lambda \_ : \lis \; A. \; \lambda {\sf ind} : \Prop. \; x = a \lor {\sf ind} \; ) \\
  && \qquad\quad l )
\end{eqnarray*}
We can then derive the following equalities from definition \ref{beta_eccp}.
\begin{itemize}
\item ${\sf in} \; A \; a \; (\nil \; A) =_{\beta'} \mathrm{False}$
\item ${\sf in} \; A \; a \; (\cons \; A \; x \; l) =_{\beta'} x = a \lor {\sf in} \; A \; a \; l$
\end{itemize}

\subsection{Interpretation}
Here, we define an interpretation of \CCwp.
The interpretation of lists is obtained through an initial algebra
construction.
We fix an arbitrary element denoted by the dot symbol `$\cdot$' to
interpret the unit type.
We can then define the interpretations of $\lis$, $\nil$, $\cons$, $\listrec$ and $\listind$ as follows.

\begin{enumerate}[{(}I{)}]

  \item Interpretation of $\lis$. \\
    First, we define the Kleene closure $S^*$ of a set $S$ as follows.
    \begin{equation*}
      S^* := \bigcup_{n \in \omega} S^n
    \end{equation*}
    where $S^n$ is an $n$-tuple of S, i.e.
    \begin{eqnarray*}
      S^0 &:=& \{ (0, \cdot) \} \\
      S^{n+1} &:=& \{(1, (a, l)) \mid a \in S \; \mathrm{and} \; l \in S^n\}.
    \end{eqnarray*}
    Then $\lis$ is interpreted as a function building the Kleene closure of a set.
    \begin{equation*}
      \jump{\Gamma \vdash \lis}(\gamma) := \{(S, S^*) \mid S \in \U_0 \}
    \end{equation*}
    We can easily check that
    \begin{equation*}
      \jump{\Gamma \vdash \lis}(\gamma) \in \jump{\Gamma \vdash \Type_0 \rightarrow \Type_0}(\gamma)
    \end{equation*}
    holds for any $\gamma \in \jump{\Gamma}$.

  \item Interpretation of $\nil$. \\
    $\nil$ is interpreted by the constant function returning `$(0,\cdot)$'.
    \begin{equation*}
      \jump{\Gamma \vdash \nil}(\gamma) := \{(S, (0, \cdot)) \mid S \in \U_0\},
    \end{equation*}
    We can again easily check that
    \begin{equation*}
      \jump{\Gamma \vdash \nil}(\gamma) \in \jump{\Gamma \vdash \forall A : \Type_0, \lis A}(\gamma)
    \end{equation*}
    holds since $(0,\cdot) \in S^*$ for any set $S$.

  \item Interpretation of $\cons$. \\
    First, we define $\cons_S$ as follows
    \begin{equation*}
      \cons_S := \{(s, (l, (1, s, l))) \mid s \in S \; \mathrm{and} \; l \in S^*\}
    \end{equation*}
    for any set $S$.
    We can easily check that
    \begin{equation*}
      \cons_S \in S \rightarrow S^* \rightarrow S^*
    \end{equation*}
    holds.
    Now, we can define the interpretation of $\cons$ as follows.
    \begin{equation*}
      \jump{\Gamma \vdash \cons}(\gamma) := \{(S, \cons_S) \mid S \in \U_0\}
    \end{equation*}
    We can again easily check that
    \begin{equation*}
      \jump{\Gamma \vdash \cons}(\gamma) \in \jump{\Gamma \vdash \forall A : \Type_0, A \rightarrow \lis \; A \; \rightarrow \lis \; A }(\gamma)
    \end{equation*}
    holds.

  \item Interpretation of $\listrec$. \\
    Given a function $T : S^* \rightarrow \U_0$,
    we define the dependent function $\mathrm{rec}^{(n)}_{t, f} \in \prod_{l \in S^n} T(l)$ by recursion on natural numbers.
    \begin{eqnarray*}
      \mathrm{rec}^{(0)}_{t, f} &:=& \{ ((0, \cdot), t) \} \\
      \mathrm{rec}^{(n+1)}_{t, f} &:=& \{ ((1, (a, l)), f(a)(l)(\mathrm{rec}^{(n)}_{t,f}(l))) \mid a \in S \; \mathrm{and} \; l \in S^n \}
    \end{eqnarray*}
    where $t$ is an element of $T((0,\cdot))$ and $f$ is a dependent function
    \begin{equation*}
      f \in \prod_{a \in S} \prod_{l \in S^*} \Bigl( T(l) \rightarrow T((1,(a, l))) \Bigr).
    \end{equation*}
    Next, we define $\mathrm{rec}_{t, f} \in \prod_{l \in S^*} T(l)$ as follows.
    \begin{equation*}
      \mathrm{rec}_{t,f} := \bigcup_{n \in \omega} \mathrm{rec}^{(n)}_{t,f}
    \end{equation*}

    Finally, we define $\listrec$ as follows.
    \begin{eqnarray*}
      \jump{\Gamma \vdash \listrec}(\gamma) &:=& \{ (S, (T, (t, (f, \mathrm{rec}_{t, f})))) \mid \\
      && \qquad S \in \U_0 \\
      && \qquad T \in S^{*} \rightarrow \U_0 \\
      && \qquad t \in T((0,\cdot)) \\
      && \qquad f \in \prod_{a \in S} \prod_{l \in S^*} \Bigl( T(l) \rightarrow T((1,(a, l))) \Bigr) \}
    \end{eqnarray*}
    We can again easily check that
    \begin{eqnarray*}
      && \jump{\Gamma \vdash \listrec}(\gamma) \in \jump{\Gamma \vdash \\
        && \quad \forall A : \Type_0. \forall F : \lis A \rightarrow \Type_0. \\
        && \qquad F (\nil \; A) \rightarrow (\forall a : A. \forall l : \lis A. F \; l \rightarrow F \; (\cons \; A \; a \; l)) \rightarrow \forall l : \lis \; A. F \; l \\
        && }(\gamma)
    \end{eqnarray*}
    holds.

  \item Interpretation of $\listind$. \\
    The interpretation of $\listind$ is much simple.
    Since $\listind$ is a proof term, its interpretation must be
    \begin{equation*}
      \jump{\Gamma \vdash \listind}(\gamma) := \floor{\gamma}.
    \end{equation*}
    For the soundness theorem, we shall prove that
    \begin{eqnarray*}
      && \jump{\Gamma \vdash \listind}(\gamma) \in \jump{\Gamma \vdash \\
      && \quad \forall A : \Type_0. \forall P : \lis A \rightarrow \Prop. \\
      && \qquad P (\nil \; A) \rightarrow (\forall a : A. \forall l : \lis A. P \; l \rightarrow P \; (\cons \; A \; a \; l)) \rightarrow \forall l : \lis \; A. P \; l \\
      && }(\gamma)
    \end{eqnarray*}
    holds.
    This is a corollary of Lemma \ref{interpretation_of_list_ind}.
\end{enumerate}

It remains to prove the soundness of \CCwp.
\begin{theorem}[soundness for \CCwp]
  \begin{enumerate}
  \renewcommand{\labelenumi}{(\arabic{enumi})}
  \item
    If $t_1 =_{\beta'} t_2$ holds and $\Gamma \vdash t_1 : T$ and $\Gamma \vdash t_2 : T$ are derivable, then $\jump{\Gamma \vdash t_1}(\gamma) = \jump{\Gamma \vdash t_2}(\gamma)$ holds.
  \item
    If $\Gamma \vdash t : T$ is derivable in \CCwp, then $\jump{\Gamma \vdash t}(\gamma) \in \jump{\Gamma \vdash T}(\gamma)$ holds.
  \end{enumerate}
  \end{theorem}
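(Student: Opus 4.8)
The plan is to extend the two soundness results already established for \ECC{} (Theorems~\ref{soundness_of_beta_equality} and~\ref{soundness}), treating only the cases introduced by the $\lis$ type, since every other case is literally an instance of the \ECC{} argument. For part (1) I would induct on the derivation of $t_1 =_{\beta'} t_2$: rules (i)--(iv) are the $\beta$-rule and the congruences already discharged in Theorem~\ref{soundness_of_beta_equality}, so only the four $\lis$-reduction rules (v)--(viii) are new. For part (2) I would induct on the typing derivation; all \ECC{} typing rules are handled by Theorem~\ref{soundness}, leaving exactly the five intro rules of Table~\ref{tab:typing_rule_of_ECC_with_list}.

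For part (1), the two $\listind$-rules (vii) and (viii) are immediate: since $\listind$ is a proof term, Lemma~\ref{proposition_proof_invariant}(iii) makes both sides of each equation proof terms under $\Gamma$, so both interpretations collapse to $\floor{\gamma}$ and agree. The two $\listrec$-rules (v) and (vi) are where computation happens, but there $A$, $F$, $t_1$, $t_2$ and the list arguments are not proof terms, so by Definition~\ref{app_prod_int} each application is ordinary set-theoretic application. Unfolding $\jump{\Gamma \vdash \listrec\,A\,F\,t_1\,t_2}(\gamma) = \mathrm{rec}_{t,f}$ with $t = \jump{\Gamma \vdash t_1}(\gamma)$ and $f = \jump{\Gamma \vdash t_2}(\gamma)$, rule (v) follows from $\mathrm{rec}_{t,f}((0,\cdot)) = t$ and rule (vi) from $\mathrm{rec}_{t,f}((1,(a,l))) = f(a)(l)(\mathrm{rec}_{t,f}(l))$, both read directly off the defining equations of $\mathrm{rec}^{(n)}_{t,f}$ and matching the right-hand sides after the same unfolding.

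For part (2), membership for $\lis$, $\nil$, $\cons$ is checked directly against the $\prd$-interpretation of their declared types, using $(0,\cdot)\in S^*$, $\cons_S \in S \rightarrow S^* \rightarrow S^*$, and the universe-closure facts of Lemma~\ref{univ_close}. For $\listrec$ the heart is $\mathrm{rec}_{t,f}\in \prod_{l\in S^*}T(l)$: I would show by induction on $n$ that $\mathrm{rec}^{(n)}_{t,f}$ is a total function on $S^n$ with $\mathrm{rec}^{(n)}_{t,f}(l)\in T(l)$ --- the base case uses $t\in T((0,\cdot))$ and the step uses $f(a)(l)\colon T(l)\to T((1,(a,l)))$ --- and then take the union over $n$; the surrounding nested memberships follow by unwinding the $\TT$-cases of $\prd$. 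The $\listind$ case is genuinely different, because $\listind$ is a proof term, so $\jump{\Gamma\vdash\listind}(\gamma)=\floor{\gamma}$ and soundness reduces to showing that $\floor\gamma$ lies in the interpretation of the $\listind$ type; this is exactly the content of Lemma~\ref{interpretation_of_list_ind}, which I expect to state as: the interpretation of the $\listind$ type equals the top element $X$.

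Proving that lemma is the main obstacle. Writing $S=\jump{A}$ and $P\colon S^*\to\mathcal{O}(X)$ for the interpretation of the predicate, the type interprets to $\bigsqcap_{S,P}\bigl((\bigsqcap_{l}P(l))^{c_1}\bigr)^{c_0}$, where $c_0 = P((0,\cdot))$ is the base hypothesis and $c_1 = \bigsqcap_{a\in S,\,l\in S^*}P((1,(a,l)))^{P(l)}$ is the step hypothesis. By $(x^b)^a = x^{a\sqcap b}$ from~(\ref{eq:powerprod}) this body is $(\bigsqcap_l P(l))^{c_0\sqcap c_1}$, which equals $X$ precisely when $c_0\sqcap c_1\le P(l)$ for every $l\in S^*$ by~(\ref{eq:le_whole}); since a meet of copies of $X$ is $X$, it suffices to establish this inequality. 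I would prove $c_0\sqcap c_1\le P(l)$ by induction on the stage of $l$ in the Kleene closure: for $l=(0,\cdot)$ it holds because $c_0$ is a meet-factor, and for $l=(1,(a,l'))$ the induction hypothesis $c_0\sqcap c_1\le P(l')$ together with the meet-factor $P((1,(a,l')))^{P(l')}$ of $c_1$ gives $c_0\sqcap c_1\le P(l')\sqcap P((1,(a,l')))^{P(l')}\le P((1,(a,l')))$ by the modus-ponens law~(\ref{eq:impl}). Thus list induction in the object theory is mirrored by structural induction on $S^*$ carried out inside the Heyting algebra, and the well-foundedness of $S^*=\bigcup_n S^n$ is what makes both this argument and the $\listrec$ well-definedness go through.
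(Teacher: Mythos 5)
Your proposal is correct and takes essentially the same route as the paper: part (1) is reduced to the new $\beta'$-rules (with $\listind$-rules trivial by proof-term collapse and $\listrec$-rules by unfolding $\mathrm{rec}_{t,f}$), and part (2) hinges on the lemma that the $\listind$ type interprets to $X$. Your proof of that lemma --- merging the base and step hypotheses into a single exponent via~(\ref{eq:powerprod}), showing $c_0 \sqcap c_1 \leq \psi(l)$ by induction on the stage of $l$ in $S^*$ using the modus-ponens law~(\ref{eq:impl}), and concluding with~(\ref{eq:le_whole}) --- is exactly the paper's argument, where your $c_0 \sqcap c_1$ is the paper's $\bigsqcap T^{\psi}$.
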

To prove (1), we need a \CCwp version of Lemma \ref{proposition_proof_invariant} and Lemma \ref{substitution_interpretation}.
They can be proved similary as for \CCw.
To prove (2), we need the following lemma that states the soundness of
induction on lists.
\begin{lemma}\label{interpretation_of_list_ind}
  \begin{eqnarray*}
    && \jump{\Gamma \vdash \forall A : \Type_0. \forall P : \lis \; A \rightarrow \Prop. \\
      && \qquad P (\nil \; A) \rightarrow (\forall a : A. \forall l : \lis \; A. P \; l \rightarrow P \; (\cons \; A \; a \; l)) \rightarrow \forall l : \lis \; A. P \; l \\
    && }(\gamma) = X
  \end{eqnarray*}
  where $X$ is the whole topological space $(X, \mathcal{O}(X))$.
\end{lemma}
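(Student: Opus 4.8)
The plan is to reduce the validity of the induction principle to a single order-theoretic inequality in the complete Heyting algebra $\mathcal{O}(X)$, and then to prove that inequality by an induction on lists that mirrors the syntactic induction scheme. Throughout I write $\Psi$ for the implication $P(\nil \; A) \rightarrow (\mathrm{step}) \rightarrow \forall l : \lis \; A. P\ l$, where $(\mathrm{step})$ abbreviates $\forall a : A. \forall l : \lis \; A. P\ l \rightarrow P(\cons \; A \; a \; l)$.

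First I would unfold the two outer universal quantifiers. Since $A : \Type_0$ and $P : \lis \; A \rightarrow \Prop$ have predicative sorts while the body is a proposition, both products are of the $\TP$ kind, so by Definition~\ref{app_prod_int} and Corollary~\ref{last_cor}, together with $\jump{\Type_0}=\U_0$ and $\jump{\lis \; A \rightarrow \Prop}(\gamma,S)=\prod_{l \in S^*}\mathcal{O}(X)$, the interpretation equals
\[
\bigsqcap_{S \in \U_0} \; \bigsqcap_{P \in (S^* \to \mathcal{O}(X))} \jump{\Gamma ; (A:\Type_0) ; (P : \lis \; A \rightarrow \Prop) \vdash \Psi}(\gamma, S, P).
\]
Because a meet of copies of the top element $X$ is again $X$, it suffices to fix an arbitrary $S \in \U_0$ and $P : S^* \to \mathcal{O}(X)$ and show that this inner interpretation of $\Psi$ equals $X$.

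Next I would compute the three parts of $\Psi$, using (via Corollary~\ref{last_cor} for application) that $\jump{\nil \; A}=(0,\cdot)$ and $\jump{\cons \; A \; a \; l}=(1,(a,l))$, so that the base, step and conclusion interpret to $\mathcal{C}_1 := P((0,\cdot))$, $\mathcal{C}_2 := \bigsqcap_{a \in S}\bigsqcap_{l \in S^*} P((1,(a,l)))^{P(l)}$ (the two inner $\forall$ being $\TP$ meets and the innermost $\rightarrow$ a $\PP$ exponential), and $\mathcal{D} := \bigsqcap_{l \in S^*} P(l)$. As a nested $\PP$ implication, $\Psi$ then interprets to $(\mathcal{D}^{\mathcal{C}_2})^{\mathcal{C}_1} = \mathcal{D}^{\mathcal{C}_1 \sqcap \mathcal{C}_2}$ by \eqref{eq:powerprod}. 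Hence, by \eqref{eq:le_whole}, it suffices to prove the inequality $\mathcal{C}_1 \sqcap \mathcal{C}_2 \le \mathcal{D}$.

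The crux is this inequality. Since $\mathcal{D}$ is a meet over $l \in S^*$, I would reduce it to $\mathcal{C}_1 \sqcap \mathcal{C}_2 \le P(l)$ for every $l$, proved by induction on the $n$ with $l \in S^n$. For $l=(0,\cdot)$ the target is $\mathcal{C}_1$, so the claim is immediate. For $l=(1,(a,l'))$, the induction hypothesis gives $\mathcal{C}_1 \sqcap \mathcal{C}_2 \le P(l')$, while $\mathcal{C}_2 \le P((1,(a,l')))^{P(l')}$ because that factor is one of the meetands of $\mathcal{C}_2$; combining these yields $\mathcal{C}_1 \sqcap \mathcal{C}_2 \le P(l') \sqcap P((1,(a,l')))^{P(l')} \le P((1,(a,l')))$ by the modus-ponens law \eqref{eq:impl}, which closes the induction. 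I expect the main obstacle to be organizing this semantic induction cleanly and justifying each reduction step (the passage through Corollary~\ref{last_cor} for the many nested $\forall$ and $\rightarrow$, and the identification of $\jump{\nil \; A}$ and $\jump{\cons \; A \; a \; l}$ with the set-theoretic constructors), so that the only genuinely new ingredient is the $\eqref{eq:impl}$-driven step. Note that $\mathcal{D}^{\mathcal{C}_1 \sqcap \mathcal{C}_2}=X$ holds uniformly even when $S^*$ or a quantifier domain is empty, since an empty meet is $X$ by \eqref{eq:whole}, so no separate edge cases arise.
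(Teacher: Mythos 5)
Your proposal is correct and follows essentially the same route as the paper's proof: the paper also fixes $S$ and $\psi \in S^* \to \mathcal{O}(X)$, collects the base and step antecedents into a family $T^{\psi}$ (whose meet is exactly your $\mathcal{C}_1 \sqcap \mathcal{C}_2$), proves $\bigsqcap T^{\psi} \leq \psi(l)$ by induction on the length stratification $l \in S^n$, and concludes via (\ref{eq:powerprod}) and (\ref{eq:le_whole}). The only difference is bookkeeping: the paper stratifies the antecedent family into sets $T^{\psi}_n$ and passes to the union, whereas you run the induction directly against the full meet, which is the same argument.
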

\begin{proof}
  Let S be a set and $\psi \in S^{*} \to \mathcal{O}(X)$ be a function.
  We define the set of open sets $T^{\psi}_{n}$ as
  \begin{eqnarray*}
    T^{\psi}_{0} &:=& \{ \psi(0, \cdot) \} \\
    T^{\psi}_{n+1} &:=& T^{\psi}_n \cup \{ \psi(1, (a, l))^{\psi(l)} \mid a \in S \; \mathrm{and} \; l \in S^n\}.
  \end{eqnarray*}
  For any $n \in \omega$ and $l \in S^n$, $\bigsqcap T^{\psi}_n \leq \psi(l)$ holds by induction on natural numbers.
  Let $T^{\psi}$ be their union
  \begin{equation*}
    T^{\psi} := \bigcup_{n \in \omega} T^{\psi}_n.
  \end{equation*}

  Since $T^{\psi}_n \subset T^{\psi}$, therefore $\bigsqcap T^{\psi} \leq \bigsqcap T^{\psi}_n$ holds, hence we have $\bigsqcap T^{\psi} \leq \psi(l)$ for any $l \in S^{*}$.
  Therefore, we also have $\bigsqcap T^{\psi} \leq \bigsqcap \{ \psi(l) \mid l \in S^{*}\}$.

  Now, let us calculate the first equation:
  \begin{eqnarray*}
    && \jump{\forall A : \Type_0. \forall P : \lis \; A \rightarrow \Prop. \\
      && \qquad P (\nil \; A) \rightarrow (\forall a : A. \forall l : \lis \; A. P \; l \rightarrow P \; (\cons \; A \; a \; l)) \rightarrow \forall l : \lis \; A. P \; l \\
      && }(\gamma) \\
    && = \bigsqcap_{S \in \U_0} \biggl( \bigsqcap_{\psi \in S^{*} \to \mathcal{O}(X)} \bigl(\bigsqcap_{l \in S^{*}} \psi(l) \bigr)^{(\bigsqcap T^{\psi})} \biggr) \\
    && = \bigsqcap_{S \in \U_0} \biggl( \bigsqcap_{\psi \in S^{*} \to \mathcal{O}(X)} X \biggr) \\
    && = X
  \end{eqnarray*}
  To calculate it, we use (\ref{eq:powerprod}) and (\ref{eq:le_whole}) from lemma \ref{heyting_conditions}.

\end{proof}

\section{Conclusion and Future Work}\label{future}
We could construct an intuitionistic set-theoretical model of \ECC,
which allowed us to prove that PEM and the linearity axiom do not hold in \ECC.
This model is not complete with respect to plain \ECC, since it is
proof-irrelevant.

This model combines an impredicative interpretation of propositional
terms and a predicative interpretation of non-propositional terms
as in~\cite{not_simple}.

Since one of our goals is to provide a model for Coq, we need to extend our model to all of CIC.
This requires working on several extensions:
\begin{itemize}
\item CIC adds subsumption between $\Prop$ and $\Type_i$.
\[
\frac{\Gamma \vdash A : \Prop}{\Gamma \vdash A : \Type_i}
\]
In fact, this rule breaks Lemma~\ref{substitution_in_proof} and \ref{product_type_substitution_weakening}.
As a result, Theorem \ref{soundness_of_beta_equality}, soundness of
beta equality, does not hold, as we show here.

Let $\mathcal{I}$ be $\lambda T : \Type_i. T \to T$.
In a set-theoretical interpretation, $\jump{\mathcal{I}}$ must be a function $A \mapsto \{f \mid f : A \to A\}$.
However, for any propositional term $P$, the term $\mathcal{I} P$ is a
tautology, and its interpretation is $X$, which leads to conflicting
interpretations as $\jump{\mathcal{I}}(\jump{P}) = \jump{P} \to \jump{P} \neq X$.
Using an idea from Aczel~\cite{rel_set_type}, Lee and Werner and their
followers~\cite{ModelEq, ModelEq2} avoided this
problem in an elegant way, by giving a uniform interpretation of
propositional and non-propositional terms.
They define the encoding functions $\app$ and $\lam$ as follows.
\begin{eqnarray*}
  \app (u,x) &:=& \{ z \mid (x, z) \in u\} \\
  \lam (f) &:=& \bigcup_{(x,y) \in f} \{(x,z) \mid z \in y\}
\end{eqnarray*}
These sastisfy the expected property $\app(\lam(f),x) = f(x)$.
Using the classical interpretation $\jump{\Prop} = \{\varnothing,
\{\varnothing\}\}$, the interpretation of the product type $\forall x
: A.B$ becomes $\{\lam(f) \mid f \in \prod_{x \in A} B(x)\} \in
\jump{\Prop}$.
It evaluates to $\{\varnothing\}$ iff $B(x) = \{\varnothing\}$ for all
$x \in A$.

Unfortunately, this solution does not apply to intuitionistic
settings, since $\jump{\Prop}$ should contain more elements,
making such a simple encoding impossible.
We believe that searching for a non uniform encoding is a more resonable
direction.

\item Finally, CIC adds inductive and co-inductive type definitions, and they both can live in the impredicative universe $\Prop$.
  Lee and Werner's model \cite{ModelEq} supports generic inductive definitions
  through their set theoretical interpretation in a
  predicative universe as it was defined by Dybjer~\cite{ind_rec_def},
  using Aczel's $\Phi$-closed set approach~\cite{CZF}.
  However, they do not extend this interpretation to the impredicative case.
  We have not yet investigated how to handle generic inductive
  definitions, co-inductive defintions, and impredicative inductive
  definitions in our model.
\end{itemize}

\bibliographystyle{apalike}
\bibliography{model}

\appendix

\section{Proof of Weakening}
\begin{proof}[Lemma~\ref{interpretation_composition}]\label{interpretation_weakening_appendix}
  We prove by induction hypothesis and by Lemma~\ref{weakening_in_proof}.
  \begin{itemize}
  \item $t = x$ (case of variable) \\
    It is clear since $t$ is not a proof term.

  \item $t = \lambda x : A. t'$ \\
    By Lemma~\ref{proposition_proof_invariant}, $t$ is also not a proof term for $\Gamma_1 ; (x' : A') ; \Gamma_2 ; (x : A)$.
    Therefore
    \begin{eqnarray*}
      && \jump{\Gamma_1 ; \Gamma_2 \vdash \lambda x : A.t'}(\gamma_1, \gamma_2) \\
      &=& \{ (\alpha, \jump{\Gamma_1 ; \Gamma_2 ; (x : A) \vdash t'}(\gamma_1, \gamma_2, \alpha) \; | \\
      && \qquad \alpha \in \jump{\Gamma_1 ; \Gamma_2 \vdash A}(\gamma_1, \gamma_2) \} \\
      &=& \{ (\alpha, \jump{\Gamma_1 ; (x' : A') ; \Gamma_2; (x : A) \vdash t'}(\gamma_1, \alpha', \gamma_2, \alpha) \; | \\
      && \qquad \alpha \in \jump{\Gamma_1 ; (x' : A') ; \Gamma_2 \vdash A}(\gamma_1, \alpha', \gamma_2) \} \\
      &=& \jump{\Gamma_1 ; (x' : A') ; \Gamma_2 \vdash \lambda x : A.t'}(\gamma_1, \alpha', \gamma_2)
    \end{eqnarray*}
    holds.

  \item $t = t_1 \; t_2$ \\
    By Lemma~\ref{proposition_proof_invariant}, $t_1$ is also not a proof term for $\Gamma_1 ; (x' : A') ; \Gamma_2$.
    If $t_2$ is a proof term for $\Gamma_1 ; (x' : A') ; \Gamma_2$, then
    \begin{eqnarray*}
      && \jump{\Gamma_1 ; \Gamma_2 \vdash t_1 \; t_2}(\gamma_1, \gamma_2) \\
      &=& \jump{\Gamma_1 ; \Gamma_2 \vdash t_1}(\gamma_1, \gamma_2)\left(\bot_X\right) \\
      &=& \jump{\Gamma_1 ; (x' : A') ; \Gamma_2 \vdash t_1}(\gamma_1, \alpha', \gamma_2)\left(\bot_X\right) \\
      &=& \jump{\Gamma_1 ; (x' : A') ; \Gamma_2 \vdash t_1 \; t_2}(\gamma_1, \alpha', \gamma_2)
    \end{eqnarray*}
    holds.
    If $t_2$ is not a proof term, then it is proved similarly.

  \item $t = \forall x : A.B$ \\
    By Lemma~\ref{product_type_substitution_weakening}
    \begin{equation*}
      \PD{\Gamma_1 ; \Gamma_2}{x}{A}{B} = \PD{\Gamma_1; (x' : A') ; \Gamma_2}{x}{A}{B}
    \end{equation*}
    holds.
    Hence we can only prove in the case of $\PD{\Gamma_1 : \Gamma_2}{x}{A}{B}$.
    
  \item $t = \Prop$ or $\Type_i$ \\
    Clear.
  \end{itemize}
\end{proof}

\section{Proof of Substitution}

\begin{proof}[Lemma~\ref{substitution_interpretation}]\label{interpretation_substitution_appendix}
    We define the predicates $P(\Delta)$ and $Q(\Delta, t)$ as follows.
  \begin{eqnarray*}
    P(\Delta) &\equiv& \forall \delta, (\gamma, \jump{\Gamma \vdash u}(\gamma), \delta) \in \jump{\Gamma ; (x : u) ; \Delta)} \\
    && \Rightarrow \quad (\gamma, \delta) \in \jump{\Gamma; \Delta[x \backslash u]}, \\
    Q(\Delta, t) &\equiv& \forall \delta, \jump{\Gamma ; (x : U) ; \Delta \vdash t}(\gamma, \jump{\Gamma\vdash u}(\gamma), \delta) \mbox{ is well defined} \\
    && \quad \Rightarrow \quad \biggl(\jump{\Gamma ; \Delta[x \backslash u] \vdash t[x \backslash u]}(\gamma, \delta) \mbox{ is well-defined} \\
    && \quad \quad \mbox{and} \quad \jump{\Gamma ; x : U ; \Delta \vdash t}(\gamma, \jump{\Gamma \vdash u}(\gamma), \delta) \\
    && \quad \quad \quad \quad \quad = \jump{\Gamma ; \Delta[x \backslash u] \vdash t[x \backslash u]}(\gamma, \delta)\biggr).
  \end{eqnarray*}
  We prove this lemma in three steps (i)$P([])$, (ii)$P(\Delta)
  \Rightarrow \forall t,Q(\Delta, t)$, (iii)$(\forall t, Q(\Delta, t))
  \Rightarrow \forall T, P(\Delta; y : T)$.
  \begin{enumerate}[{(}i{)}]
  \item $P([])$ \\
    Clear
  \item $P(\Delta) \Rightarrow \forall t, Q(\Delta, t)$ \\
    If $t$ is a proof term for $\Gamma ; (x : U) ; \Delta$, then $t[x \backslash u]$ is also a proof term for $\Gamma ; \Delta[x \backslash u]$ by Lemma~\ref{substitution_in_proof}.
    Therefore
    \begin{eqnarray*}
      \jump{\Gamma ; (x : U) ; \Delta \vdash t}(\gamma, \jump{\Gamma \vdash u}(\gamma), \delta) &=& \floor{\gamma, \jump{\Gamma \vdash u}(\gamma), \delta} \\
      \jump{\Gamma ; \Delta \vdash t[x \backslash u]}(\gamma, \delta) &=& \floor{\gamma, \delta}
    \end{eqnarray*}
    hold.
    Hence we must prove is
    \begin{equation*}
      \floor{\gamma, \jump{\Gamma \vdash u}(\gamma), \delta} = \floor{\gamma, \delta}.
    \end{equation*}
    If $u$ is not a proof term for $\Gamma$, it is clear.
    If $u$ is a proof term then $\jump{\Gamma \vdash u}(\gamma) = \floor{\gamma}$ holds, therefore it also hold.

    Next, if $t$ is not a proof term for $\Gamma ; (x : U) ; \Delta$, then $t[x \backslash u]$ is also not a proof term for $\Gamma ; \Delta[x \backslash u]$ by Lemma~\ref{substitution_in_proof}.
    We prove by induction on the term $t$.
    \begin{itemize}
    \item $t = \Prop$ or $\Type_i$ \\
      It is clear.
    \item $t = \forall a : A.B$ \\
      We assume that
      \begin{equation*}
        \jump{\Gamma ; (x : U) ; \Delta \vdash \forall a : A.B}(\gamma, \jump{\Gamma \vdash u}(\gamma), \delta)
      \end{equation*}
      is well defined, therefore
      \begin{eqnarray*}
        \jump{\Gamma ; (x : U) ; \Delta \vdash A}(\gamma, \jump{\Gamma \vdash u}(\gamma), \delta), \\
        \jump{\Gamma ; (x : U) ; \Delta ; (a : A) \vdash B}(\gamma, \jump{\Gamma \vdash u}(\gamma), \delta, \alpha)
      \end{eqnarray*}
      are also well defined.
      By induction hypothesis, $Q(\Delta, A)$ and $Q(\Delta ; (a : A), B)$ are assumed.
      By Lemma~\ref{product_type_substitution_weakening}, the value of $\PT{}{}$ is invariant.
      Hence the statement holds in this case.
    \item $t = \lambda a : A. t$ \\
      We assume that
      \begin{equation*}
        \jump{\Gamma ; (x : U) ; \Delta \vdash \lambda a : A. t}(\gamma, \jump{\Gamma \vdash u}(\gamma), \delta)
      \end{equation*}
      is well defined, therefore
      \begin{eqnarray*}
        \jump{\Gamma ; (x : U) ; \Delta ; (a : A) \vdash t}(\gamma, \jump{\Gamma \vdash t}(\gamma), \delta, \alpha) \\
        \jump{\Gamma ; (x : U) ; \Delta \vdash A}(\gamma, \jump{\Gamma \vdash u}(\gamma), \delta)
      \end{eqnarray*}
      are also well defined.
      By induction hypothesis, $Q(\Delta ; (a : A), t)$ and $Q(\Delta, A)$ are assumed.
      Hence the statement holds in this case.      
    \item $t = a \; b$ \\
      We assume that
      \begin{equation*}
        \jump{\Gamma ; (x : U) ; \Delta \vdash a \; b}(\gamma, \jump{\Gamma \vdash u}(\gamma), \delta)
      \end{equation*}
      is well defined, therefore
      \begin{equation*}
        \jump{\Gamma ; (x : U) ; \Delta \vdash a}(\gamma, \jump{\Gamma \vdash u}(\gamma), \delta)
      \end{equation*}
      is well defined and a function whose domain contains
      \begin{equation*}
        \jump{\Gamma ; (x : U) ; \Delta \vdash b}(\gamma, \jump{\Gamma \vdash u}(\gamma), \delta).
      \end{equation*}
      By induction hypothesis, $Q(\Delta, a)$ and $Q(\Delta, b)$ are assumed.
      By Lemma~\ref{substitution_in_proof}, if $b$ is a (resp. not) proof term for $\Gamma ; (x : U) ; \Delta$, then $b[x \backslash u]$ is also a (resp. not) proof term for $\Gamma ; \Delta[x \backslash u]$. 
      Hence the statement holds in this case.
    \item $t = y$ (case of variable) \\
      We prove in three cases as follows.
      \begin{itemize}
      \item The variable $y$ occur in $\Gamma$. \\
        In this case, we have
        \begin{eqnarray*}
          \jump{\Gamma ; (x : U) ; \Delta \vdash y}(\gamma, \jump{\Gamma \vdash u}(\gamma), \delta) &=& \gamma_i, \\
          \jump{\Gamma ; \Delta[x \backslash u] \vdash y[x \backslash u]}(\gamma, \delta) &=& \gamma_i.
        \end{eqnarray*}
        for some $i$.
        Hence the statement holds in this case.

      \item The case $y = x$. \\
        We have
        \begin{eqnarray*}
          \jump{\Gamma ; (x : U) ; \Delta \vdash x}(\gamma, \jump{\Gamma \vdash u}(\gamma), \delta) = \jump{\Gamma \vdash u}(\gamma), \\
          \jump{\Gamma ; \Delta[x \backslash u] \vdash x[x \backslash u]}(\gamma, \delta) = \jump{\Gamma ; \Delta[x \backslash u] \vdash u}(\gamma).
        \end{eqnarray*}
        By Lemma~\ref{interpretation_composition}, the statement holds in this case.
        
      \item The variable $y$ occur in $\Delta$. \\
        In this case, we have
        \begin{eqnarray*}
          \jump{\Gamma ; (x : U) ; \Delta \vdash y}(\gamma, \jump{\Gamma \vdash u}(\gamma), \delta) &=& \delta_i
        \end{eqnarray*}
        for some $i$.
        Since $(\gamma, \delta) \in \jump{\Gamma ; \Delta[x \backslash u]}$ by hypothesis $P(\Delta)$, hence following equation is well defined.
        \begin{equation*}
          \jump{\Gamma ; \Delta[x \backslash u] \vdash y}(\gamma, \delta) = \delta_i
        \end{equation*}
        Hence the statement holds in this case.
      \end{itemize}

    \end{itemize}
  \item $(\forall t, Q(\Delta, t)) \Rightarrow \forall T, P(\Delta; y : T)$ \\
    We assume that
    \begin{equation*}
      (\gamma, \jump{\Gamma \vdash u}(\gamma), \delta, \epsilon) \in \jump{\Gamma ; (x : U) ; \Delta ; (y : T)}.
    \end{equation*}
    By definition of interpretaion of context, we have
    \begin{eqnarray*}
      (\gamma, \jump{\Gamma \vdash u}(\gamma), \delta) &\in& \jump{\Gamma ; (x : U) ; \Delta} \\
      \epsilon &\in& \jump{\Gamma ; (x : U) ; \Delta \vdash T}(\gamma, \jump{\Gamma \vdash u}(\gamma), \delta)
    \end{eqnarray*}
    Since $Q(\Delta, T)$ holds, hence following equations hold.
    \begin{eqnarray*}
      (\gamma, \delta) &\in& \jump{\Gamma ; \Delta[x \backslash u]}, \\
      \epsilon &\in& \jump{\Gamma ; \Delta[x \backslash u] \vdash T[x \backslash u]}(\gamma, \delta).
    \end{eqnarray*}
    Therefore we have
    \begin{equation*}
      (\gamma, \delta, \epsilon) \in \jump{\Gamma ; \Delta[x \backslash u] \vdash T[x \backslash u]}(\gamma, \delta).
    \end{equation*}
  \end{enumerate}
\end{proof}

\section{Proof of semantic proof irrelevance}

\begin{proof}[Lemma~\ref{interpretation_constant}]\label{interpretation_constant_appendix}
  \
  \begin{itemize}
  \item $t = x$ (case of variable) \\
    Since $t$ is not a proof term for $\Gamma ; (x' : A') ; \Delta$, therefore we have $t \neq x'$, hence the statement holds in this case.

  \item $t = \lambda x : A. t'$ \\
    By Lemma~\ref{proposition_proof_invariant}, $t$ is also not a proof term for $\Gamma ; (x' : A') ; \Delta ; (x : A)$.
    Therefore
    \begin{eqnarray*}
      && \jump{\Gamma ; (x' : A') ; \Delta \vdash \lambda x : A.t'}(\gamma, p_1, \delta) \\
      &=& \{ (\alpha , \jump{\Gamma ; (x' : A') ; \Delta ; (x : A) \vdash t'}(\gamma, p_1, \delta, \alpha)) \; | \\
      && \qquad \; \alpha \in \jump{\Gamma ; (x' : A') ; \Delta \vdash A}(\gamma, p_1, \delta) \} \\
      &=& \{ (\alpha , \jump{\Gamma ; (x' : A') ; \Delta ; (x : A) \vdash t'}(\gamma, p_2, \delta, \alpha)) \; | \\
      && \qquad \; \alpha \in \jump{\Gamma ; (x' : A') ; \Delta \vdash A}(\gamma, p_2, \delta) \} \\
      &=& \jump{\Gamma ; (x' : A') ; \Delta \vdash \lambda x : A.t'}(\gamma, p_2, \delta)
    \end{eqnarray*}
    holds.
    
  \item $t = t_1 \; t_2$ \\
    By Lemma~\ref{proposition_proof_invariant}, $t_1$ is also not a proof term for $\Gamma ; (x' : A') ; \Delta$.
    If $t_2$ is a proof term for $\Gamma ; (x' : A') ; \Delta$, then
    \begin{eqnarray*}
      && \jump{\Gamma ; (x' : A') ; \Delta \vdash t_1 \; t_2}(\gamma, p_1, \delta) \\
      &=& \jump{\Gamma ; (x' : A') ; \Delta \vdash t_1}(\gamma, p_1, \delta)(\bot_X)\\
      &=& \jump{\Gamma ; (x' : A') ; \Delta \vdash t_1}(\gamma, p_2, \delta)(\bot_X) \\
      &=& \jump{\Gamma ; (x' : A') ; \Delta \vdash t_1 \; t_2}(\gamma, p_2, \delta)
    \end{eqnarray*}
    holds.
    If $t_2$ is not a proof term for $\Gamma ; (x' : A') ; \Delta$, then similarly.

  \item $t = \forall x : A.B$ \\
    Similarly.
  \end{itemize}
\end{proof}

\section{Proof of Soundness}

\begin{proof}[Theorem \ref{soundness}]\label{soundness_ind}
  \
  \begin{enumerate}
  \item \ Case of Axiom \\
    $\jump{\Prop} \in \jump{\Type_i}$ is holds by the condition of $(X, \mathcal{O}(X))$.

  \item \ Case of Weakening \\
    It holds by Lemma~\ref{interpretation_composition}.
    


  \item \ Case of Subsumption \\
    It holds by (\ref{enum:univ_subset}) of Lemma~\ref{univ_close}.
  \item \ Case of PI-Type \\
    We will show the fact that
    \begin{eqnarray*}
      &\bigl(\forall \gamma,\alpha,& \jump{\Gamma \vdash A}(\gamma) \in \jump{\Gamma \vdash s_1}(\gamma) \\
      &\land& \jump{\Gamma ; (x : A) \vdash B}(\gamma, \alpha) \in \jump{\Gamma ; (x : A) \vdash s_2}(\gamma, \alpha)\bigr) \\
      &\Rightarrow& (\forall \gamma, \jump{\Gamma \vdash \forall x:A.B}(\gamma) \in \jump{\Gamma \vdash s_3}(\gamma)).
    \end{eqnarray*}
    There are four cases as follows.
    \begin{itemize}
    \item $\PD{\Gamma}{x}{A}{B} = \TT$\\
      By definition of the interpretation of judgment, the following equation
      \begin{equation*}
        \jump{\Gamma \vdash \forall x : A.B}(\gamma) = \prod_{\alpha \in \jump{\Gamma \vdash A}(\gamma)}\jump{\Gamma ; (x : A) \vdash B}(\gamma, \alpha)
      \end{equation*}
      holds.
      Since $\jump{\Gamma \vdash A}(\gamma) \in \U_i$ , $\jump{\Gamma ; (x : A) \vdash B}(\gamma, \alpha) \in \U_i$ for any $\gamma, \alpha$ and Lemma \ref{univ_close}~(\ref{enum:univ_prod}), we have
        \begin{equation*}
          \displaystyle\prod_{\alpha \in \jump{\Gamma \vdash A}(\gamma)} \jump{\Gamma ; (x : A) \vdash B}(\gamma, \alpha) \in \U_i.
        \end{equation*}

      \item $\PD{\Gamma}{x}{A}{B} = \PT$\\
      By definition of the interpretation of judgment, the following equation
      \begin{eqnarray*}
        && \jump{\Gamma \vdash \forall x : A.B}(\gamma) =\\
        &&\qquad \biggl\{ f \in \prod_{\alpha \in \jump{\Gamma \vdash A}(\gamma)}\jump{\Gamma ; (x : A) \vdash B}(\gamma, \alpha) \; | \\
        && \qquad \qquad \; f \mbox{ is a constant function}\biggr\}
      \end{eqnarray*}
      holds.
      Since $\jump{\Gamma \vdash A}(\gamma) \in \U_i$ , $\jump{\Gamma ; (x : A) \vdash B}(\gamma, \alpha) \in \U_i$ for any $\gamma, \alpha$ and Lemma \ref{univ_close}~(\ref{enum:univ_prod}),
      the statement holds.

    \item $\PD{\Gamma}{x}{A}{B} = \TP$ \\
      It is clear since $\jump{\Gamma \vdash \forall x : A.B}(\gamma)$ is an open set by definition of the interpretation of judgment.
    \item $\PD{\Gamma}{x}{A}{B} = \PP$ \\
      It is clear since $\jump{\Gamma \vdash \forall x : A.B}(\gamma)$ is an open set by definition of the interpretation of judgment.

    \end{itemize}

  \item \ Case of Abstract \\
    We will show the fact that
    \begin{eqnarray*}
      &\bigl(\forall \gamma, \alpha,& \jump{\Gamma ; (x : A) \vdash t}(\gamma, \alpha) \in \jump{\Gamma ; (x : A) \vdash B}(\gamma, \alpha)\bigr) \\
      &\Rightarrow& \bigl(\forall \gamma, \jump{\Gamma \vdash \lambda x : A.t}(\gamma) \in \jump{\Gamma \vdash \forall x : A.B}(\gamma)\bigr).
    \end{eqnarray*}
    There are four cases as follows.
    \begin{itemize}
    \item $\PD{\Gamma}{x}{A}{B} = \TT$ \\
      By definition of the interpretation, we have the following equations:
      \[ \begin{array}{l}
        \displaystyle\jump{\Gamma \vdash \lambda x : A.t}(\gamma) = \\
        \displaystyle\quad \Bigl\{\bigl(\alpha, \jump{\Gamma ; (x : A) \vdash t}(\gamma, \alpha) \bigr) \; | \; \alpha \in \jump{\Gamma \vdash A}(\gamma) \Bigr\}, \\
        \displaystyle\jump{\Gamma \vdash \forall x : A.B}(\gamma) = \\
        \displaystyle\quad\prod_{\alpha \in \jump{\Gamma \vdash A}(\gamma)}\jump{\Gamma ; (x : A) \vdash B}(\gamma, \alpha).
      \end{array} \]
      Then, we must prove the following equation:
      \[ \begin{array}l
        \displaystyle \Bigl\{\bigl(\alpha, \jump{\Gamma ; (x : A) \vdash t}(\gamma,
           \alpha) \bigr) \; | \; \alpha \in \jump{\Gamma \vdash
           A}(\gamma) \Bigr\} \\
           \displaystyle \hfill \in \prod_{\alpha \in \jump{\Gamma \vdash A}(\gamma)}\jump{\Gamma ; (x : A) \vdash B}(\gamma, \alpha).
      \end{array} \]
      But it is clear\footnote{
        If $\jump{\Gamma \vdash A}(\gamma)$ is the empty set, then $\jump{\Gamma \vdash \forall x :A.B}(\gamma) = \{ \varnothing \}$ and $\jump{\Gamma \vdash \lambda x : A.t}(\gamma) = \varnothing$ hold.} by induction of hypothesis.

    \item $\PD{\Gamma}{x}{A}{B} = \PT$ \\
      It is similar the case of $\PD{\Gamma}{x}{A}{B} = \TT$.
      We must prove that
      \[ \begin{array}{l}
        \displaystyle\jump{\Gamma \vdash \lambda x : A.t}(\gamma) = \\
        \displaystyle\quad \Bigl\{\bigl(\alpha, \jump{\Gamma ; (x : A) \vdash t}(\gamma, \alpha) \bigr) \; | \; \alpha \in \jump{\Gamma \vdash A}(\gamma) \Bigr\}      \end{array} \]
      is a constant function.
      Since $A$ is a propositional term for $\Gamma$, this is a
      consequence of Lemma~\ref{interpretation_constant}.

    \item $\PD{\Gamma}{x}{A}{B} = \TP$ \\
      Since $\lambda x : A.t$ is a proof term, we have following equations
      \begin{equation*}
        \jump{\Gamma \vdash \lambda x : A.t}(\gamma) = \floor{\gamma}.
      \end{equation*}
      Hence, the fact we must prove is 
      \begin{equation*}
        \floor{\gamma} \in \jump{\Gamma \vdash \forall x : A.B}(\gamma).
      \end{equation*}
      By definition we have the following equation.
      \[\begin{array}l
        \displaystyle \jump{\Gamma \vdash \forall x : A.B}(\gamma) = \\
        \displaystyle \quad \bigsqcap\{ \jump{\Gamma ; (x : A) \vdash B}(\gamma, \alpha) \; | \; \alpha \in \jump{\Gamma \vdash A}(\gamma) \}.
      \end{array}\]
      If $\jump{\Gamma \vdash A}(\gamma)$ is the empty set, then the statement holds since $\jump{\Gamma \vdash \forall x : A.B}(\gamma) = X$ by Lemma~\ref{heyting_conditions}~(\ref{eq:whole}).
      We assume that $\jump{\Gamma \vdash A}(\gamma)$ is a non-empty set.
      We have
      \begin{equation*}
        \forall \alpha \in \jump{\Gamma \vdash A}(\gamma), \; \floor{\gamma} \in \jump{\Gamma ; (x : A) \vdash B}(\gamma, \alpha).
      \end{equation*}
      since $\jump{\Gamma ; (x : A) \vdash t}(\gamma, \alpha) = \floor{\gamma, \alpha} = \floor{\gamma}$.
      Therefore, we have the following equation
      \begin{equation*}
        \floor{\gamma} \in \bigsqcap\{ \jump{\Gamma ; (x : A) \vdash B}(\gamma, \alpha) \; | \; \alpha \in \jump{\Gamma \vdash A}(\gamma) \}.
      \end{equation*}
      Hence the statement holds in this case.

    \item $\PD{\Gamma}{x}{A}{B} = \PP$ \\
      Since $\lambda x : A.B$ is a proof term, we have the following equation
      \begin{equation*}
        \jump{\Gamma \vdash \lambda x : A.t}(\gamma) = \floor{\gamma}.
      \end{equation*}
      Hence, the fact we must prove is 
      \begin{equation*}
        \floor{\gamma} \in \jump{\Gamma \vdash \forall x : A.B}(\gamma).
      \end{equation*}
      To prove it, we show that
      \begin{equation*}
        \downarrow \floor{\gamma} \; \subset \; \jump{\Gamma \vdash \forall x : A.B}(\gamma).
      \end{equation*}
      This fact is equivalent to the following equation
      \begin{equation*}
        \downarrow \floor{\gamma} \cap \jump{\Gamma \vdash A}(\gamma) \; \subset \; \bigsqcap_{\alpha \in \jump{\Gamma \vdash A}(\gamma)} \jump{\Gamma ; (x : A) \vdash B}(\gamma, \alpha)
      \end{equation*}
      since definition of interpretation and Heyting Algebra.
      We assume $\varepsilon \in \downarrow \floor{\gamma} \cap \jump{\Gamma \vdash A}(\gamma)$.
      By Lemma~\ref{interpretation_constant}, we have
      \begin{equation*}
        \bigsqcap_{\alpha \in \jump{\Gamma \vdash A}(\gamma)} \jump{\Gamma ; (x : A) \vdash B}(\gamma, \alpha) = \jump{\Gamma ; (x : A) \vdash B}(\gamma, \varepsilon)
      \end{equation*}
      holds; since $\varepsilon \in \jump{\Gamma \vdash A}(\gamma)$ holds, right side of this equation is well defined.
      Here, we also have
      \begin{equation*}
        \floor{\gamma, \varepsilon} \in \jump{\Gamma ; (x : A) \vdash B}(\gamma, \varepsilon)
      \end{equation*}
      by induction hypothesis.
      Now, we prove that $\floor{\gamma, \varepsilon} = \varepsilon$ holds.
      Since $\varepsilon \in \downarrow \floor{\gamma}$ holds, therefore we have $\varepsilon \leq \floor{\gamma}$.
      Hence we have $\varepsilon = \floor{\gamma, \varepsilon}$, and the statement holds in this case.
      
    \end{itemize}

  \item \ Case of Apply \\
    We will show the fact that
    \begin{eqnarray*}
      &\bigl(\forall \gamma,& \jump{\Gamma \vdash u}(\gamma) \in
         \jump{\Gamma \vdash \forall x : A.B}(\gamma) \\
      & \land & \jump{\Gamma \vdash v}(\gamma) \in \jump{\Gamma \vdash A}(\gamma)\bigr) \\
      & \Rightarrow & \bigl(\forall \gamma, \jump{\Gamma \vdash u \; v}(\gamma) \in \jump{\Gamma \vdash B[x \backslash v]}(\gamma)\bigr).
    \end{eqnarray*}
    There are four cases as follows.
    \begin{itemize}
    \item $\PD{\Gamma}{x}{A}{B} = \TT$ \\
      By definition of the interpretation of judgment and induction hypothesis, the following equation
      \begin{eqnarray*}
        \jump{\Gamma \vdash u \; v}(\gamma) &=& \jump{\Gamma \vdash u}(\gamma) \bigl(\jump{\Gamma \vdash v}(\gamma) \bigr) \\
        \jump{\Gamma \vdash u}(\gamma) &\in& \prod_{\alpha \in \jump{\Gamma \vdash A}(\gamma)} \jump{\Gamma ; (x : A) \vdash B}(\gamma, \alpha) \\
        \jump{\Gamma \vdash v}(\gamma) &\in& \jump{\Gamma \vdash A}(\gamma)
      \end{eqnarray*}
      hold.
      Therefore, we have
      \begin{equation*}
        \jump{\Gamma \vdash u \; v}(\gamma) \in \jump{\Gamma ; (x : A) \vdash B}(\gamma, \jump{\Gamma \vdash v}(\gamma)).
      \end{equation*}
      By Lemma \ref{substitution_interpretation}, we have
      \begin{equation*}
        \jump{\Gamma ; (x : A) \vdash B}(\gamma, \jump{\Gamma \vdash v}(\gamma)) = \jump{\Gamma \vdash B[x \backslash v]}(\gamma).
      \end{equation*}
      Hence, the statement holds in this case.

    \item $\PD{\Gamma}{x}{A}{B} = \PT$ \\
      By definition of the interpretation of judgment and indcution hypothesis, the following equation
      \begin{eqnarray*}
        \jump{\Gamma \vdash u \; v}(\gamma) &=& \jump{\Gamma \vdash u}(\gamma) \bigl(\bot_X \bigr) \\
        \jump{\Gamma \vdash u}(\gamma) &\in& \biggl\{f \in \prod_{\alpha \in \jump{\Gamma \vdash A}(\gamma)} \jump{\Gamma ; (x : A) \vdash B}(\gamma, \alpha) \; | \\
        && \qquad f \mbox{ is a constant function} \biggr\} \\
        \jump{\Gamma \vdash v}(\gamma) &\in& \jump{\Gamma \vdash A}(\gamma)
      \end{eqnarray*}
      hold.
      Therefore, we have
      \begin{eqnarray*}
        \jump{\Gamma \vdash u \; v}(\gamma) &\in& \jump{\Gamma ; (x : A) \vdash B}(\gamma, \bot_X) \\
        &=& \jump{\Gamma ; (x : A) \vdash B}(\gamma, \jump{\Gamma \vdash v}(\gamma))
      \end{eqnarray*}
      by Lemma~\ref{interpretation_constant}.
      Moreover, the following equation
      \begin{equation*}
        \jump{\Gamma ; (x : A) \vdash B}(\gamma, \jump{\Gamma \vdash v}(\gamma)) = \jump{\Gamma \vdash B[x \backslash v]}(\gamma).
      \end{equation*}
      holds by Lemma \ref{substitution_interpretation}
      Hence, the statement holds in this case.

    \item $\PD{\Gamma}{x}{A}{B} = \TP$ \\
      It suffices to show that $\floor{\gamma} \in \jump{\Gamma \vdash B[x \backslash v]}(\gamma)$, since $\jump{\Gamma \vdash u}(\gamma) = \jump{\Gamma \vdash u \; v}(\gamma) = \floor{\gamma}$ holds.
      By induction hypothesis, we have the following equation
      \begin{equation*}
        \floor{\gamma} \in \bigsqcap\{ \jump{\Gamma ; (x : A) \vdash B}(\gamma, \alpha) \; | \; \alpha \in \jump{\Gamma \vdash A}(\gamma) \}.
      \end{equation*}
      This equation implies the fact that
      \begin{equation*}
        \forall \alpha \in \jump{\Gamma \vdash A}(\gamma), \; \floor{\gamma} \in \jump{\Gamma ; (x : A) \vdash B}(\gamma, \alpha).
      \end{equation*}
      By Lemma \ref{substitution_interpretation} and the fact $\jump{\Gamma \vdash v}(\gamma) \in \jump{\Gamma \vdash A}(\gamma)$, we have
      \begin{equation*}
        \floor{\gamma} \in \jump{\Gamma \vdash B [x \backslash v]}(\gamma).
      \end{equation*}
      Hence, the statement holds in this case.

    \item $\PD{\Gamma}{x}{A}{B} = \PP$ \\
      By induction hypothesis, we have
      \begin{eqnarray*}
        \floor{\gamma} &\in& \jump{\Gamma \vdash \forall x : A.B}(\gamma), \\
        \floor{\gamma} &\in& \jump{\Gamma \vdash A}(\gamma)
      \end{eqnarray*}
      since $\jump{\Gamma \vdash u}(\gamma) = \jump{\Gamma \vdash v}(\gamma)$ holds.
      The following equation holds.
      \begin{equation*}
        \jump{\Gamma \vdash \forall x : A.B}(\gamma) = \biggl(\bigsqcap_{\alpha \in \jump{\Gamma \vdash A}(\gamma)} \jump{\Gamma ; (x : A) \vdash B}(\gamma, \alpha) \biggr)^{\jump{\Gamma \vdash A}(\gamma)}    
      \end{equation*}
      By (\ref{eq:impl}) in Lemma~\ref{heyting_conditions}, we have
      \begin{eqnarray*}
        && \jump{\Gamma \vdash \forall x : A.B}(\gamma) \; \cap \; \jump{\Gamma \vdash A}(\gamma) \\
        && \quad \subset \; \bigsqcap_{\alpha \in \jump{\Gamma \vdash A}(\gamma)}\jump{\Gamma ; (x : A) \vdash B}(\gamma, \alpha).
      \end{eqnarray*}
      Then we also have
      \begin{equation*}
        \floor{\gamma} \in \bigsqcap_{\alpha \in \jump{\Gamma \vdash A}(\gamma)}\jump{\Gamma ; (x : A) \vdash B}(\gamma, \alpha).
      \end{equation*}
      Hence
      \begin{equation*}
        \floor{\gamma} \in \jump{\Gamma ; (x : A) \vdash B}(\gamma, \jump{\Gamma \vdash v}(\gamma))
      \end{equation*}
      holds.
      By Lemma~\ref{substitution_interpretation} and $\jump{\Gamma \vdash u \; v}(\gamma) = \floor{\gamma}$, the statement holds in this case.
    \end{itemize}

  \item \ Case of Variable \\
    We show that
    \begin{equation*}
      \forall \alpha \in \jump{\Gamma \vdash A}(\gamma), \jump{\Gamma ; (x : A) \vdash x}(\gamma, \alpha) \in \jump{\Gamma ; (x : A) \vdash A}(\gamma, \alpha).
    \end{equation*}
    By Lemma~\ref{interpretation_composition}, we must prove is
    \begin{equation*}
      \forall \alpha \in \jump{\Gamma \vdash A}(\gamma), \jump{\Gamma ; (x : A) \vdash x}(\gamma, \alpha) \in \jump{\Gamma \vdash A}(\gamma).
    \end{equation*}
    
    If $A$ is not a propositional term for $\Gamma$, the statement holds since $\jump{\Gamma ; (x : A) \vdash x}(\gamma, \alpha) = \alpha$.
    If $A$ is a propositional term for $\Gamma$, then
    \begin{equation*}
      \jump{\Gamma ; (x : A) \vdash x}(\gamma, \alpha) = \floor{\gamma, \alpha}
    \end{equation*}
    holds.
    Since $\floor{\gamma, \alpha} \; \in \; \downarrow \alpha \subset \jump{\Gamma \vdash A}(\gamma)$,
    \begin{equation*}
      \jump{\Gamma ; (x : A) \vdash x}(\gamma, \alpha) \in \jump{\Gamma \vdash A}(\gamma)
    \end{equation*}
    holds.
    Hence the statement holds in this case.

  \item \ Case of Beta Equality \\
    We must show that
    \begin{eqnarray*}
      &\forall \gamma,& \jump{\Gamma \vdash t}(\gamma) \in
         \jump{\Gamma \vdash A}(\gamma), \; \jump{\Gamma \vdash
         B}(\gamma) \in \jump{\Gamma \vdash s}(\gamma) \\
      & \land & A =_\beta B \\
      & \Rightarrow & \forall \gamma, \jump{\Gamma \vdash t}(\gamma) \in \jump{\Gamma \vdash B}(\gamma).
    \end{eqnarray*}
    It is clear by Theorem~\ref{soundness} (1).
  \end{enumerate}
\end{proof}

\end{document}